\newcommand{\eqnref}[1]{Eq.\,\eqref{#1}}
\newcommand{\figref}[1]{Fig.\,\ref{#1}}
\newcommand{\tabref}[1]{Tab.\,\ref{#1}}
\newcommand{\secref}[1]{Sec.\,\ref{#1}}
\newcommand{\appref}[1]{Appendix\,\ref{#1}}
\newcommand{\ud}{\mathrm{d}}
\def\bea{\begin{eqnarray}}
\def\eea{\end{eqnarray}}
\def\be{\begin{equation}}
\def\ee{\end{equation}}
\def\beal{\begin{aligned}}
\def\eeal{\end{aligned}}
\def\bew{\begin{widetext}}
\def\eew{\end{widetext}}
\def\beit{\begin{itemize}}
\def\eeit{\end{itemize}}
\def\bea{\begin{array}}
\def\eea{\end{array}}
\def\k{\kappa}
\newtheorem{AHT}{Theorem}
\begin{document}

\begin{CJK}{UTF8}{gbsn}

\title{Deep reinforcement learning for quantum Hamiltonian engineering}

\author{Pai Peng (彭湃)}
\email[]{paipeng@mit.edu}
\affiliation{Department of Electrical Engineering and Computer Science, Massachusetts Institute of Technology, Cambridge, MA 02139}
\author{Xiaoyang Huang}
\thanks{P.P. and X. H contributed equally to this work.}
\affiliation{
Research Laboratory of Electronics, Massachusetts Institute of Technology, Cambridge, Massachusetts 02139, USA
}
\author{Chao Yin}
\affiliation{
Research Laboratory of Electronics, Massachusetts Institute of Technology, Cambridge, Massachusetts 02139, USA
}
\author{Linta Joseph}
\affiliation{Department of Physics and Astronomy, Dartmouth College, Hanover, NH 03755, USA}
\author{Chandrasekhar Ramanathan}
\affiliation{Department of Physics and Astronomy, Dartmouth College, Hanover, NH 03755, USA}
\author{Paola Cappellaro}\email[]{pcappell@mit.edu}
\affiliation{Department of Nuclear Science and Engineering, Massachusetts Institute of Technology, Cambridge, MA 02139}
\affiliation{
Research Laboratory of Electronics, Massachusetts Institute of Technology, Cambridge, Massachusetts 02139, USA
}

\date{\today}

\begin{abstract}
Engineering desired Hamiltonian in quantum many-body systems is essential for applications such as quantum simulation, computation and sensing. 
Conventional quantum Hamiltonian engineering sequences are designed using human intuition based on perturbation theory, which may not describe the optimal solution and is unable to accommodate complex experimental imperfections. 
Here we  numerically search for Hamiltonian engineering sequences using deep reinforcement learning (DRL) techniques and experimentally demonstrate that they outperform celebrated sequences on a solid-state nuclear magnetic resonance quantum simulator. As an example, we aim at decoupling  strongly-interacting spin-1/2 systems. We train DRL agents in the presence of different experimental imperfections and verify robustness of the output sequences both in simulations and experiments. Surprisingly, many of the learned sequences exhibit a common pattern that had not been discovered before, to our knowledge, but has an meaningful  analytical description. We can thus  restrict the searching space based on this control pattern, allowing to search for  longer sequences,  ultimately leading to sequences that are robust against 
dominant imperfections in our experiments. Our results not only demonstrate a general method for quantum Hamiltonian engineering, but also highlight the importance of combining black-box artificial intelligence with understanding of physical system in order to realize experimentally feasible applications.
\end{abstract}

\maketitle
\end{CJK}
\section{Introduction}
Controllable quantum many-body systems hold great promise  not only to expand our understanding of fundamental physics, such as information scrambling and non-equilibrium phases of matter, but also to yield revolutionary technologies in computation, simulation and sensing. 
A core task of quantum control is to combine elementary control units to engineer  desired quantum Hamiltonians. Although it is relatively easy to derive the (approximate)  Hamiltonian resulting from a given control sequence, the inverse problem of designing optimal control sequences for a target Hamiltonian is highly challenging. The problem was tackled in Nuclear Magnetic Resonance (NMR) through the development of average Hamiltonian theory (AHT)~\cite{Haeberlen68}. 
Many of the celebrated sequences in NMR are designed from intuition and experience, based on low-order expansions in AHT~\cite{Waugh68,Cory90,PhysRevLett.55.1923,RevModPhys.76.1037,Tycko90,Boutis03,Mansfield73,Rhim73,TakegoshiCPL85,Tycko99}.
Unfortunately, simply relying on  intuition makes it difficult to find  generalizations and  capture higher-order effects and control imperfections that might be crucial in experiments. 
Conventional numerical optimization methods, such as gradient ascent pulse engineering~\cite{Khaneja05} and chopped random basis~\cite{Doria11}, can yield optimal solutions, but are fundamentally limited to low-entanglement subspace if applied to a non-integrable many-body system~\cite{Lloyd14l} and are most efficient for smooth control landscape without too many local optima. Therefore, they are more suitable for optimizing individual (continuous) pulse shapes or a short composite pulse, rather than a pulse sequence containing tens of pulses. While phase and amplitude modulated continuous decoupling are amenable to  gradient ascent optimization~\cite{SAKELLARIOU2000253}, their experimental implementation has been more limited than pulsed methods. 

Recently, artificial intelligence, in particular reinforcement learning (RL) with deep neural networks, has surpassed human intelligence in many complex tasks such as Go~\cite{silver2016} and StarCraft II~\cite{Vinyals19}. As a subfield of Machine Learning (ML), RL differs from (un)-supervised learning by learning through exploration and exploitation based on the reward of the result. 
In stark contrast to conventional optimization methods, RL is a \textit{model-free} method, which only requires minimum knowledge to find the reward. This  matches closely the task of Hamiltonian engineering where human intuition into the optimal pulse sequence is limited and might  be biased. 
Deep neural networks (DNN) provide a versatile and powerful way to reparametrize a large search space. Unlike linear optimization, DNN are capable of doing both linear and non-linear mathematical manipulation to turn the input into the output (for RL and DNN, see recent reviews~\cite{sutton2018reinforcement,lecun2015}). 
In the quantum physics context, RL has been shown to provide successful strategies for quantum state preparation~\cite{Bukov18, Zhang19, Chen14, Chen19a, Albarran-Arriagada18, Mackeprang20}, quantum gate design~\cite{Niu19, Dalgaard20, Daraeizadeh20}, quantum communication~\cite{Wallnofer20}, quantum error correction~\cite{Fosel18, Nautrup19, Sweke21}, quantum state transfer~\cite{Zhang18}, and quantum sensing~\cite{Schuff20}. 
\begin{figure*}[!htbp]
\includegraphics[width=0.42\linewidth]{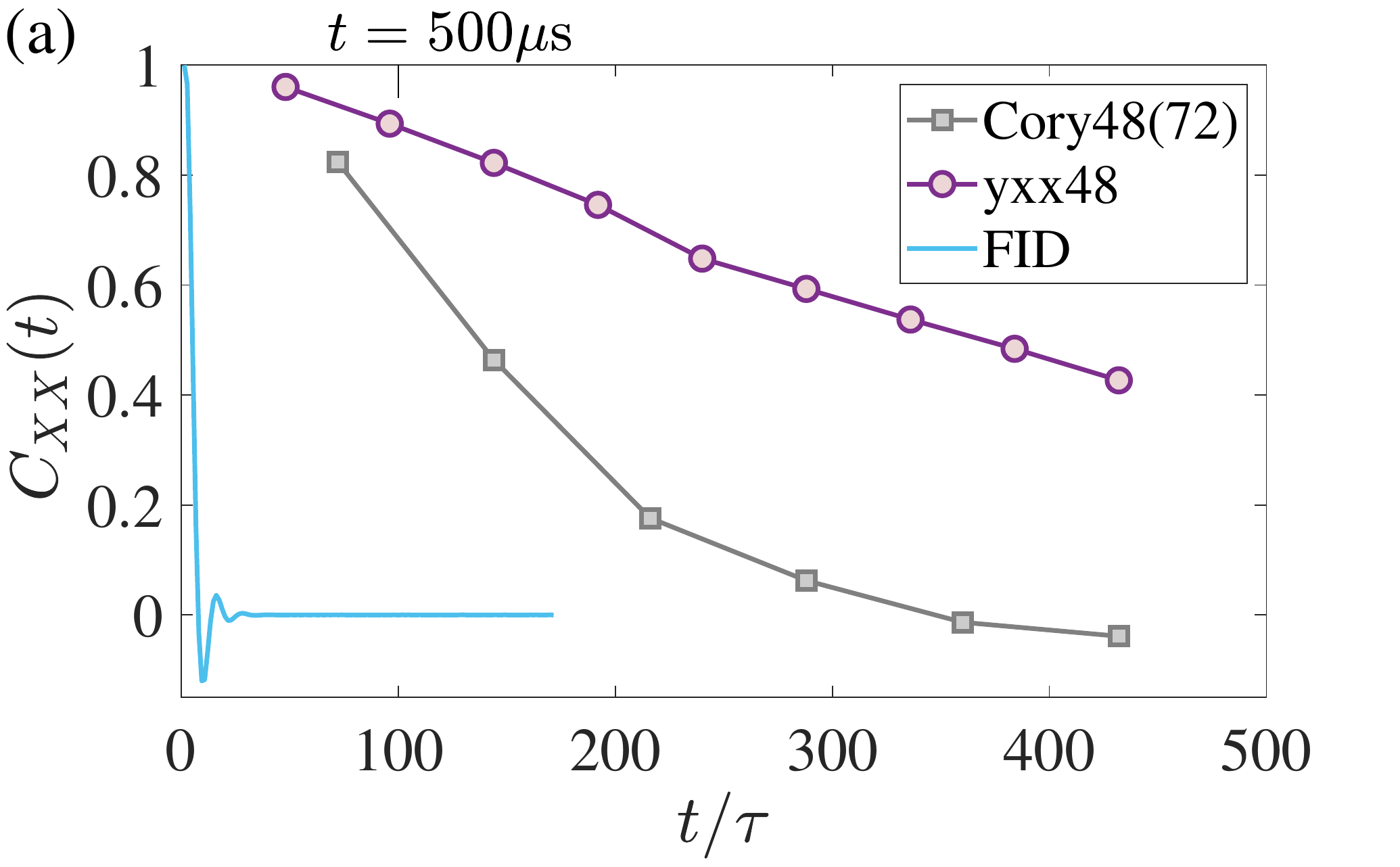}
\includegraphics[width=0.54\linewidth]{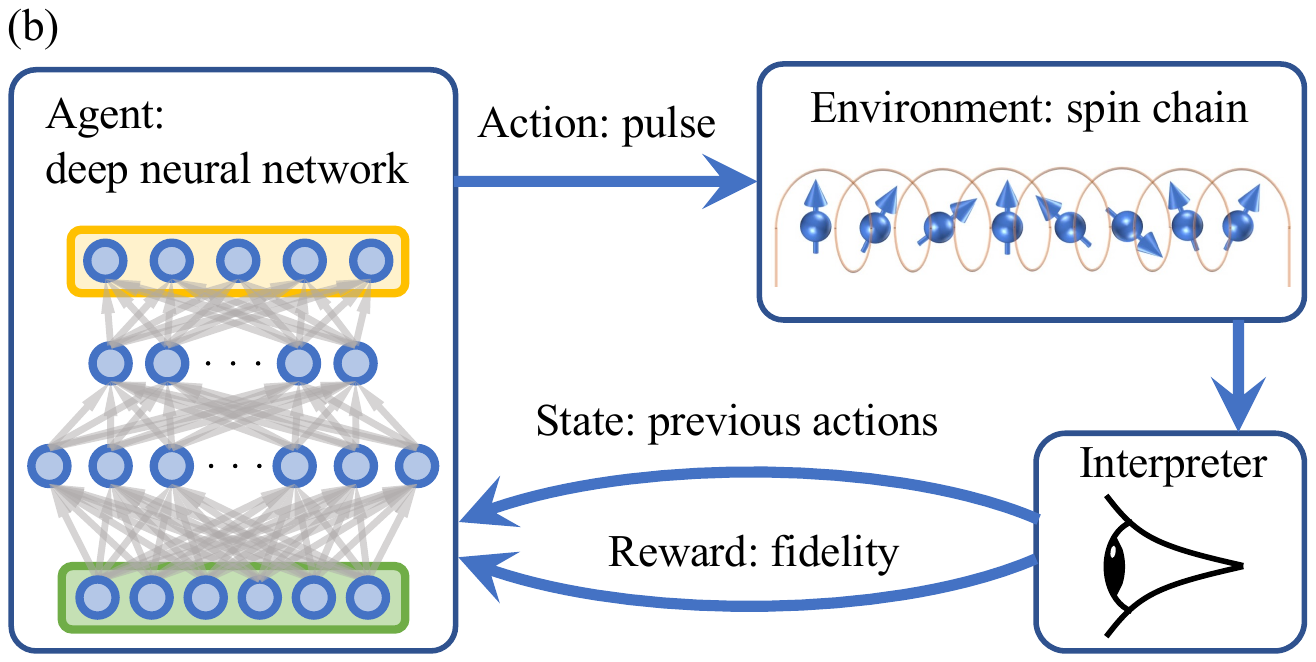}
\includegraphics[width=0.98\linewidth]{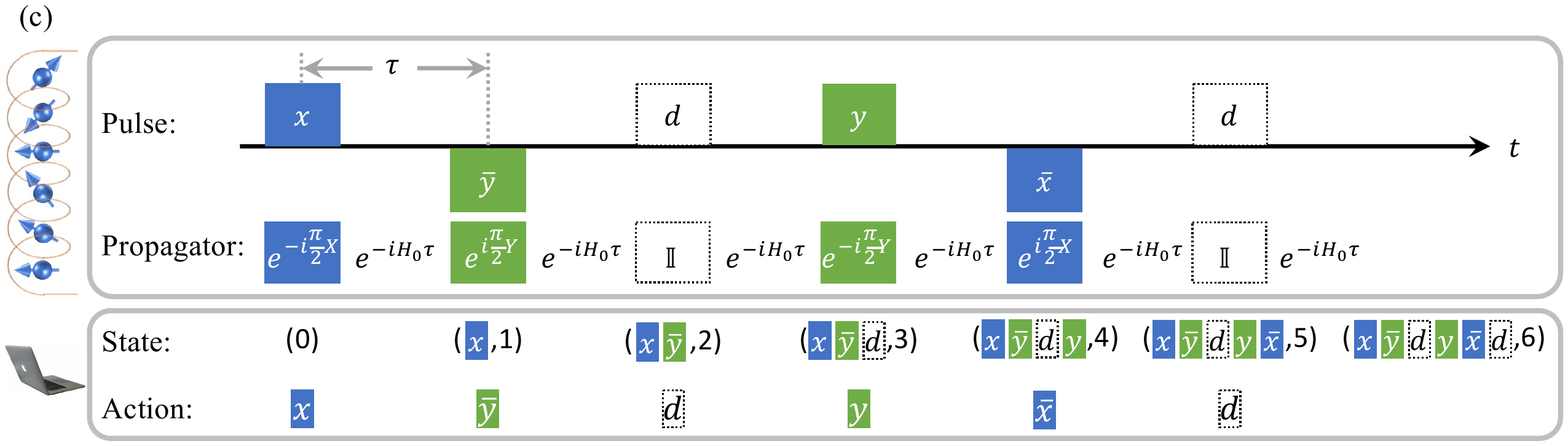}
\caption{
\label{fig:model}
(a) Decay of the x-correlation, $C_{XX}$, under free evolution (no pulses, blue curve), Cory48 pulse sequence (grey curve) and yxx48 obtained from RL (purple curve).
(b) High-level RL protocol for Hamiltonian engineering. The agent, realized as a deep neural network, takes an action based on the  current state. This action  applies the  corresponding control operation to the spin chain. The interpreter updates the state based on the chosen action, and feeds it to the agent to take next action. When the final time is reached, the interpreter calculates the reward, which is then used to optimize the agent.
(c) States and actions, and corresponding quantum operations, illustrated using the WAHUHA sequence~\cite{Waugh68}, which can be discretized into 6 times steps. $x,\bar{x},y,\bar{y}$ in the colored blocks denote $\pi/2$ pulses along $x,-x,y,-y$, respectively. $d$ in the dashed blocks denotes no pulse action (delay).
}
\end{figure*}
Although  RL has  in principle demonstrated its advantages for quantum applications via numerical studies, its practical implementation in experiments is still challenging due to non-ideal conditions arising from noise and control imperfections.

Here we apply RL with DNN [deep reinforcement learning (DRL)] to quantum Hamiltonian engineering and experimentally demonstrate its advantage  in a non-integrable system. 
We focus on the task of decoupling a spin-1/2 system with dipolar interaction (i.e. the target Hamiltonian is zero), which is directly useful for quantum memories~\cite{Ladd05}; our method can be further applied to other quantum engineering scenarios by simply replacing the reward function. 
As finding the optimal control in a non-integrable quantum many-body system is impractical~\cite{Lloyd14l}, we restrict the control space by allowing the machine learner to choose one of five actions at a time  (no pulse or a $\pi/2$ pulse along one of 4 axes), with a fixed delay time between actions, until the maximum time is reached. 
We then numerically calculate the unitary propagator of the resulting pulse sequence, and use the fidelity with respect to target propagator as the reward. 
The control is restricted to experimentally feasible operations, while still encompassing a wide range of target Hamiltonians that can exhibit integrable, ergodic, localized or prethermal behaviors~\cite{Wei18, Wei19, Peng21}.
The restriction leads to a complicated and nonconvex control landscape~\cite{Moore12c}, that would not be amenable to conventional optimization. We thus utilize DNN to reparametrize the control space and implement a state-of-the-art gradient-free method to optimize the neural networks~\cite{Such17}.

We not only apply the DRL to the idealized scenario, but also incorporate imperfections, such as pulse frequency offset, on-site disordered field, pulse angle error and finite pulse width, to mimic realistic experiments. We test the performance of the DRL pulse sequences using solid-state nuclear spin systems, and the sequences indeed show the expected robustness even in experiments. Surprisingly, although it is generally believed that symmetric sequences have better performance~\cite{Mansfield71,Haeberlen76}, many of the high-reward sequences found by DRL are not symmetric. Instead, they obey a common ``\textit{yxx} pattern'' which has not been found before to the best of our knowledge. We analytically explain the advantage of the \textit{yxx} pattern using 
AHT. Furthermore, the restriction to pulse sequences exhibiting the \textit{yxx} pattern significantly reduces the search space, thus enabling to find longer and more powerful sequences. As a result, we discover sequences that are robust against all relevant imperfections and outperform the celebrated Cory48 decoupling sequence~\cite{Cory90} in experiments [see Fig.~\ref{fig:model}(c)]. Our work demonstrates that some long-established knowledge may not be optimal for quantum Hamiltonian engineering, while pure black-box DRL is also resource consuming. It is beneficial to combine both human knowledge and artificial intelligence for practical applications.

This paper is organized as follows. In \secref{sec:RL} we explain how to model the Hamiltonian engineering task as a RL problem and our learning algorithm. \secref{sec:background} introduces our experimental system and average Hamiltonian theory. The learned sequences together with their experimental tests are presented in \secref{sec:seq}, before drawing our conclusions.

\section{Reinforcement learning}\label{sec:RL}

\subsection{Modeling}\label{sec:model}

\begin{figure*}[!htbp]
\includegraphics[width=0.9\linewidth]{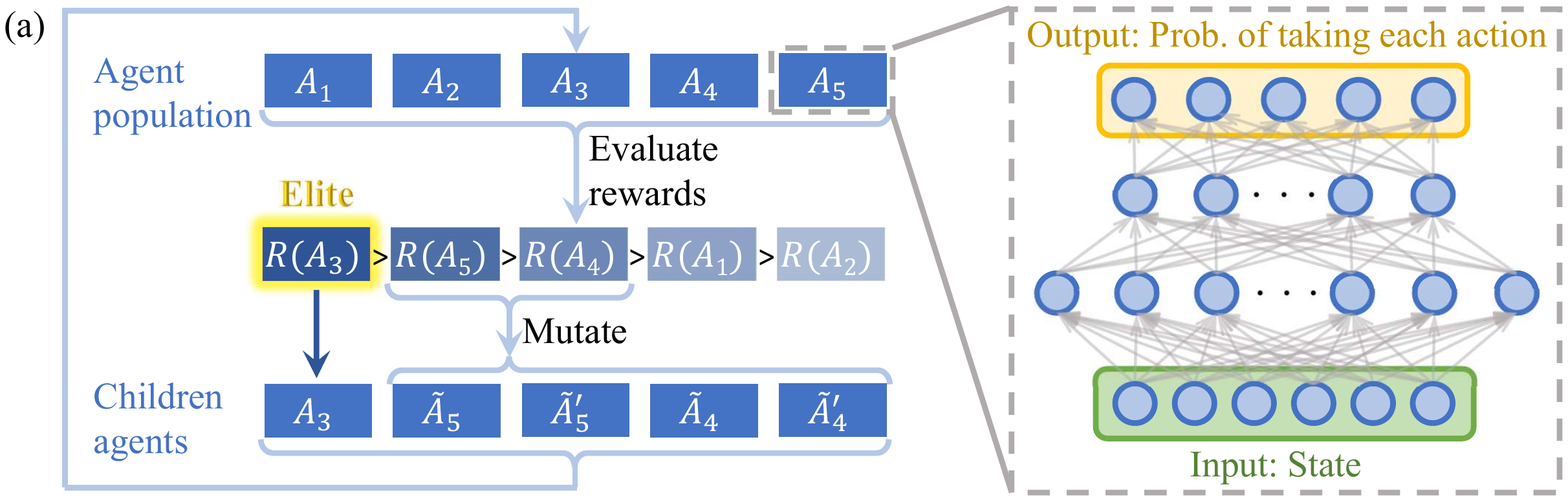}
\includegraphics[width=0.48\linewidth]{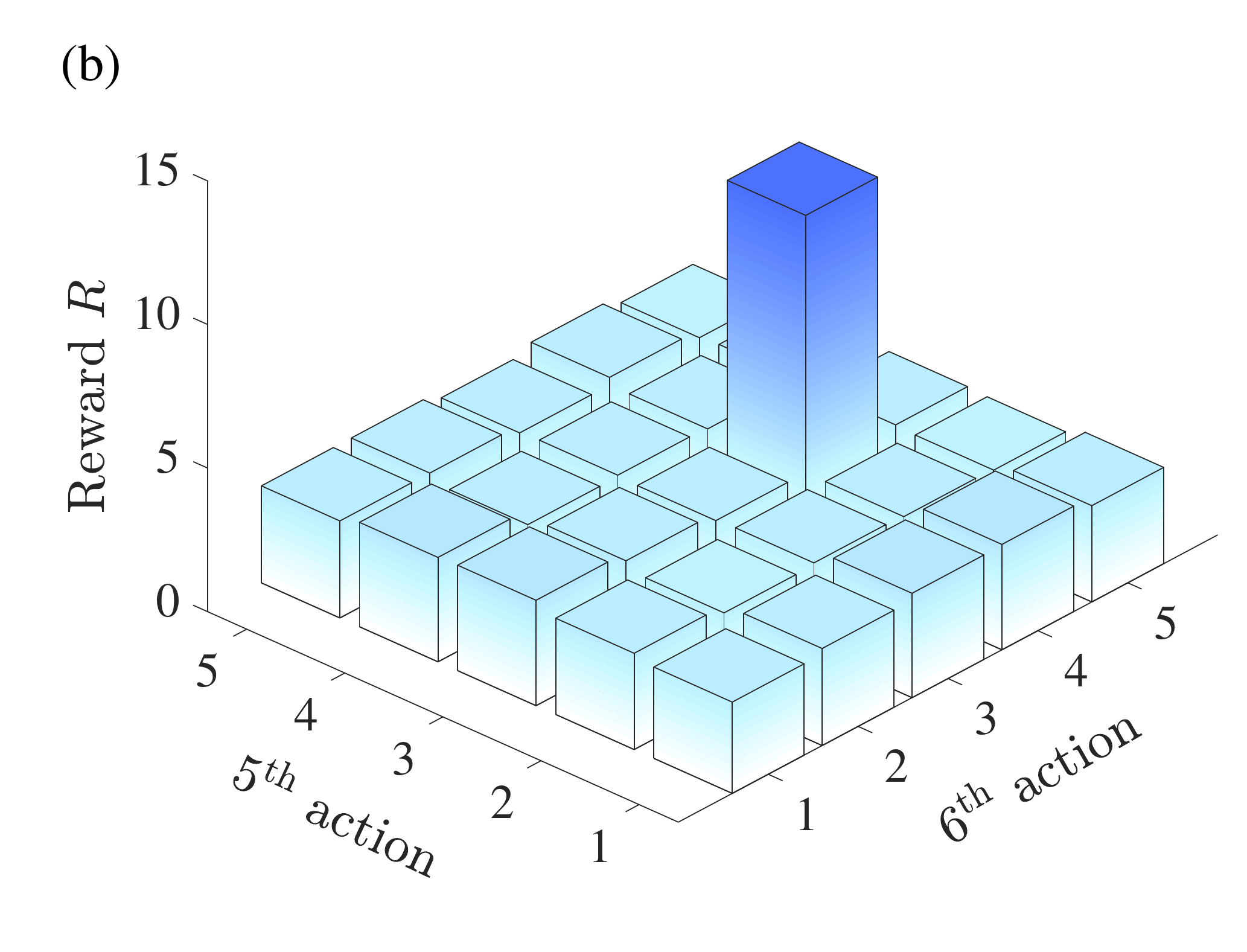}
\includegraphics[width=0.48\linewidth]{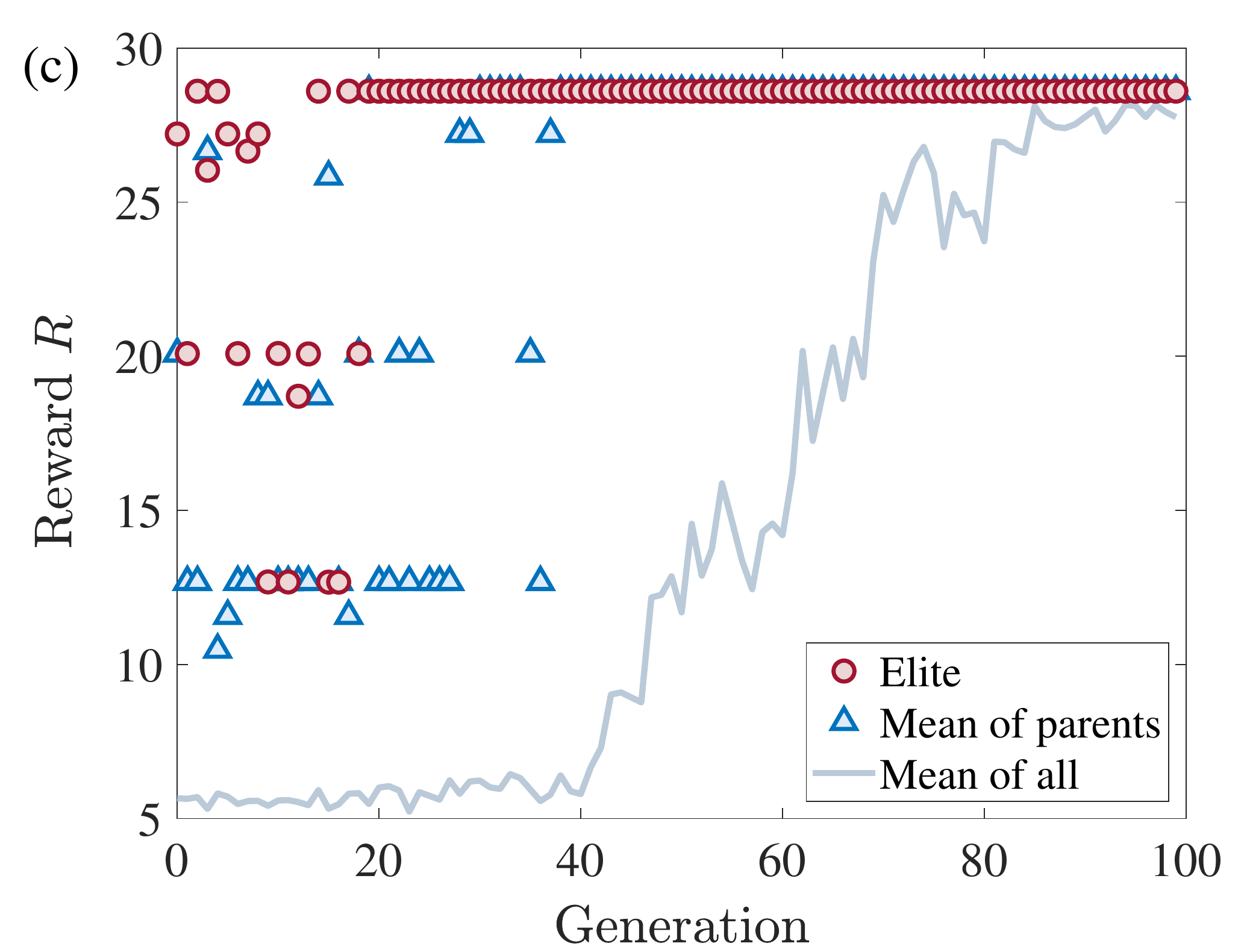}
\caption{
(a) Schematic of the RL algorithm. We keep a population of agents (in this cartoon the population size is $N_p=5$, while in actual implementation the parameters are specified in \tabref{tab:seq}). Each agent is a DNN that generates the probability of taking each action based on the current state. 
We evaluate the reward of all agents and choose the top $P$ agents as parents (here $P=3$, see  \tabref{tab:seq} for the actual implementation parameters). The top agent (Elite) is moved to the next generation without any change, while the other parents are slightly mutated to generate the next generation of agents.
(b) Illustration of the (reward) control landscape for a 6-pulse dynamical decoupling sequence in the ideal scenario. For simplicity, here we only show the case where the first four pulses are fixed to be $\{d, x, -y, d\}$, with $d$ denoting the no pulse action. x and y-axis represent the 5$^\text{th}$ and 6$^\text{th}$ action.
(c) Learning curve of DRL for 12-pulse dynamical decoupling in the ideal scenario. Here $N_p=201$, $P=21$.
}
\label{fig:algorithm}
\end{figure*}
We consider quantum Hamiltonian engineering in a spin-1/2 system where the internal Hamiltonian $H_0$ is the  secular dipolar interaction along the z-axis
\begin{equation}\label{eqn:h0}
    H_0\equiv D_z= \frac{1}{2}\sum_{j<k}^N J_{jk}\left(3S_z^j S_z^k - \vec{S}^j\!\cdot\!\vec{S}^k \right),
\end{equation}
where $\vec{S}_j=(S_x^j,S_y^j,S_z^j)^T$  are spin-1/2 operators of the \(j\)-th spin $(j=1,\cdots,N)$ and 
$J_{jk}$ the coupling strength between spins $j$ and $k$. Later we will also use $D_x$ and $D_y$ defined in the similar way. In the training process we consider  a 1D spin chain with nearest coupling only, i.e. $J_{jk}=J\delta_{j+1,k}$ ($\delta_{i,j}$ is the Kronecker delta). This is convenient because 1D systems can be efficiently simulated on a classical computer; still, experimental validations are performed in 3D systems. 
We aim at decoupling the spins, that is, the target Hamiltonian is zero. As we will show, for decoupling purpose, the dimensionality does not play a crucial role; instead the performance is mostly determined by the symmetry of the Hamiltonian.

To decouple the interaction, it is sufficient to apply global rotations to the system, which can be easily implemented in experiments. We further restrict our control to $\pi/2$ global rotations along the $x,y,-x$ or $-y$ directions. These controls are available to almost every quantum platform and are known to be sufficient for decoupling, since they are the building blocks of  many celebrated decoupling sequences such as WAHUHA~\cite{Waugh68} and Cory48~\cite{Cory90}. Experimentally, the controls cannot be applied arbitrarily fast due to  pulse ring-down and apparatus dead-time. Instead, there is usually a minimal delay time $\tau$ in between  pulses.

The spin chain system and the  control rules constitute the environment of RL. We then need to set up the RL agent that interacts with the environment. RL works by building agents that choose a (sub)optimal action at any given time based on the current state (which collects all the previous actions). The action changes the status of the environment and updates the state, and the agents are optimized based on a reward determined by the environment, as shown in Fig.~\ref{fig:model}(b). 
We first discretize the time in steps of $\tau$ -- we consider only $t=m\tau$ with $m=0,1,2\cdots$. At each discrete time $t=m\tau$, the agent chooses an action $A_m$ from a set of five possible actions (no pulse or  a $\pi/2$ pulse along the $\pm x,\pm y$ directions). The state $S(m)$ is simply defined as a list containing all previous actions and the integer $m$, $S(m)=[A_0,A_1,\cdots,A_{m-1},m]$. 
As an example, we show the state and action at each time step of the WAHUHA sequence in Fig.~\ref{fig:model}(c). The process ends when the final time $M\tau$ is reached. At this point the environment has received $M$ actions $[A_0,A_1,\cdots,A_{M-1}]$ and undergone a unitary evolution with propagator
\begin{equation}
    U({\{A_m\}})
    =e^{-iH_0\tau}U_{A_{M-1}}e^{-iH_0\tau}U_{A_{M-2}}\cdots e^{-iH_0\tau}U_{A_0},
\end{equation}
where $U_{A_m}$ describes the evolution induced by the action $A_m$. If $A_m$ is ``no pulse'', then $U_{A_m}=\mathbb{1}$; if $A_m$ is a $\pi/2$ pulse, e.g. along $x$, then $U_{A_m}=e^{-i(\pi/2)X}$ with $X=\sum_j S_x^j$ being the collective spin-$x$ operator. Later we will also use $Y=\sum_j S_y^j$ and $Z=\sum_j S_z^j$. 
To compare pulse sequences of different lengths $M$, it is useful to consider $U_\tau({\{A_m\}})=\left[U({\{A_m\}})\right]^{1/M}$ the effective propagator for $t=\tau$.
How close is the engineered propagator $U_\tau({\{A_m\}})$ to the target propagator $U_{tgt}$ is characterized by the fidelity $F(\{A_m\})=\mathrm{Tr}|U_\tau({\{A_m\}})U_{tgt}^\dagger|/{2^N}\in[0,1]$, where $N$ is total number of spins. 
For the decoupling task $U_{tgt}=\mathbb{1}$. 
Since many good pulse sequences have near unity fidelity, we define the reward function as 
$R(\{A_m\})=-\ln\left[1-F(\{A_m\})\right]$ to emphasize the small infidelity. 
The fidelity also provides a lower bound for observable correlations~\cite{Yin21,Peng21,Heyl19}, which can be directly measured in experiments~\cite{Wei19}, and it is thus a good metric to assess the sequence~\cite{Bukov18}.

Imperfections can be easily incorporated into this model by changing the free evolution propagator or the pulse propagator. In this work we consider the following imperfections that are most evident in our experiments and frequently encountered in other systems: pulse frequency offset, on-site disorder, pulse angle error and finite pulse width. 
A frequency offset exists when the pulse frequency differs from spin resonance frequency. This can be modeled by adding the offset Hamiltonian $\Delta Z$ to $H_0$, where $\Delta$ is the amount of frequency offset. 
On-site disorder describes differences in the  frequency of each spin; the  deviation from the (nominal) mean frequency is a random variable. On-site disorder can be modeled by including the disorder Hamiltonian $\sum_j w_j S_z^j$ in $H_0$, with $w_j$ uniformly distributed in $[-W,W]$. 
The disorder Hamiltonian is very similar to the frequency offset, so a sequence that is robust against frequency offset is also typically robust against disorder. 
Therefore, in the training process we consider the frequency offset only, and in the test process we verify the two are indeed closely related. 
Angle errors happen when the rotation angle due to the pulse deviates from $\pi/2$ by an amount $\epsilon$. We assume this deviation is the same for all pulses, and thus it can modeled by changing all pulse propagators in the same way. 
For example, a $x$ pulse with an angle error is described by the propagator $e^{-i(\pi/2)X(1+\epsilon)}$. 
In experiments,  pulses are not instantaneous (delta-pulses) but   have a finite width. During the pulse time $t_w$ the spins  interact with each other, yielding a   propagator  $e^{-i(\pi X/2+H_0t_w)}$ for the $x$ pulse, similarly for $y$ and $z$ pulses. Beyond what we consider above, many other imperfections can be included by just modifying the reward, as long as the imperfection can be efficiently modeled. 

We note that our optimization setup  differs from the ones commonly used in numerical pulse engineering, where the   time is discretized into tiny time steps and each action only applies a small evolution to the system~\cite{Khaneja05,Bukov18,Niu19}. Instead in our protocol, neither the free evolution time $\tau$ nor the pulse rotation angle  need to be small. The advantage of this scheme is two folds: (i) $\pi/2$ pulses are usually available and well calibrated, while  modulating the control drive over short timescale poses challenges on the hardware and is more difficult and inefficient to calibrate; (ii) by taking a larger step per action, our method is more suitable to finding long pulse sequences, while previous methods are typically used to optimize single gates. These advantages are accompanied by a  worse control landscape [see an example in \figref{fig:algorithm}(b)]. However, the deep neural network and gradient-free optimization method successfully solve this issue, as we show in the next subsection.

\subsection{Algorithm}\label{sec:algorithm}

We first explain how the agent works [see the dashed box in Fig.~\ref{fig:algorithm}(a)]. The agent is a DNN that takes the state as input and generates the next action as the output, as introduced in \secref{sec:model}. The agent DNN contains two hidden linear layers, with the number of neurons in each layer proportional to the input and output size. We use rectified linear unit (ReLU)~\cite{vinod2010} as the activation function. At each step $m$, the agent takes the state $S(m)$ as the input  and generates 5 positive numbers corresponding to the probability of taking the 5 actions. Then the action is chosen randomly according to the probabilities, and the state now become $S(m+1)$. We apply the above procedure starting from $m=0$ until the maximum step $M$ is reached, then we get a output sequence from the agent.

Here we explain how we optimize the agents. The process is illustrated in Fig.~\ref{fig:algorithm}~(a). 
We start with $N_p$ agents, and for each agent generate 3 sequences (note the selection of actions is a random process so the 3 sequences may not be the be same \footnote{We also varied the number of sequences and found it does not affect the performance of RL}), and select the highest reward among the 3 sequences as the reward  of the agent. The reward is obtained on a 3-spin system and we verify that going to larger systems does not change our results.
Sorting the population of agents by the reward function in  descending order, we apply the truncation selection to choose the top $P$ individuals as the parents. 
Among the parents, we further select the most promising parent, the so-called \textit{Elite}, from all the parents by regenerating a few sequences (typically 5) and comparing their rewards.
The Elite will be included in the children generation without any change. 
Every parent agent other than the Elite will be mutated by adding a random Gaussian noise multiplying mutation power $\mu$ to all the DNN parameters to generate $(N_p-1)/(P-1)$ children agents. 
The mutation process plays the role of ``exploration'' (search in a large space) in RL. Too much  exploration (large $\mu$) will result in excessive randomness, making the process closer to a pure random search; too little exploration instead (small $\mu$) might leave  the RL stuck into a local minimum. 
In practice, we decrease $\mu$ during the learning process (so that in the beginning we explore a large space and later we search near the good agents) following the function
\begin{equation}
    \mu(g)=\mu_0(1-g/G),
\end{equation}
where $g=1,...,G$ denotes the agent generation. In this paper, we empirically set $\mu_0=0.05$ and $G=100$. We repeat the process until the maximum number of generations $G$ is reached.

With the truncation selection and mutation, we are able to balance  \textit{exploration} (search in a large space) and  \textit{exploitation} (focus on the promising area). One example of the learning curve of DRL is shown in \figref{fig:algorithm}(c). After 20 generations, the Elite DNN starts showing a convergent reward towards the optimal one (global minimum for this case); after 40 generations, all the parent agents (mean of parents) begin to converge; the entire agent population converges after around 90 generation. 
When approaching the end of learning, $\mu$ becomes small,  meaning little exploration but great exploitation. With little random noise, children agents are able to reproduce the optimal reward consistently, indicating the convergence of the algorithm.

\section{Experimental and theoretical background}
\label{sec:background}
\subsection{Experimental system}
\label{sec:system}
We use a solid-state NMR quantum simulator to experimentally test the performance of RL pulse sequences in realistic conditions.
Most of the experimental results presented in this work are obtained from a single crystal of CaF$_2$, where the $^{19}$F nuclear spins-1/2 form simple cubic structure. 
The sample is placed in a strong magnetic field (7~T) at room temperature. The nuclear spins interact via the secular dipolar interaction 
$D_z=\frac{1}{2}\sum_{j<k}^N J_{jk}\left(3S_z^j S_z^k - \vec{S}^j\!\cdot\!\vec{S}^k \right)$ with $J_{jk}=\hbar\gamma_F^2\frac{3\cos(\theta_{jk})^2-1}{|\vec r_{jk}|^3}$, where $\gamma_{F}$ is the gyromagnetic ratio of $^{19}$F nuclei, $\vec{r}_{jk}$ is the displacement between spins $j$ and $k$, $\theta_{jk}$ is the angle between $\vec r_{jk}$ and the magnetic field (aligned with the $z$-axis). The maximum possible $J_{jk}$ is 65.8~krad/s for CaF$_2$~\cite{SM-RL}. 
The relaxation time $T_1$ of our sample is $T_1\approx 14$~s, much longer than the time scale we explore here. The collective spin rotations are realized by on-resonance RF pulses with a $t_w=1.02~\mu$s $\pi/2$ pulse width. 
We can also artificially introduce and tune errors in addition to intrinsic imperfections. We introduce angle error by 
setting the pulse width to $1.02(1+\epsilon)~\mu$s. We can also use off-resonance pulses to introduce a frequency offset.

\begin{table*}[bht]
\begin{tabular}{|c|c|c|}
\hline
\textbf{Name} & \textbf{Sequence} & \textbf{Training parameter}\\ \hline Ideal6 & y, x, x, y, -x, -x & $\Delta=0, \epsilon=0, t_w=0, N_p=201, P=11$\\ \hline
Offset48 & \begin{tabular}[c]{@{}c@{}}$-y, -y, -y, -x, -y, -y, -y, -y, -x, -x, -x, -y,$\\ $-x, -x, -y, -y, -y, -x, -x, -x, -y, -y, -y, -y,$\\ $-x, -x, -y, -x, -x, -x, -x, -x, -x, -y, -y, -y,$\\ $-x, -x, -x, -x, -x, -x, -y, -x, -x, -x, -y, -x$\end{tabular} & $\Delta=0,\pm3J,\pm5J, \epsilon=0, t_w=0, N_p=3001, P=31$ \\ \hline
Angle12 & $-y, x, -x, y, -x, -x, -y, x, -x, y, x, x$  & \begin{tabular}[c]{@{}c@{}}$\Delta=0, \epsilon=0.05, t_w=0$\\ and $\Delta=0, \epsilon=0, t_w=0.1\tau, N_p=801, P=21$\end{tabular} \\ \hline
PW12 & $-y, -x, -x, x, x, y, -x, -x, -y, y, x, x$ & $\Delta=0, \epsilon=0, t_w=0.1\tau, N_p=801, P=21$\\ \hhline{|=|=|=|}
yxx48 & \begin{tabular}[c]{@{}c@{}} $y, -x, -x, y, -x, -x, -y, x, x, y, -x, -x,$\\ $-y, x, x, -y, x, x, y, -x, -x, y, -x, -x,$ \\$-y, x, x, y, -x, -x, -y, x, x, -y, x, x,$ \\$y, -x, -x, -y, x, x, y, -x, -x, -y, x, x$\end{tabular} & \begin{tabular}[c]{@{}c@{}}$\Delta=J, \epsilon=0.05, t_w=0, N_p=801, P=21,$\\ with \textit{yxx} restriction \end{tabular}\\ \hline
yxx24 & \begin{tabular}[c]{@{}c@{}}$-y, x, -x, y, -x, -x, y, -x, x, -y, x, x,$\\$y, -x, x, -y, x, x, -y, x, -x, y, -x, -x $ \end{tabular} & Built from Angle12 \\ \hline 
\end{tabular}
\caption{\label{tab:seq}
Representative DRL pulse sequences under different training conditions. Angle12 appears in two training conditions. yxx24 is build from Angle12 using AHT analysis (see~\appref{appsec:yxx24intuition}).}
\end{table*}

At room temperature and in a strong magnetic field along the z axis, the initial state of an ensemble of $^{19}$F nuclear spins is described by the density matrix $\rho(0)\!\approx\!(\mathbb{1}\!-\!\epsilon' Z)/2^N$, with $N$ being the number of spins and $\epsilon'\! \sim\! 10^{-5}$. The identity part of the density matrix does not contribute to the NMR signal, so we only care about the deviation from it, $\delta\rho=4Z/N$, which has been normalized such that $\text{Tr}(\delta\rho Z)/2^{N}=1$. 
NMR experiments measure the collective magnetization along the $x$ axis, i.e. the signal is $\mathrm{Tr}(\delta\rho(t)X)/2^{N}$. 
If we regard the density matrix $\delta\rho$ as an observable, this signal is mathematically equivalent to an infinite-temperature correlation $\mathrm{Tr}(\delta\rho(t)X)/2^{N}\equiv\langle\delta\rho(t)X\rangle_{\beta=0}$. 
Using collective RF pulses, we can rotate the initial state and the observable to be $X,Y$ or $Z$. Therefore, we can measure the three autocorrelations $C_{XX}(t)=4\langle X(t)X\rangle_{\beta=0}/N$ and $C_{YY}$(t), $C_{ZZ}(t)$ defined in a similar way. 
Although in principle to get the propagator fidelity we have to measure autocorrelations of all observables, in the Supplementary Material we show that the geometric average of these three autocorrelations, $C_{avg}\equiv(C_{XX}C_{YY}C_{ZZ})^{1/3}$ already approximates the behavior of the propagator fidelity.

To   experimentally investigate  on-site disorder [Fig.~\ref{fig:yxx_t}(a)], we work with $^{19}$F nuclear spins in fluorapatite (FAp)~\cite{VanderLugt64}. The $^{31}$P nuclear spins-1/2 in the crystal are randomly polarized, giving rise to a disorder Hamiltonian $H_{dis}=\sum_j h_j S_z^j$, with $h_j$ being a random variable representing the disordered field at $j^\mathrm{th}$ $^{19}$F nucleus. Interaction between $^{19}$F nuclear spins is also given by the secular dipolar interaction as in CaF$_2$ but with a lower maximum possible strength $32.7$~krad/s. The $^{19}$F nuclei form a quasi-1D structure, as the interaction along the z-direction is $\sim40$ times stronger than along the other two directions. Although the quasi-1D nature is not important in the context of this work, it is useful for quantum simulation~\cite{Wei18,Wei19,Peng21,Yin21}. The relaxation time for the FAp crystal is $T_1\approx 0.8$s, shorter than for the CaF$_2$ sample, but still much longer than the duration of a single experiment.

\subsection{Average Hamiltonian theory}

AHT~\cite{Haeberlen68} is useful in understanding the performance of different pulse sequences, so we briefly review it here.  A quantum system under a time-dependent control can be generally described by the Hamiltonian $H(t)=H_0+H_c(t)$, with $H_0$ the intrinsic Hamiltonian and $H_c$ the control Hamiltonian. For the pulsed control case, $H_c(t)$ is piece-wise constant and nonzero only within the pulse width. 
AHT starts by defining the \textit{toggling frame}, an interacting frame that rotates with $H_c$, i.e., $U_{c}(t)=\mathcal{T}[e^{-i\int_0^t H_c(t') dt'}]$, where $\mathcal{T}$ is the time-ordered operator. In the toggling frame the Hamiltonian is $H_{tog}(t)=U_{c}^\dagger H_0 U_{c}$. 
At $t=0$ the toggling frame coincides with the lab frame. If after the pulse sequence the toggling frame rotates back to the lab frame (as it is the case for all decoupling sequences), then the toggling frame propagator $U_{tog}(M\tau)= \mathcal{T}[e^{-i\int_0^{N\tau} H_{tog}(t') dt'}]$ coincides with the lab frame propagator $U(\{A_M\})$. 
Although the toggling frame Hamiltonian is still time-dependent, it does not contain strong pulses and can be  effectively approximated by a  time-independent local Hamiltonian (the average Hamiltonian) $H_A$ satisfying $U_{tog}(N\tau)=e^{-iH_A N\tau}$ \footnote{The Floquet-Magnus expansion does not converge in a many-body quantum system, and thus one has to truncate the series in Eq.~\ref{eq:magnus} and leave a small time-dependent and/or non-local Hamiltonian~\cite{Abanin15}. Effects of the truncation and the small time-dependent Hamiltonian are only evident at very long time scale, therefore ignored in this paper.}.
$H_A$ can be found perturbatively using the Floquet-Magnus expansion~\cite{Magnus54,Blanes09} 
\begin{equation}
\label{eq:magnus}
\begin{aligned}
    H_A=&\frac{1}{M\tau}\int_0^{M\tau} H_{tog}(t) dt\\ &-\frac{i}{2M\tau}\int_0^{M\tau} dt_1\int_0^{t_1}dt_2 [H_{tog}(t_1),H_{tog}(t_2)]\\
    &+O[(M\tau)^2],
\end{aligned}
\end{equation}
where the right-hand side of the first line is the zeroth-order average Hamiltonian as it scales as $(M\tau)^0$ and the second line is the first-order average Hamiltonian. If the average Hamiltonian is zero to certain order in 1D, it remains zero for higher dimensions. For example, the zeroth-order average Hamiltonian for the WAHUHA sequence is $2(D_x+D_y+D_z)=0$ regardless of dimensionality. The dimensionality can affect the fidelity by changing the magnitude of the leading nonzero higher-order Hamiltonian.

\section{RL pulse sequences}
\label{sec:seq}
We apply  DRL to different scenarios and generate various pulse sequences. Some representative ones are shown in Table~\ref{tab:seq}. We first tackle the case where we introduce only one imperfection at a time, and we later consider the case where several imperfections are present. 
\subsection{Single imperfections}
\begin{figure*}[!htbp]
\centering
\includegraphics[width=0.32\textwidth]{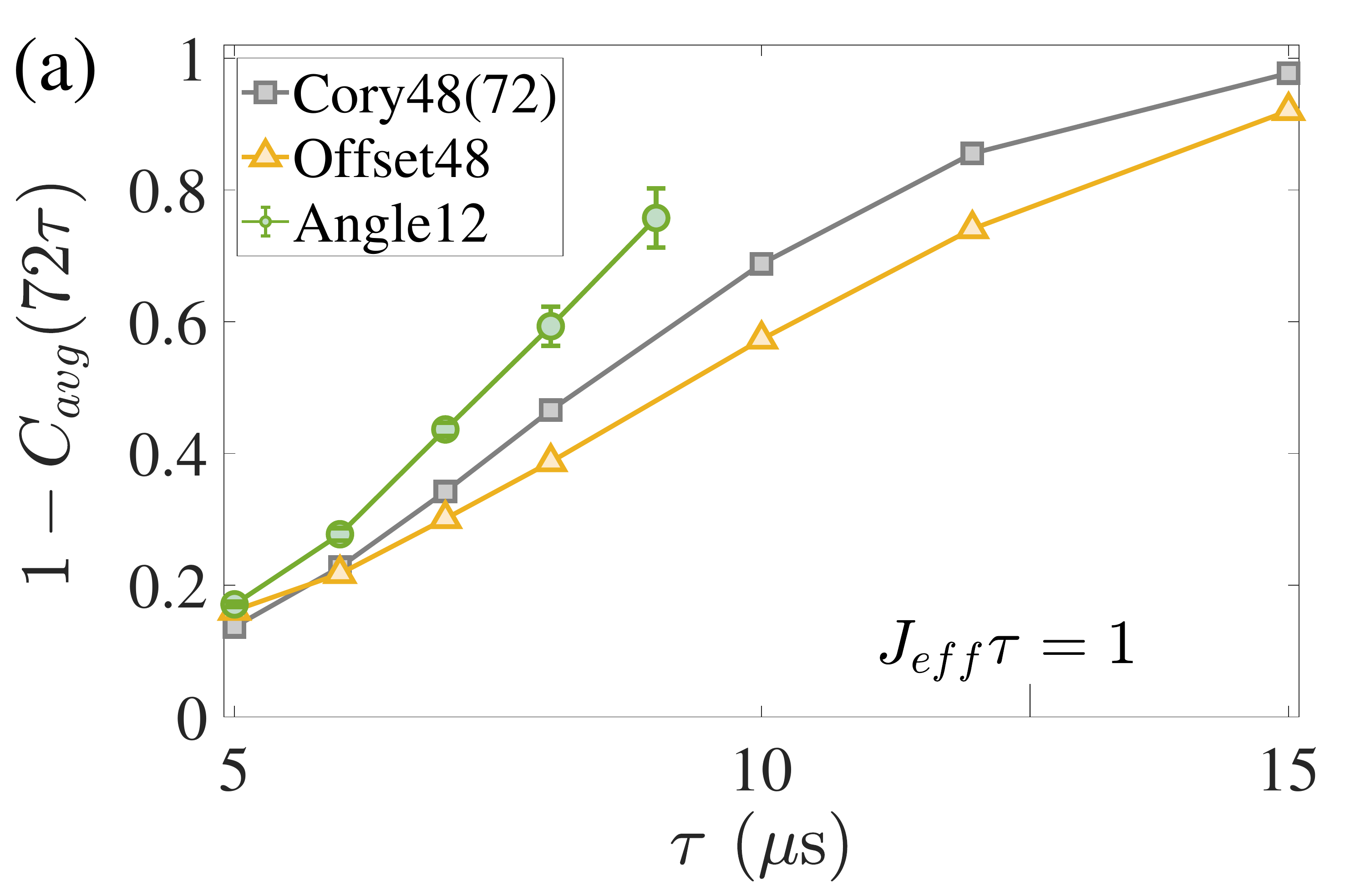}
\includegraphics[width=0.32\textwidth]{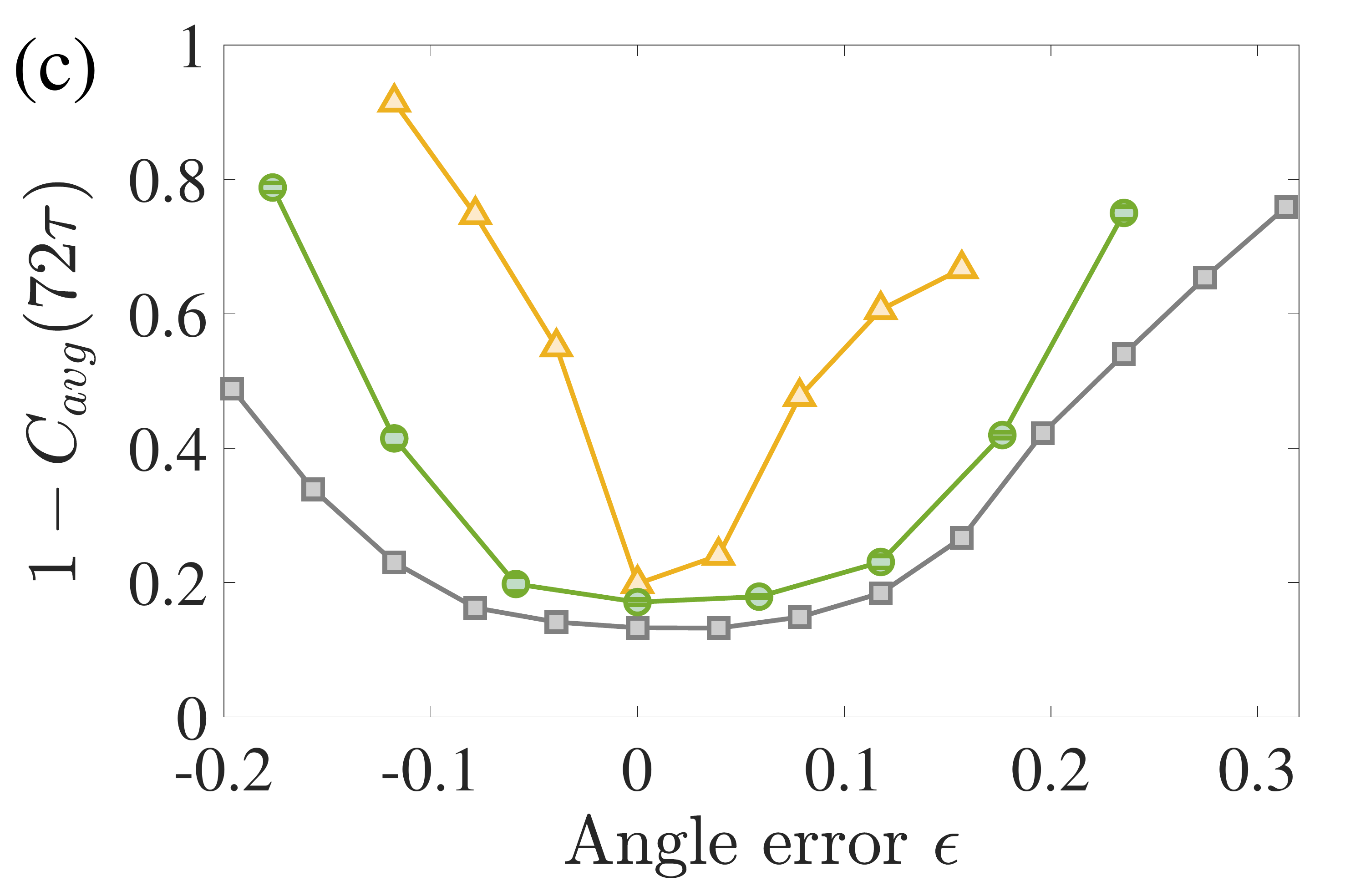}
\includegraphics[width=0.32\textwidth]{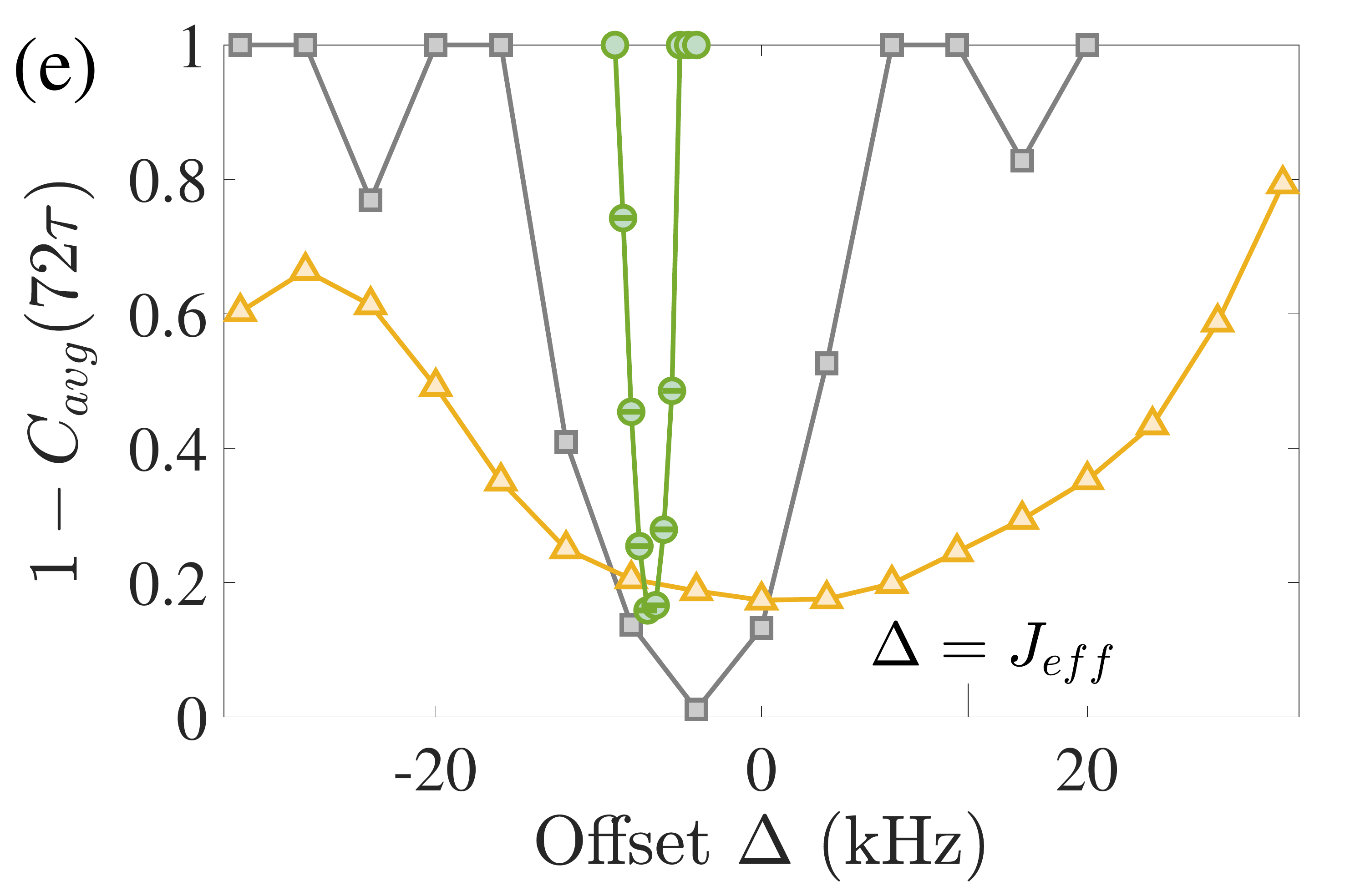}
\includegraphics[width=0.32\textwidth]{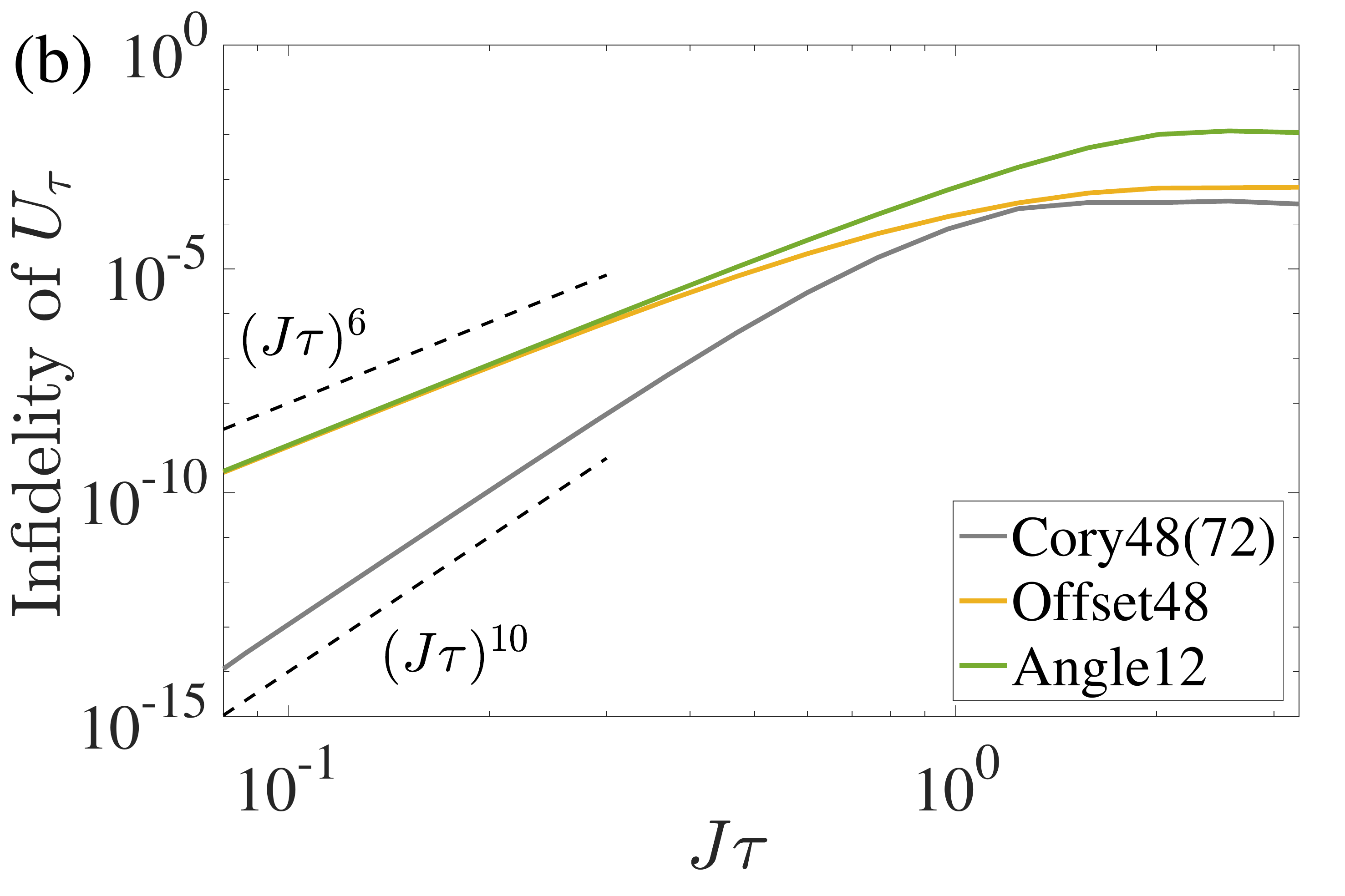}
\includegraphics[width=0.32\textwidth]{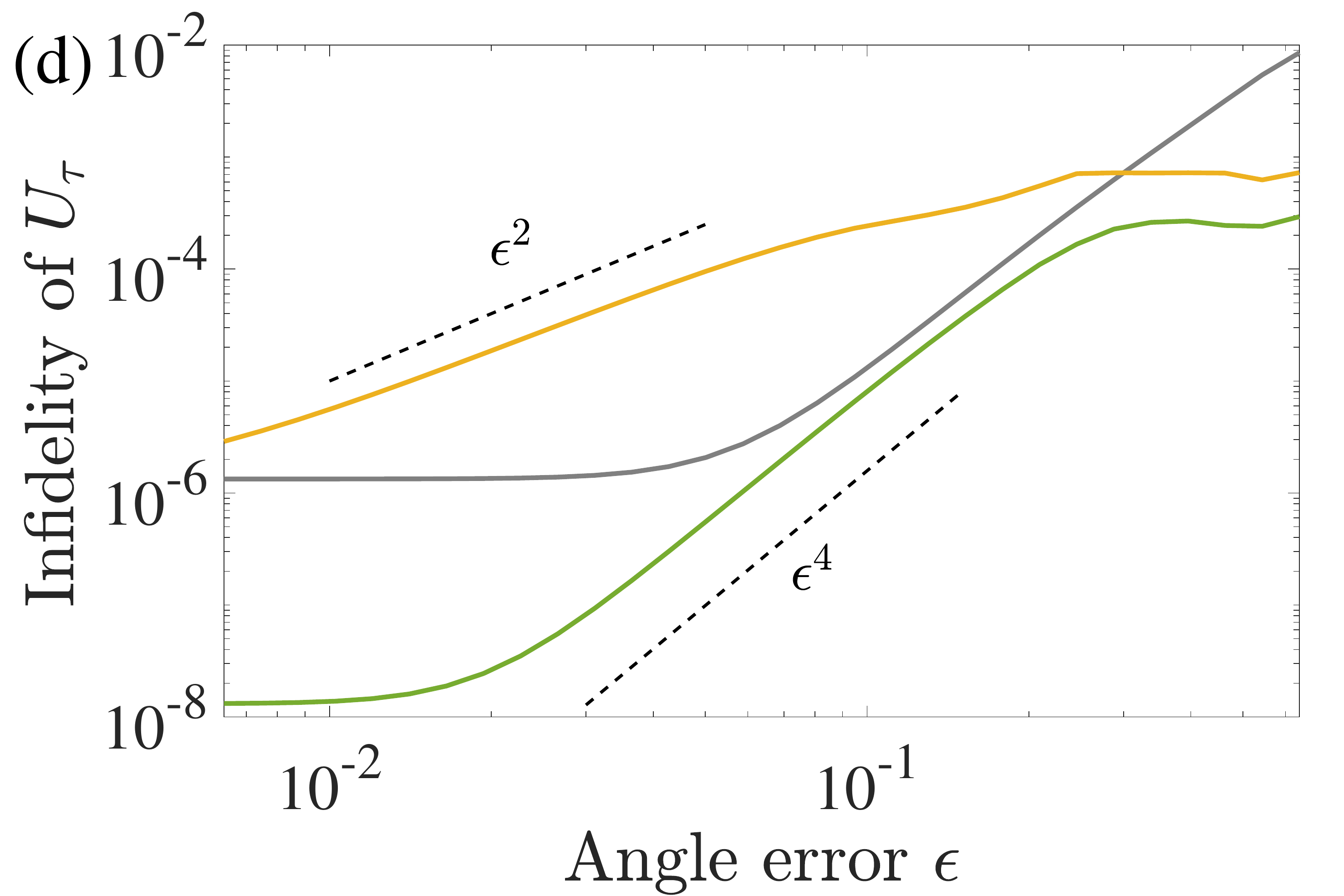}
\includegraphics[width=0.32\textwidth]{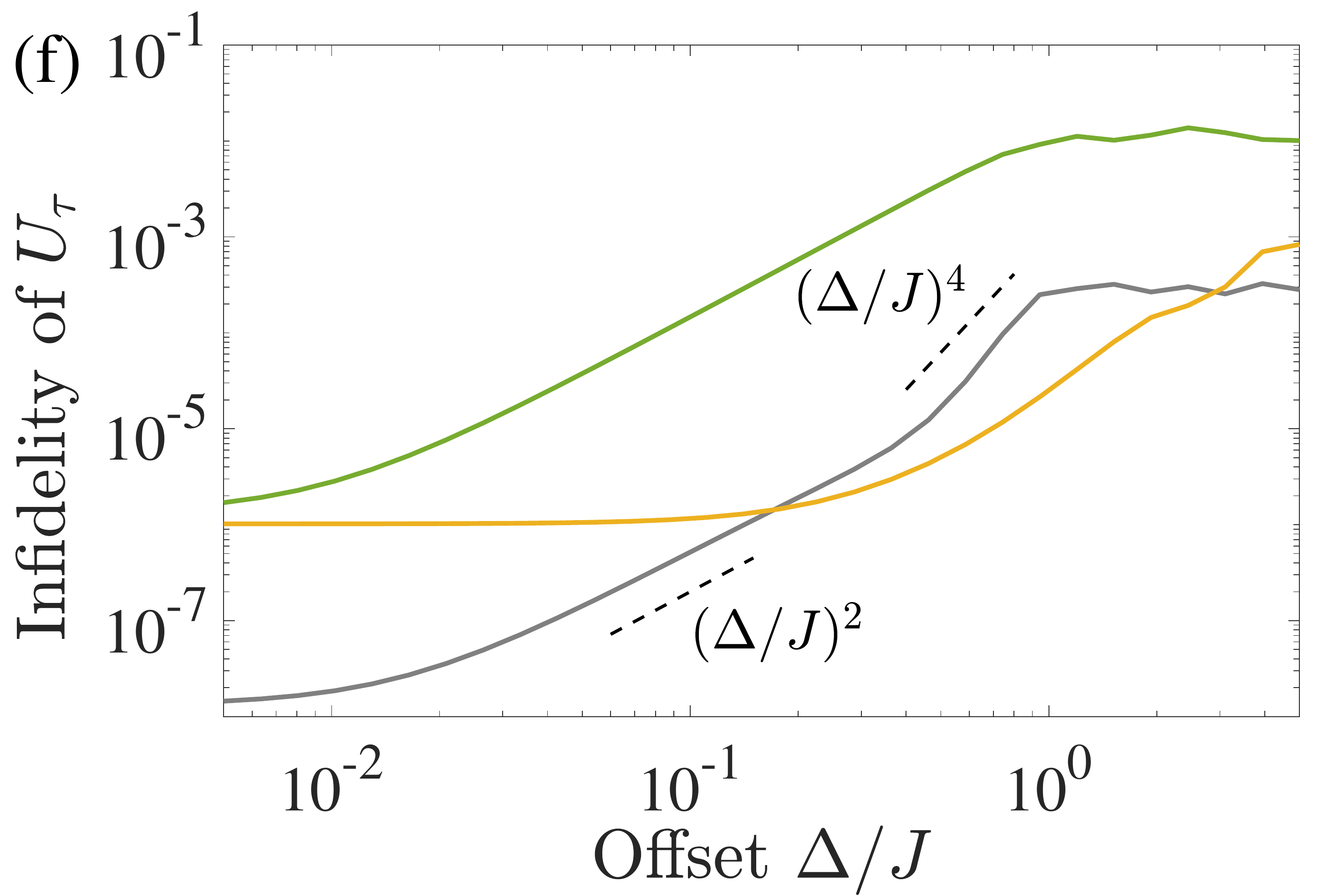}
\caption{\label{fig:RLseq}
Experimentally measured average correlation at $t=72\tau$ on CaF$_2$ (top panels) and numerically simulated propagator infidelity $1-F$ (bottom panels) of DRL sequences and Cory48 for different $\tau$ (a-b), angle error (c-d) and frequency offset (e-f). Dashed lines in (b) (d) and (f) show the scalings specified by nearby expression. We set $C_{avg}=0$ if any of $C_{XX}, C_{YY}, C_{ZZ}$ is smaller than zero. As 72 is not a multiple of 48, $C_{avg}(72\tau)$ of Offset48 is obtained as $[C_{avg}(48\tau)+C_{avg}(96\tau)]/2$. Imperfections are set to zero unless specified by the horizontal axis, with the exception of Angle12 experimental data in (a) and (c), which are taken at the optimal non-zero $\Delta$ due to the presence of phase transient (see \appref{sec:SMpt} for details). 
In (c) and (e), the pulse center-to-center delay is $\tau=5~\mu$s. 
$J_{\textrm{eff}}=79.7$~krad/s for the orientation of our sample~\cite{SM-RL}.
Error bars of Cory48 and Offset48 experimental data are determined from the noise in the free induction decay, which is smaller than the marker size thus not shown. Angle12 has larger error bars in (a) and (c) due to the inaccuracy in finding the optimal $\Delta$. 
In simulations we assume the pulse width is infinitesimal. We use $J=32.7$~krad/s as in FAp, $N=8$, periodic boundary condition and assume nearest-neighbor interactions. In (d) and (f) $\tau=10~\mu$s.
}
\end{figure*}

Control sequences that tackle zero or single sources of imperfections, 
Ideal6, Offset48, Angle12 and PW12, are directly generated by RL without any human input, trained with no error, offset, angle error and finite pulse width, respectively. In the training process, we start with a small $M$ (short sequence) and increase $M$ until we find a high-reward sequence. 
As can be seen from Table~\ref{tab:seq}, all the good sequence lengths are multiple of 6~\cite{SM-RL}. This can be understood via AHT: to cancel the zeroth-order average interaction Hamiltonian and rotate back the toggling frame with the allowed operations, the sequence length must be a multiple of 6. Moreover, when the dipolar interaction dominates, the machine learns to cancel the zeroth-order interaction Hamiltonian as quick as possible, i.e. in Idea6, Angle12 and PW12 the toggling frame Hamiltonian averages to zero every $3\tau$. This coincides with the discovery in Ref.~\cite{Burum79}. The Offset48 sequence, on the other hand, is trained under strong offset, thus does not obey this rule.
We also notice that the no-pulse action is never chosen by DRL, in contrast to celebrated decoupling sequences such as Cory48 and WAHUHA. Although the no pulse action is useful for some applications that requires a long time window between pulses, such as pulsed gradient generation and stroboscopic detection, it is not advantageous for decoupling.
This can also be understood from AHT: it is advantageous to apply pulses as frequently as possible so that higher orders in the average Hamiltonian are suppressed. Previously several attempts were made following this intuition,   adding additional pulses in some of the $2\tau$ window in WAHUHA-like sequences, but they did not lead to better performance~\cite{Burum79,Burum81}. Here we find RL discovers a completely different pattern that applies pulses as frequently as possible, and outperforms celebrated sequences (see next subsection)~\cite{Mansfield71,Haeberlen76}. 
In contrast, one of the most common strategies in conventional sequence design is to first come up with a sequence whose zeroth-order average Hamiltonian is the target Hamiltonian, and then symmetrize the sequence to cancel all odd orders. 
Symmetrization is achieved by following the original sequence by  the same sequence  but in reversed order, and with a $\pi$ phase shift. 
For example, the sequence $x,y,-x,-y$ is symmetrized to $x,y,-x,-y,y,x,-y,-x$. 
Strikingly, many DRL sequences are not symmetric, e.g., Ideal6, Offset48. This suggests that symmetrization is not optimal in some scenarios,  also noted in \cite{Burum79}. 

Figure~\ref{fig:RLseq} shows the experimentally measured average correlation at $t=72\tau$ and numerically simulated infidelity $1-F$ of Angle12 and Offset48, in comparison with Cory48. Note that although Cory48 contains only 48 pulses, its length is $72\tau$ because it also contains 24 no-pulse actions. Therefore, we explicitly denote it as Cory48(72). The experiments are done with CaF$_2$. As our experimental apparatus is not ideal and does not allow varying pulse width over a large range, we cannot provide experimental tests of Ideal6 and PW12, but we provide numerical results in~\cite{SM-RL}. 
$C_{avg}(72\tau)$ of Offset48 is not directly measurable because 72 is not a multiple of 48. Instead, we plot the average of $C_{avg}(48\tau)$ and $C_{avg}(96\tau)$, which is a good approximation of $C_{avg}(72\tau)$ as shown in~\cite{SM-RL}. 
Fig.~\ref{fig:RLseq}(b) shows that the fidelity of both Angle12 and Offset48 have a worse scaling compared with Cory48, because Cory48 cancels the average interaction Hamiltonian to higher order. However, this higher-order effect is not evident in experiments [Fig.~\ref{fig:RLseq}(a)] due to experimental imperfections dominating. 
Although Angle12 is 6 times shorter than Cory48, it shows the same scaling with angle error [Fig.~\ref{fig:RLseq}(d)] and similar robustness in experiments [Fig.~\ref{fig:RLseq}(c)]. As for the offset, the 
scaling of Offset48 is the same as Cory48 in the small $\Delta$ region [Fig.~\ref{fig:RLseq}(f)]. However, when the offset becomes larger, Offset48 outperforms Cory48, as shown both experimentally and numerically [Fig.~\ref{fig:RLseq}(e) and (f)]. This phenomenon is beyond AHT and intrinsically nonperturbative. Not surprisingly, Angle12 is not robust against offset, nor is Offset48 against angle error.

\subsection{Multiple imperfections and \textit{yxx} pattern}
\begin{figure}[!htbp]
\centering
\includegraphics[width=0.98\columnwidth]{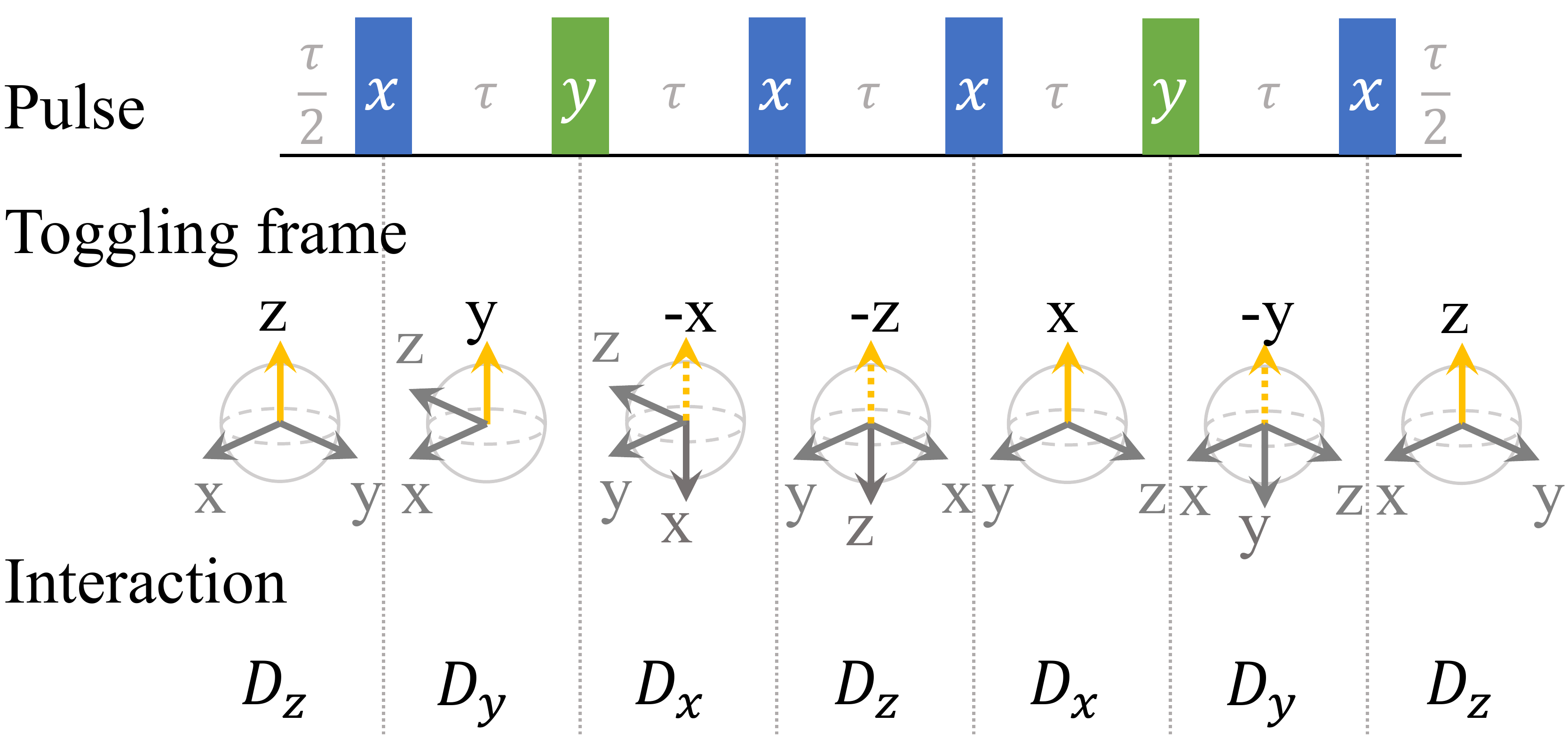}
\caption{
\label{fig:yxx}
yxx-type sequence. Top:  the pulse sequence. Middle: toggling frame transformation at each time. Arrows mark the orientation of toggling frame axis in the lab frame, where the yellow arrow highlights the axis overlapping with lab frame z-axis. Bottom:   the dipolar interaction in the  toggling frame at each time.
}
\end{figure}

\begin{figure*}[!htbp]
\centering
\includegraphics[width=0.32\textwidth]{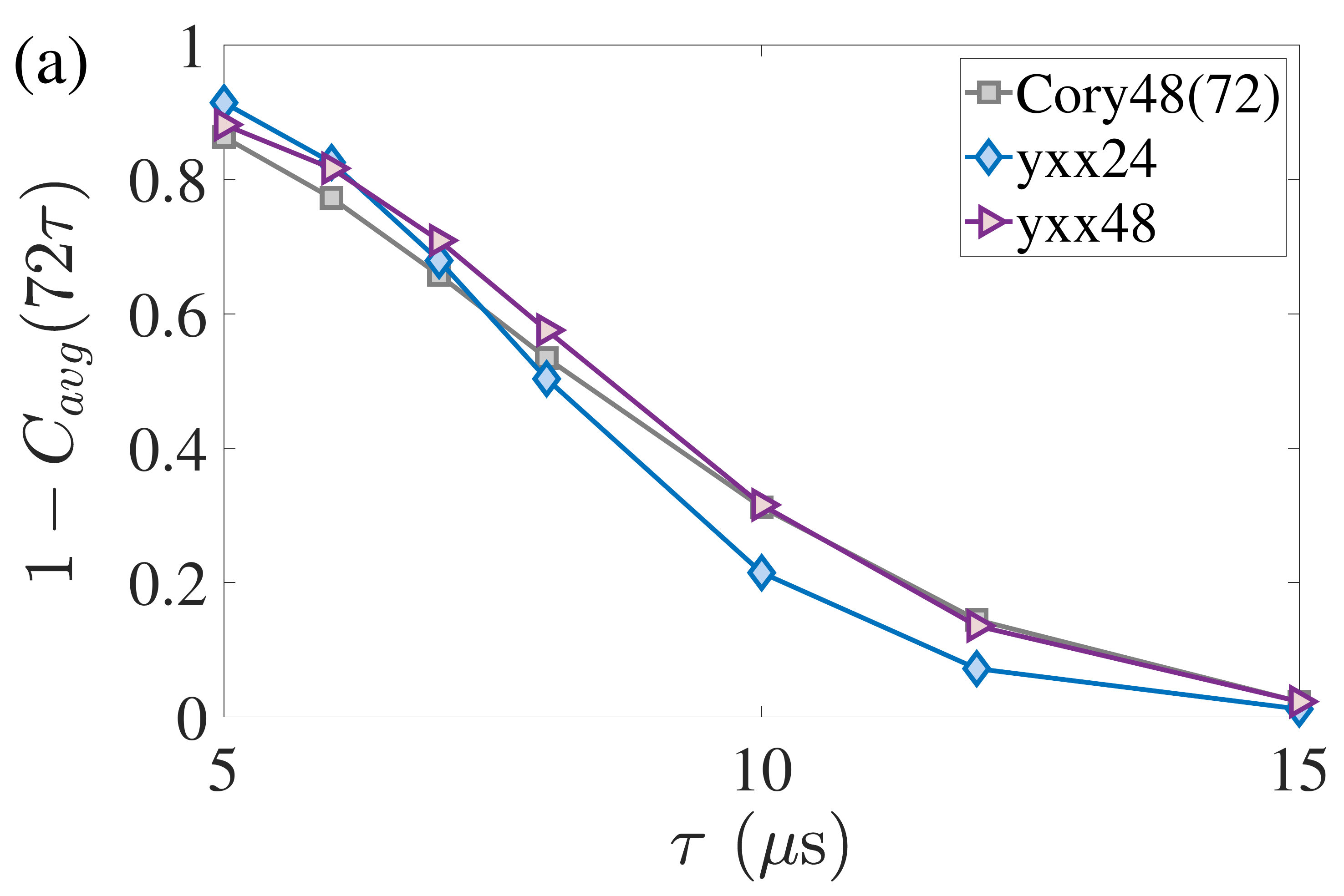}
\includegraphics[width=0.32\textwidth]{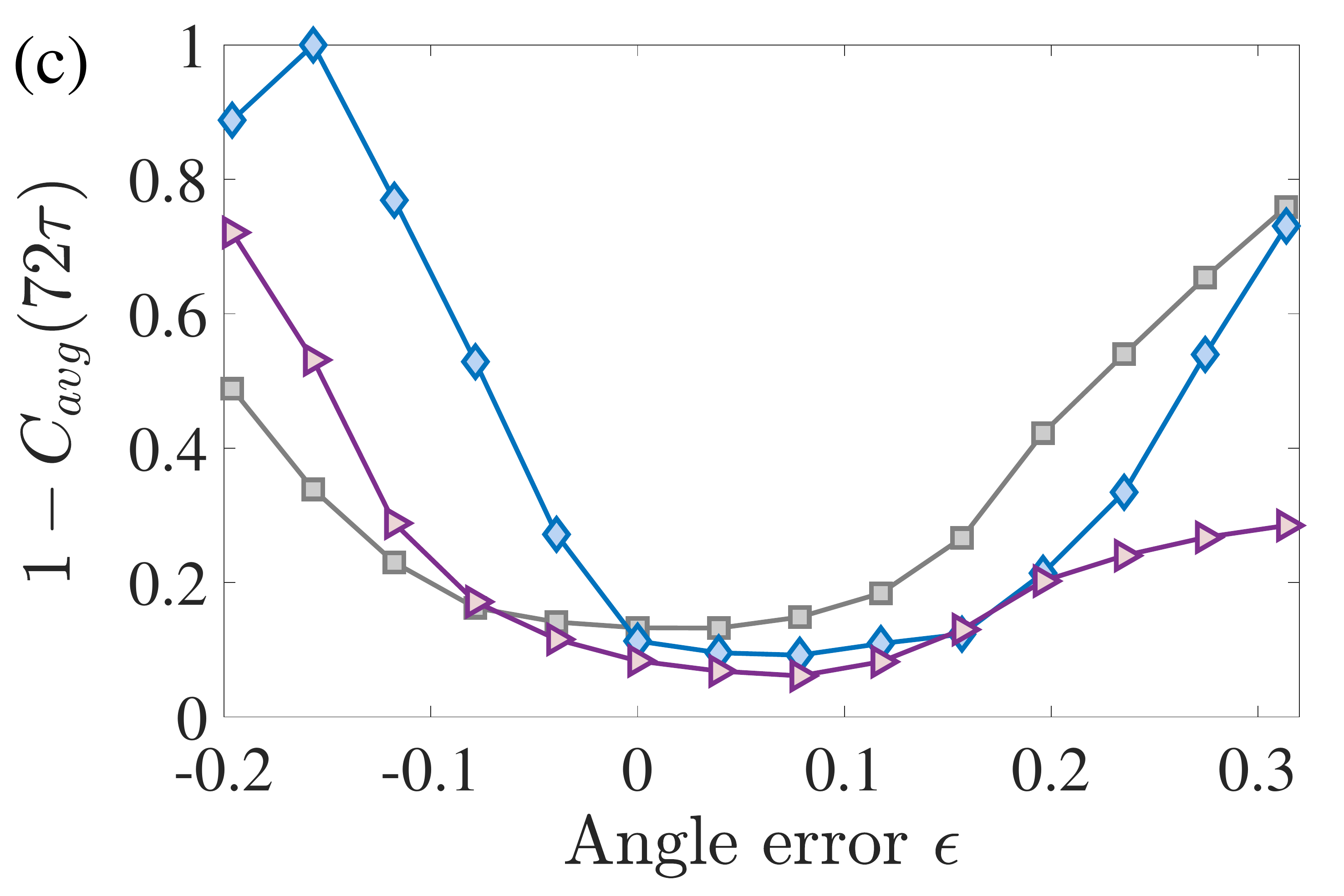}
\includegraphics[width=0.32\textwidth]{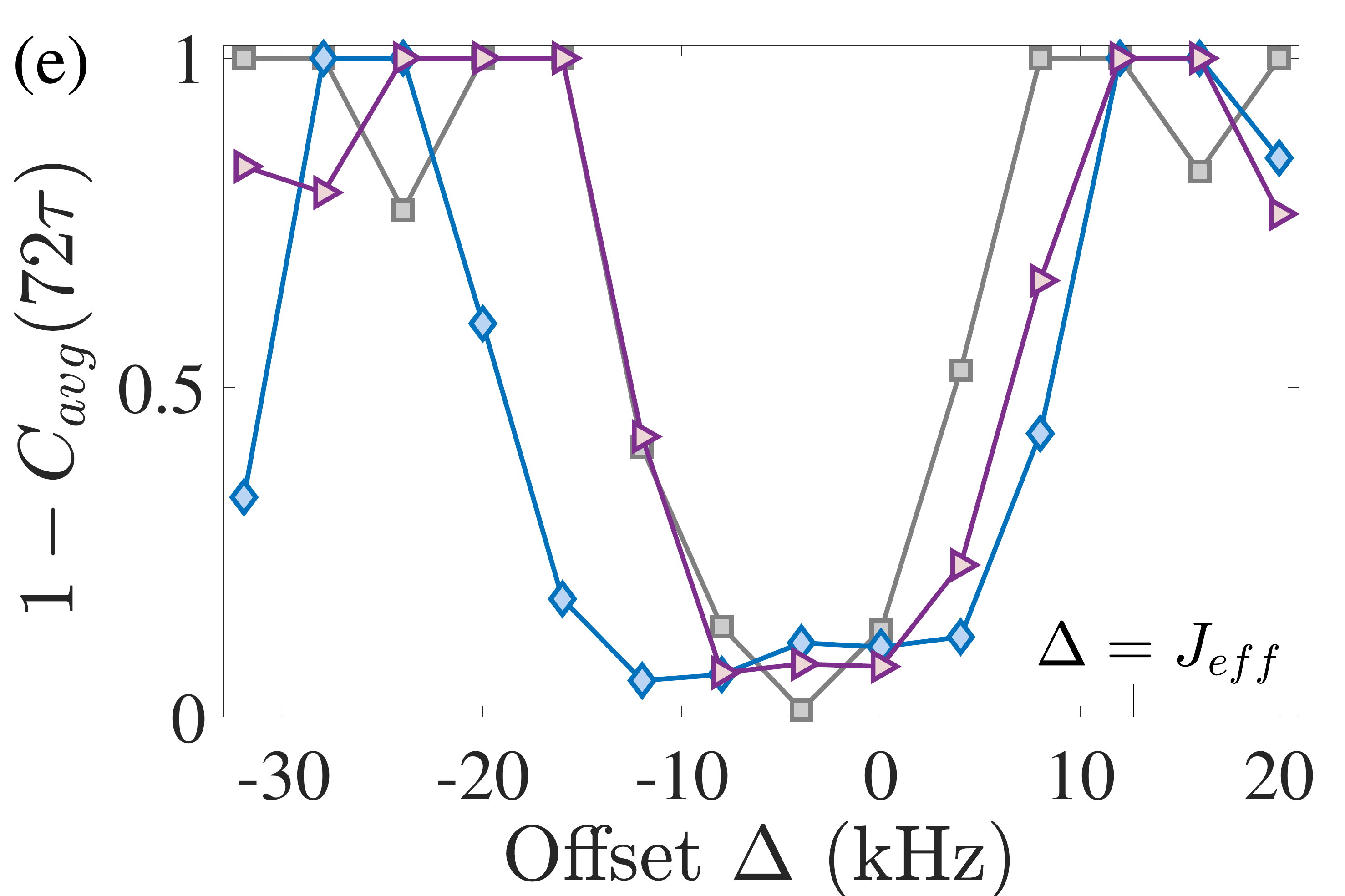}
\vskip 0.2cm
\includegraphics[width=0.32\textwidth]{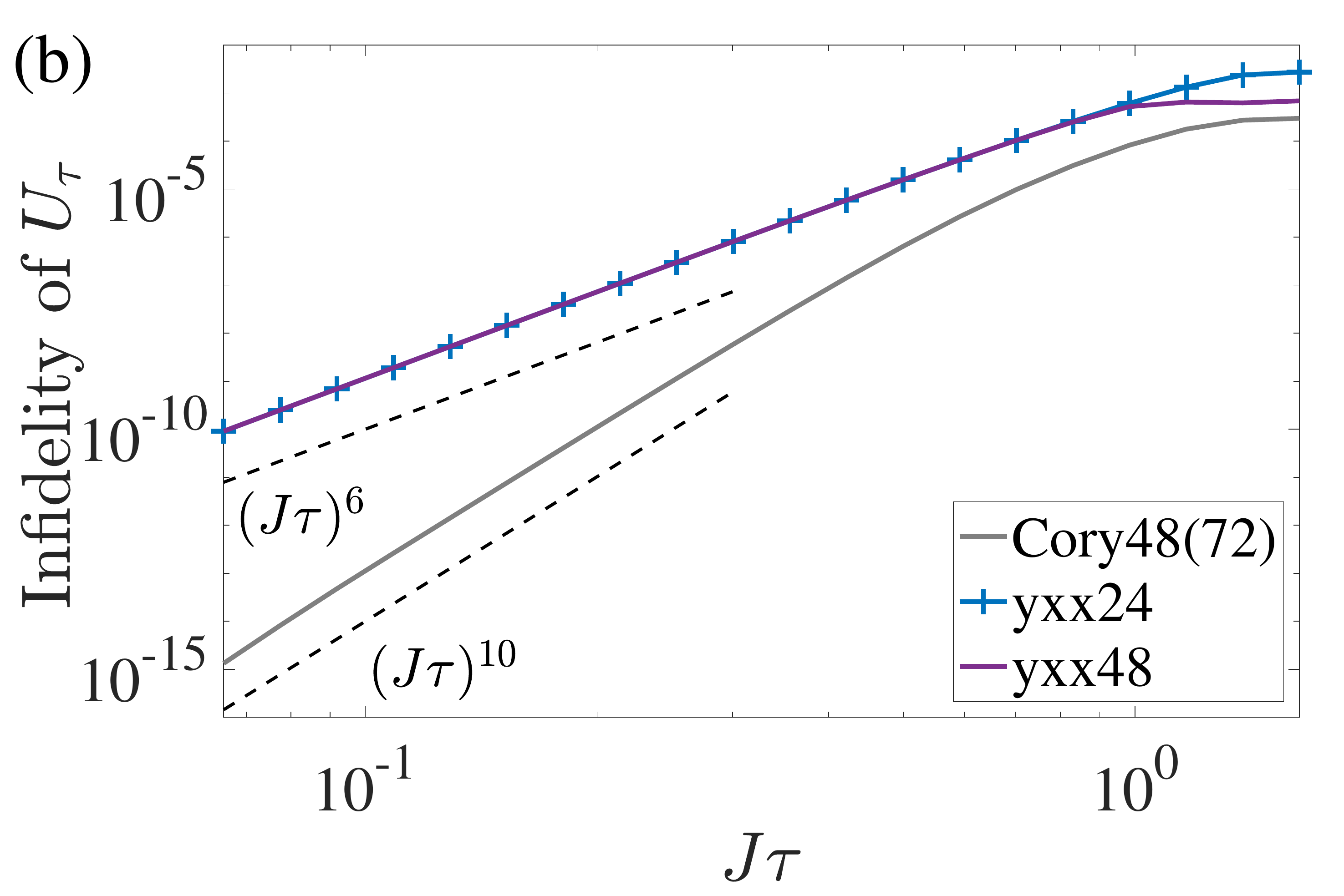}
\includegraphics[width=0.32\textwidth]{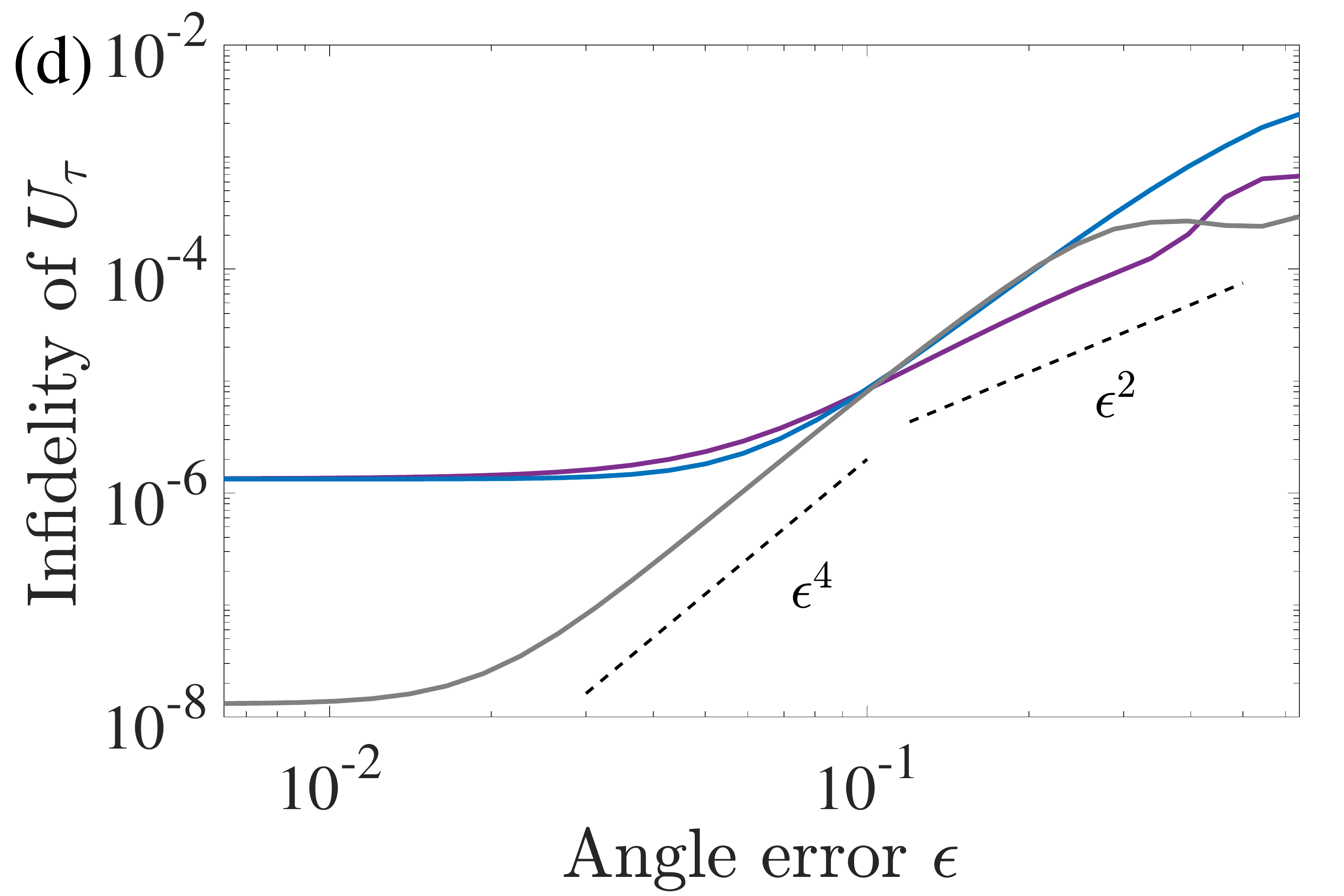}
\includegraphics[width=0.32\textwidth]{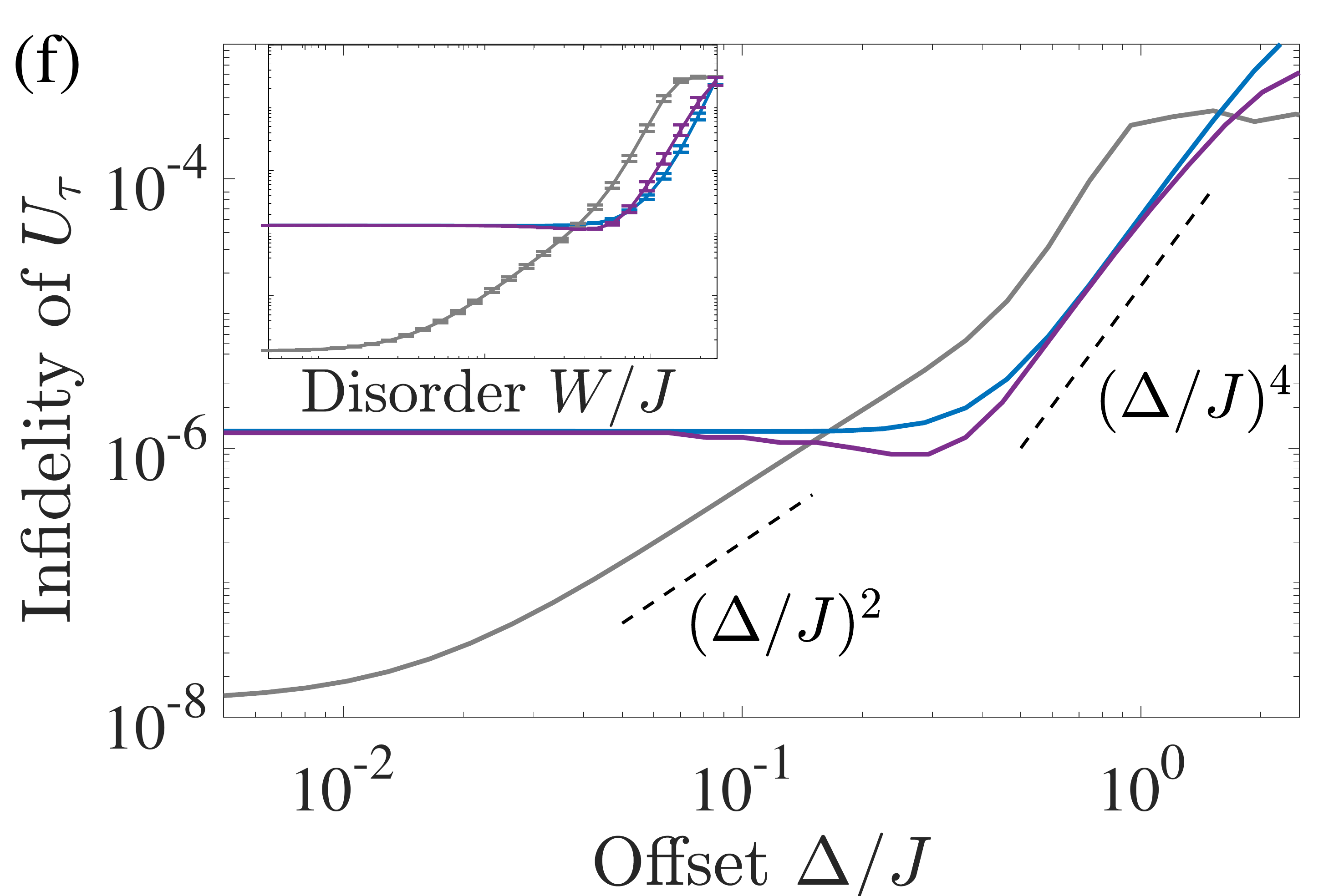}
\caption{\label{fig:yxxseq}
Experimentally measured average correlation at $t=72\tau$ on CaF$_2$ (top panels) and numerically simulated propagator infidelity $1-F$ (bottom panels) of DRL sequences and Cory48 for different $\tau$ (a-b), angle error (c-d) and frequency offset (e-f). 
The inset of (f) shows the propagator infidelity versus disorder strength $W$ averaged over 20 disorder realizations with error bars showing one standard deviation of the average infidelity. The x and y axes ranges of the inset are the same as in the main plot. 
Dashed lines in (b) (d) and (f) show the scalings specified by nearby expression. 
As 72 is not a multiple of 48, $C_{avg}(72\tau)$ of yxx48 is obtained as $[C_{avg}(48\tau)+C_{avg}(96\tau)]/2$.
Imperfections are set to zero unless specified by the horizontal axis. 
Error bars of the experimental data are determined from the noise in the free induction decay which is smaller than the marker size thus not shown. Other parameters are the same as in Fig.~\ref{fig:RLseq}.
}
\end{figure*}

\begin{figure}[!htbp]
\centering
\includegraphics[width=0.98\columnwidth]{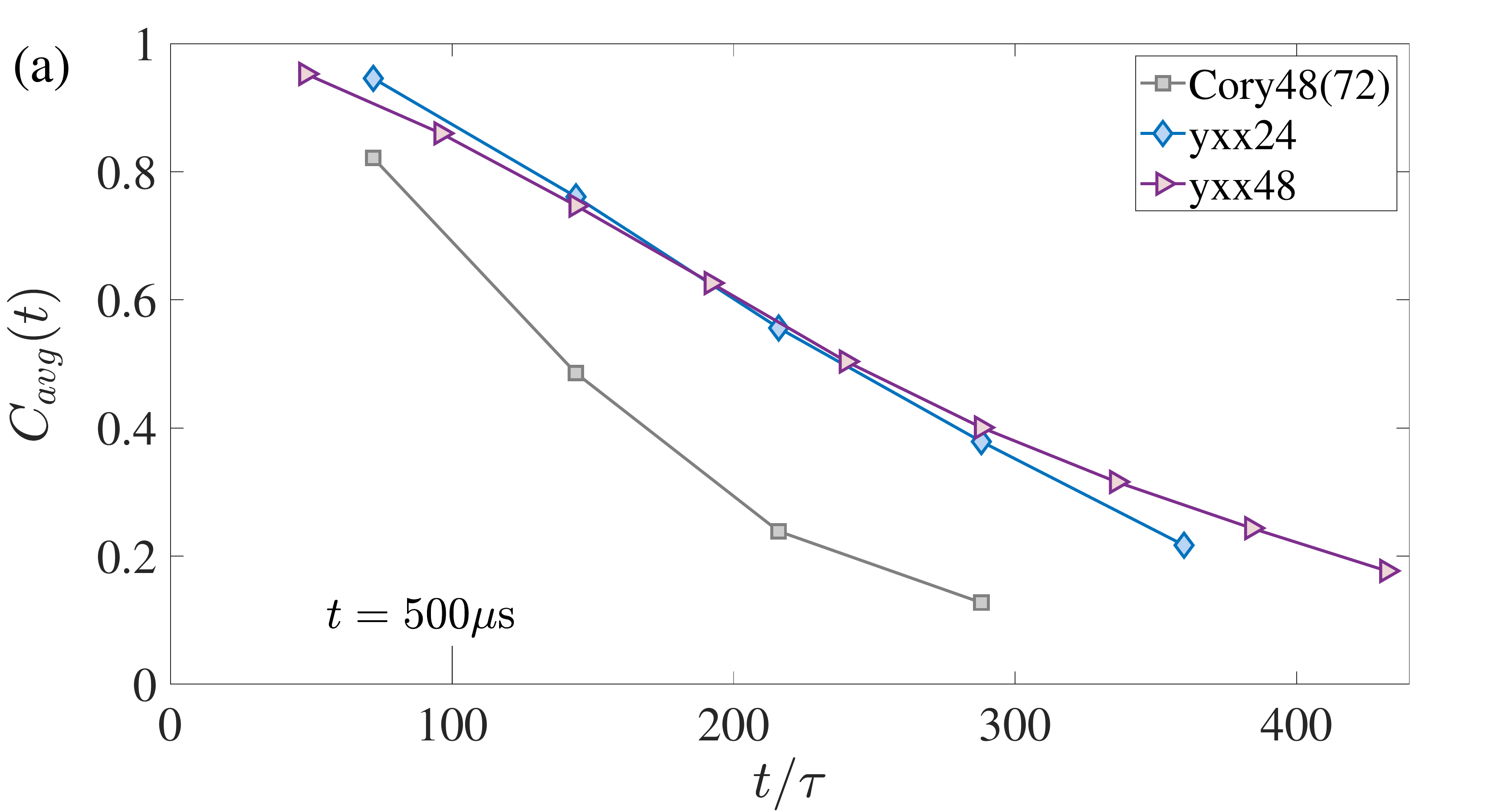}
\includegraphics[width=0.98\columnwidth]{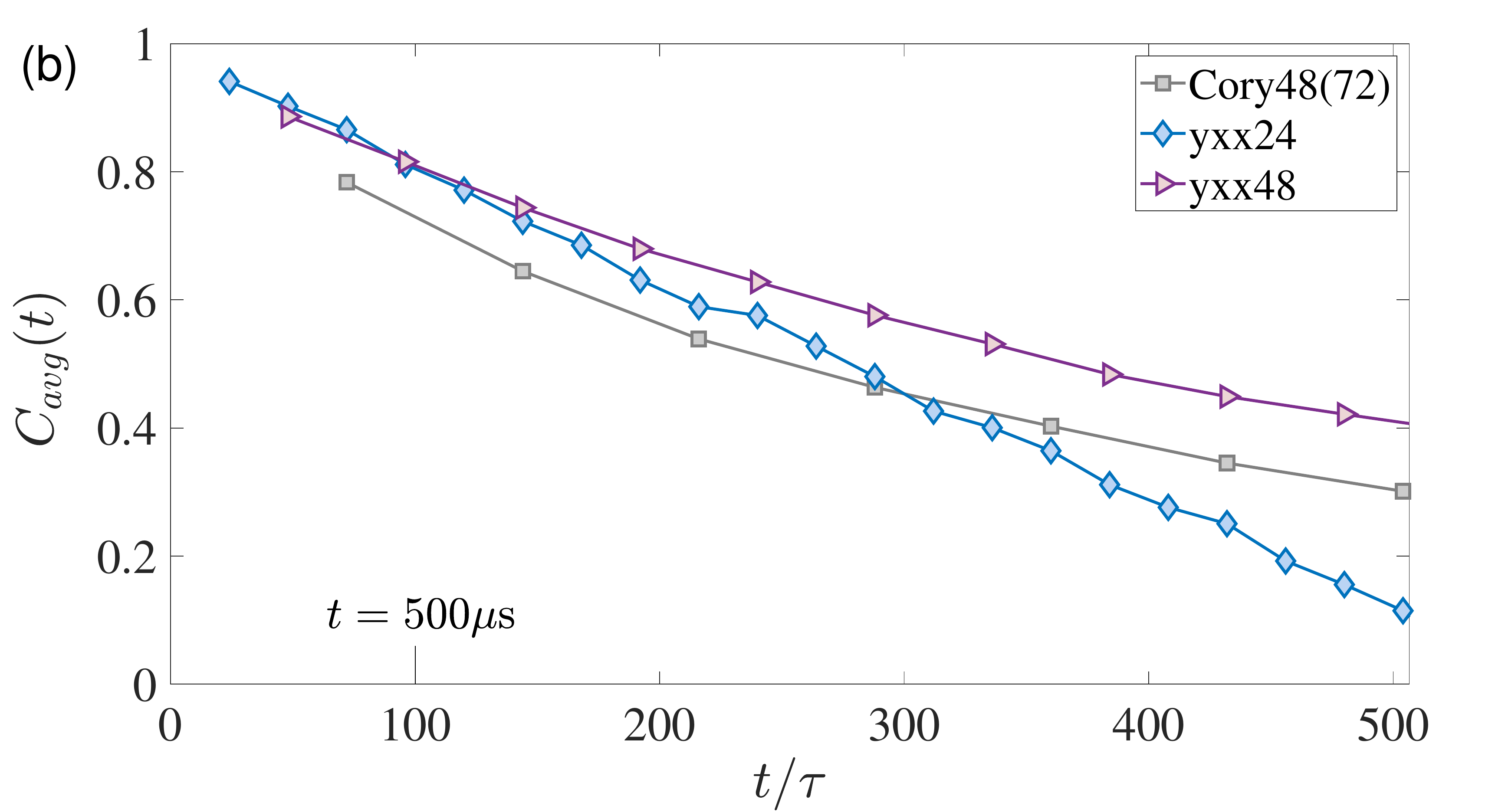}
\caption{\label{fig:yxx_t}
Experimentally measured average correlation at the best calibrated condition of CaF$_2$ (a) and FAp (b) as a function of time. Error bars of the experimental data are determined from the noise in the free induction decay which is smaller than the marker size thus not shown. Other parameters are the same as in Fig.~\ref{fig:RLseq}.
}
\end{figure}

DRL is successful in learning good pulse sequences in the presence of a single imperfection. When two or more imperfections exist simultaneously, the number of satisfactory sequences is significantly reduced and the DRL is unable to find one within reasonable time. Although this may be solved by using more powerful computers or more sophisticated algorithms, we take a physicist's approach. We learn from  sequences DRL found in the presence of a single imperfection and use our understanding to design more powerful sequences. We notice that Ideal6, Angle12 and PW12 are all built from the 3-pulse block $(\pm y \pm x \pm x)$ or its equivalent form $(\pm x \pm y \pm y)$, which we refer to as ``\textit{yxx} pattern''. 

Figure~\ref{fig:yxx} shows the toggling frame configuration and toggling frame Hamiltonian for two consecutive $yxx$ blocks. The last pulse is rotated to the first position for easier analysis, and we note that rotation of pulses unitarily changes the Floquet propagator thus does not change the fidelity~\cite{Bukov15}. We denote the dipolar interaction along the $\alpha$ axis as $D_\alpha$ as defined in Eq.~\ref{eqn:h0}
with $\alpha=x,y,z$. Because $D_x+D_y+D_z=0$, in the ideal case the $xyx$ block cancels the zeroth-order Hamiltonian and is the shortest sequence to do so. Although the length of a solid echo is only $2\tau$~\cite{Powles62}, it does not qualify as a decoupling sequences as defined here, because the average Hamiltonian is $D_y+D_z=-D_x$ and thus the sequence only protects the $X$ state. The shortest known decoupling sequence is WAHUHA, whose length is $6\tau$ though only contains 4 pulses ~\cite{Waugh68}.
The first-order average Hamiltonian of the first $xyx$ block is $-i[D_x,D_y]\tau/6$, which is cancelled by the contribution from the second block. Changing the signs of the pulses will not change the toggling frame interaction Hamiltonian, as the dipolar interaction is invariant under $\pi$ rotation. Therefore, the \textit{yxx} pattern guarantees vanishing zeroth- and first-order average interaction Hamiltonian. 

With this understanding, we adopt two approaches to construct longer sequences that are robust against multiple imperfections. First we can restrict our search to sequences with \textit{yxx} patterns only, so the agent only needs to choose the plus or minus sign instead of five actions. This significantly reduces the dimension of the search space from $5^N$ to $2^N$. In this way we find the yxx48 sequence shown in Table~\ref{tab:seq}. 
A second approach is to directly modify the RL sequences found above in order to cancel the additional imperfections. 
As Angle12 is robust against both angle error and finite pulse width, we double and modify it into the 24 pulse sequence shown as yxx24 in Table~\ref{tab:seq} so that it is also robust against offset (See appendix~\ref{appsec:yxx24intuition} for further details.) 

The performance of these two \textit{yxx} sequences are shown in Fig.~\ref{fig:yxxseq}. Again, the two \textit{yxx} sequences have a worse scaling with $\tau$ than Cory48, but this effect is barely seen in experiments, Fig.~\ref{fig:yxxseq}~(a). Experimentally, the three sequences are all robust against angle error [Fig.~\ref{fig:yxxseq}(c)]. If we only consider angle error, the average Hamiltonian of yxx48 is zero up to first-order; the average Hamiltonians of yxx24 and Cory48 are zero up to second order, since  they all cancel the angle error to first-order, as indicated by AHT. The scaling shown in Fig.~\ref{fig:yxxseq}(d) is the result of the cross commutator between interaction and angle errors from second and higher order average Hamiltonian. For the offset, both \textit{yxx} sequences show a better scaling compared to Cory48. In experiments we observe a plateau at small offset for the \textit{yxx} sequences, but not for Cory48, in agreement with the simulation. The fact that the yxx24 plateau  is wider than that of yxx48 might be a result of unknown experimental imperfections that correlate with frequency offset. 
As the frequency offset has the same form as on-site disorder, we expect any pulse sequence to show similar robustness against the two imperfections. This is confirmed by comparing Fig.~\ref{fig:yxxseq}(f) and its inset (for the \textit{yxx} sequence, disorder  and offset effects are equivalent up to first-order AHT \appref{app:offdis}).

Finally, we compare \textit{yxx} sequences and Cory48 under the best achievable experimental conditions with the two experimental samples in the same apparatus. Results on a different sample and different apparatus can be found in Appendix~\ref{appsec:adamantane}. Data taken with the disorder-free CaF$_2$ sample are in Fig.~\ref{fig:yxx_t}(a). Both yxx24 and yxx48 protect the correlation to significantly longer time than Cory48. In the disordered FAp sample, yxx48 still shows a better performance compared to Cory48, while yxx24 wins for $t<288\tau$ [Figure~\ref{fig:yxx_t}(b)]. This is not surprising because Angle12 is learned at effective $t=\tau$ and yxx24 is built on Angle12. The faster decay of yxx24 coherence at longer time is caused by an unknown field along the z axis, giving rise to  decaying oscillations of $C_{XX}$ and $C_{YY}$, which decrease faster than the exponential decay  of yxx48 and Cory48
~\footnote{We do not know where the z field comes from exactly, because yxx24 has zero average Hamiltonian up to second order for all the experimental imperfections we know and the third order is too complicated to track. But since we observe this field only in the disorder sample FAp, we suspect it is due to high order ($>2$) average Hamiltonian involving even number of disorder Hamiltonians such as $\sum_i [\epsilon \sigma_x^i,[w_i \sigma_x^i,w_i \sigma_z^i]]$ where $w_i^2$ does not average to zero and thus appears as a field. }.

\section{Conclusion and outlook}
\label{sec:conclusion}
We designed robust decoupling sequences using DRL and experimentally demonstrated that they  lead to better performance than the best-known sequence. 
We directly consider $\pi/2$ pulses as actions, enabling discovery of long sequences, and we use a gradient-free optimization method together with deep neural networks to tackle the complex control  landscape. DRL without any human insight is capable of dealing with single imperfections. Surprisingly,  many of the DRL sequences are not symmetric, instead, we observe a \textit{yxx} pattern. Building on our understanding of the \textit{yxx} pattern, we then find sequences that are robust against all dominant imperfections present in our experiments, leading to a better performance than the celebrated Cory48 sequence in two different samples. 
Our work emphasizes the usefulness of both artificial intelligence and human knowledge of the physical system in realistic applications.

We conclude this paper by pointing to some future research directions. (I) Although we focused on decoupling interacting spin-1/2 systems, a task that has applications in building spin-based quantum memories, our method is completely general to other systems and applications, by simply modifying the reward function to engineer the desired Hamiltonian. 
It would be interesting to apply this method to quantum simulation or quantum sensing. 
In this work we train the machine learner in the context of solid-state NMR, where the pulsed controls have been developed and optimized for 50 years, yet RL still shows an advantage. We expect our methods  might yield even more significant improvements in other quantum platforms whose controls are developed more recently, such as color centers in solids, cold atoms, trapped ions, and superconducting circuits.
(II) The DRL training in this work was simply performed on a personal laptop, so there is still large space for improvement on the computational side, e.g. by using a supercomputer with GPU acceleration to tackle more complex control sequences. 
(III) Further improvements could be obtained by a stronger interface between machine learner and the physical system. Here we trained the DRL purely using a classical computer and tested the learned sequences on a quantum simulator. Our method can be readily modified into a hybrid classical-quantum DRL process: the DRL agents on a classical computer generate a sequence, which is then applied in a quantum system; then one use an experimental observable, such as the correlation decay rate, as the reward to train the agents. In this way the the modeling of system Hamiltonian and control imperfections is not required. While in our current learning process simulating the spin-chain environment only takes a small portion of the total CPU time, this could change for different tasks that require simulating a many-body non-integrable system. Then, we expect replacing the classical simulation with quantum experiments will improve the training time and open new avenues for devising quantum control protocols. 

\begin{acknowledgments}
Authors would like to thank H. Zhou and L. Viola for discussion. This work was supported in part by the National Science Foundation under Grants No. PHY1734011, No. PHY1915218, and No. OIA-1921199.
\end{acknowledgments}

\appendix

\section{Phase transient effects on Angle12}
\label{sec:SMpt}
\begin{figure}[!htbp]
\centering
\includegraphics[width=\columnwidth]{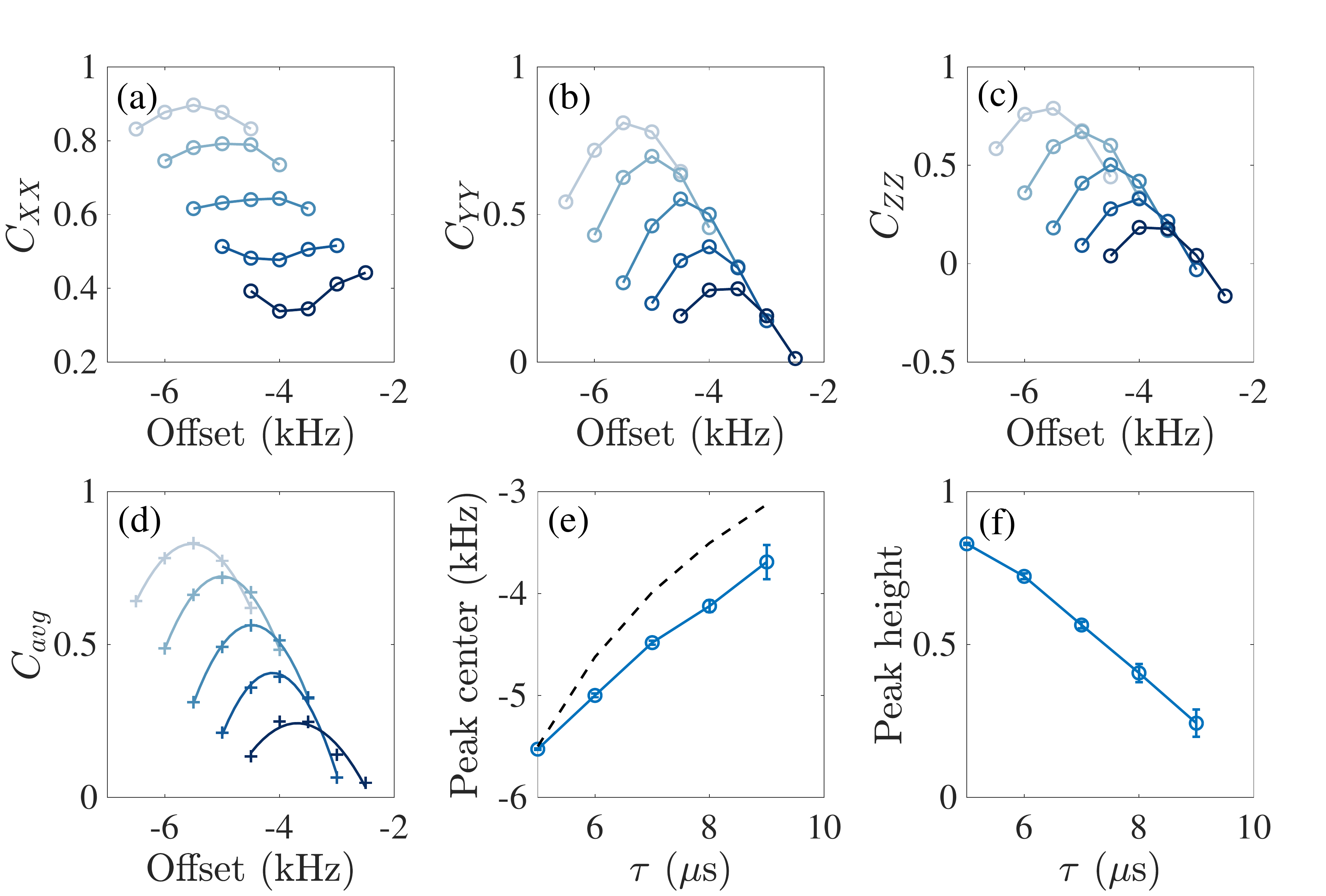}
\caption{\label{fig:pt_tau}
Experimental study of angle-12 for different offset and $\tau$. (a-d) show $C_{XX}$, $C_{YY}$, $C_{ZZ}$ and the average correlation respectively. Different curves are obtained with $\tau$ from 5~$\mu$s to 9~$\mu$s, with a step of 1~$\mu$s and lighter color representing smaller $\tau$. In (d), the plus sign marks the experimental data point and the curve shows the parabolic fitting, whose peak center and height are shown as the blue curve in (e) and (f), respectively. The length of the error bars corresponds to two standard deviation of the fitted results. In (e), the black dashed line shows the peak center expected from first-order AHT.
}
\end{figure}

\begin{figure}[!htbp]
\centering
\includegraphics[width=\columnwidth]{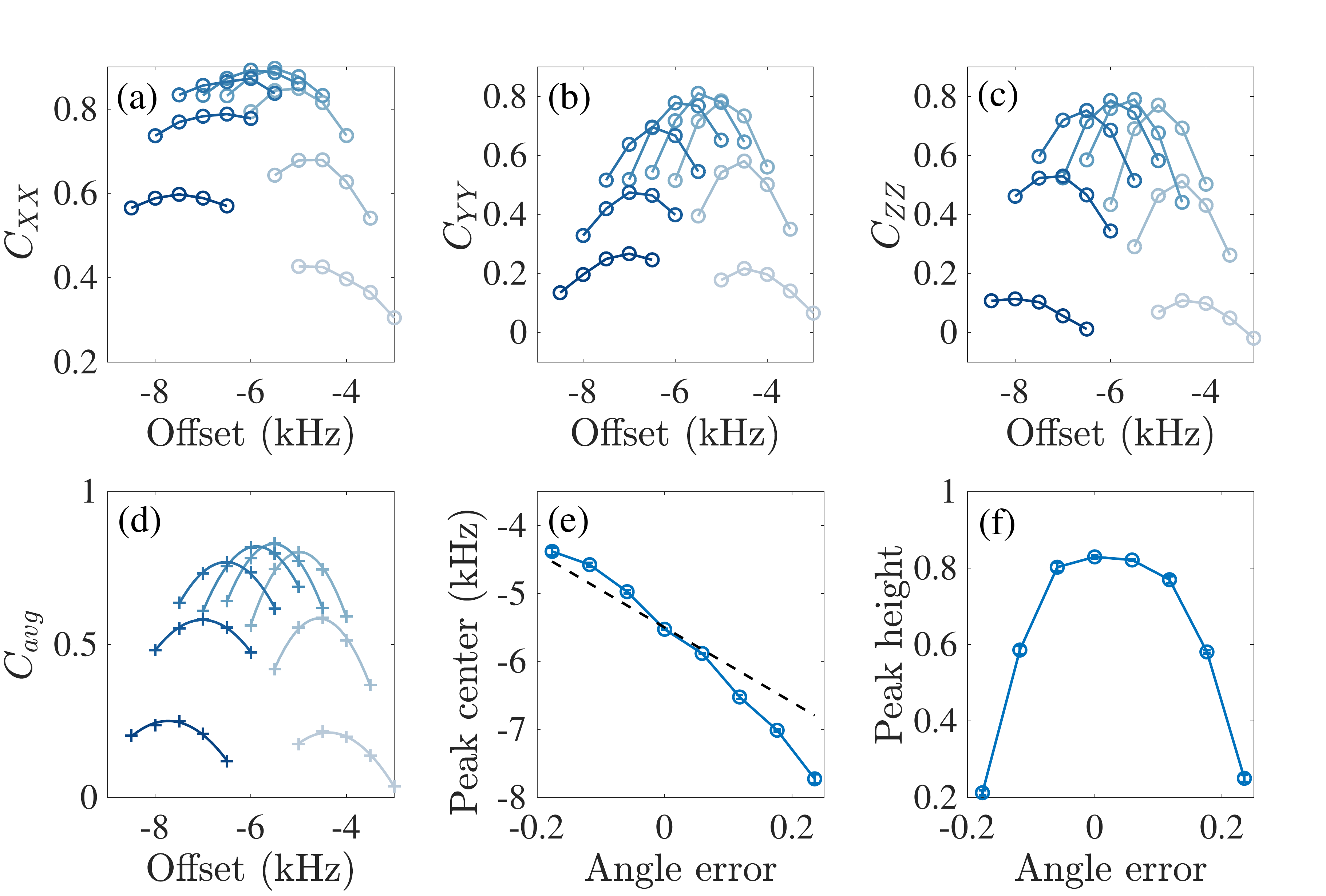}
\caption{\label{fig:pt_ae}
Experimental study of angle-12 for different offset and angle error. (a-d) show $C_{XX}$, $C_{YY}$, $C_{ZZ}$ and the average correlation respectively. Different curves are obtained with angle error from -0.18 to 0.24 with a step of 0.06, with lighter color representing smaller angle error. In (d), the plus sign marks the experimental data point and the curve shows the parabolic fitting, whose peak center and height are shown as the blue curve in (e) and (f), respectively. The length of the error bars corresponds to two standard deviation of the fitted results. In (e), the black dashed line shows the peak center expected from first-order AHT.
}
\end{figure}

The effects of pulse phase transients are typically difficult to quantify, as they introduce difficult to characterize time-dependent Hamiltonian terms. Still, here we show it is still possible to capture their essence using a simple model. In the future, we could even include phase transients into the reward function to design sequences that are robust against them. 

In  Fig.~\ref{fig:RLseq}(e) of the main text, we see that the optimal performance point of the  Angle12 sequence deviates from $\Delta=0$ by a significant amount. Here we show that this is due to the cancellation of phase transient error and offset in Angle12. Since we do not include the phase transient effect as an error source during the training process, we should also minimize this effect in the experimental testing. This can be realized by pinning the offset to $\Delta_0$ for Angle12. Other sequences happen to be sufficiently robust to phase transient that they do not require any special treatment.

We first explain the physics of phase transients. When creating a RF square pulse along the x-axis, the leading and trailing edges are not sharp and they inevitably generate a small y-component~\cite{Haeberlen76}. 
Although the exact description of phase transient is not know, the simple model introduced in Ref.~\cite{Haeberlen76} can qualitatively explain experimental results. An x-pulse with a phase transient is modeled by a propagator $e^{-i\alpha_1 Y} e^{-i(\pi/2) X} e^{-i\alpha_2 Y}$, where $\alpha_1$ and $\alpha_2$ denote the strength of the y-component at the trailing and leading edge, respectively. For pulses along other axes, this model assumes that the additional component is always $+\pi/2$ phase shifted with respect to the main component. 
Using AHT, we find the zeroth-order average phase transient of Angle12 is $(\alpha_1-\alpha_2)(-4X+2Y)/(12\tau)$. As the zeroth-order offset of Angle12 is $\Delta(-4X+2Y)/12$,  the two cancel each other out at the optimal offset $\Delta_0=(\alpha_2-\alpha_1)/\tau$, leading to the non-zero optimum point in Fig.~\ref{fig:RLseq}(e). 

We verify this relation in Fig.~\ref{fig:pt_tau}, where we show the autocorrelations for different offsets and $\tau$. Figure~\ref{fig:pt_tau}(a-c) shows  $C_{XX}, C_{YY}$ and $C_{ZZ}$, where each curve is taken for a given $\tau$ and darker colors denote larger $\tau$. For each $\tau$, there is indeed a peak at $\Delta_0$. When the offset deviates from $\Delta_0$, we see the decrease of $C_{XX}$ is not as significant as that of $C_{YY}$ and $C_{ZZ}$, because this deviation results in an effective magnetic field  $\propto4X-2Y$, which is close to the x direction. When $\tau$ increases, the peak center $\Delta_0$ shifts toward smaller offset (in absolute value) as expected from the AHT analysis above. To quantitatively analyze this trend, we fit $C_{avg}$ at a fixed $\tau$ to a parabolic function, as shown in Fig.~\ref{fig:pt_tau}(d). 
In Fig.~\ref{fig:pt_tau}(e) we plot the peak center $\Delta_0$ as a function of $\tau$ and compare it with the zeroth-order AHT value $\Delta_0\propto 1/\tau$. The two quantities show the same trend, with an imperfect match  due to the simplicity of the model. 
Because of the variation of $\Delta_0$ with $\tau$, it is not reasonable to use the same offset when testing Angle12's robustness against $\tau$, instead, we use the fitted $C_{avg}$ peak height in Fig.~\ref{fig:RLseq}(a). We note that our analysis  does not artificially increase the robustness of Angle12 compared to the ideal case without phase transient. By choosing the optimal $\Delta_0$, we can at most cancel zeroth-order effects of the phase transient, while  higher order terms and cross terms between phase transient and other Hamiltonian components still lead to the  degradation of the autocorrelations. Therefore, we still underestimate the robustness of Angle12  even  when we are using the optimal $\Delta_0$.

When introducing the angle error, the optimal cancellation condition also changes. This can be seen from the first-order AHT. The first-order cross term between angle error and phase transient leads to an additional field $(-2X-2Y+4Z)\epsilon (\alpha_1-\alpha_2)/(12\tau)$; the cross term between angle error and offset leads to a field $(-2X+4Y-4Z)\epsilon\Delta/12$. %
In other words, introducing an angle error dresses the effective fields due to phase transient and offset, and now the two cannot exactly cancel each other. 
Still, there exists an optimal offset $\Delta_0$ where the residual field is the smallest. To first-order in $\epsilon$, $\Delta_0=(\alpha_2-\alpha_1)(1+\epsilon)/\tau$. We experimentally verify this relation in Fig.~\ref{fig:pt_ae}. 
In Fig.~\ref{fig:pt_ae} (a-d) the darker color denotes larger $\epsilon$ and we see that $\Delta_0$ decreases when increasing $\epsilon$ (note that $\Delta_0<0$) as expected from the analysis above. 
Again we fit $C_{avg}$ to a parabolic function to get the peak center and peak height. The peak center as a function of $\epsilon$ is shown in Fig.~\ref{fig:pt_ae}(e) with the dashed line denoting the theoretical value assuming the $\Delta_0$ at $\epsilon=0$ is exact. Our experimental results do agree with the theoretical expectations. The peak height is shown in Fig.~\ref{fig:pt_ae}(f) and also Fig.~\ref{fig:RLseq}(c).

In addition to the two verifications above, we also increase the pulse width and observe $\Delta_0$ decreases (not shown). This is because the phase transient becomes less evident when  using a lower RF power. 
\section{Similarity between offset and disorder}
\label{app:offdis}
We now  consider two types of control imperfection: magnetic field disorder and offset. Both interactions  are  fields along the longitudinal z axis, so many of their properties are similar. Here we formally demonstrate that: (i) any sequence with vanishing zeroth-order offset Hamiltonian must also have vanishing zeroth-order disorder and vice-versa (ii) for \textit{yxx} sequences this is also true to first-order. 

Consider the Hamiltonian $H=D_z+H_z$, where $D_\alpha$ is the dipolar interaction in the $\alpha$ direction and $H_\alpha=\Delta \sum_j  S_\alpha^j$ for the  offset and $H_\alpha=\sum_j w_j S_\alpha^j$ for the  disorder in the $\alpha$ direction. 
A uniform offset  can be viewed as a special disorder realization, so a vanishing average Hamiltonian in the presence of disorder  implies a vanishing average offset Hamiltonian. We then only need to demonstrate the converse in the following. 

The zeroth-order average Hamiltonian for single-body Hamiltonians can be calculated by considering a representative site. Thus the relative strength of the interaction among sites does not matter,  and a  vanishing zeroth-order offset is always  equivalent to a vanishing zeroth-order disorder. 

\begin{figure*}[!t]
\centering
\includegraphics[width=0.98\textwidth]{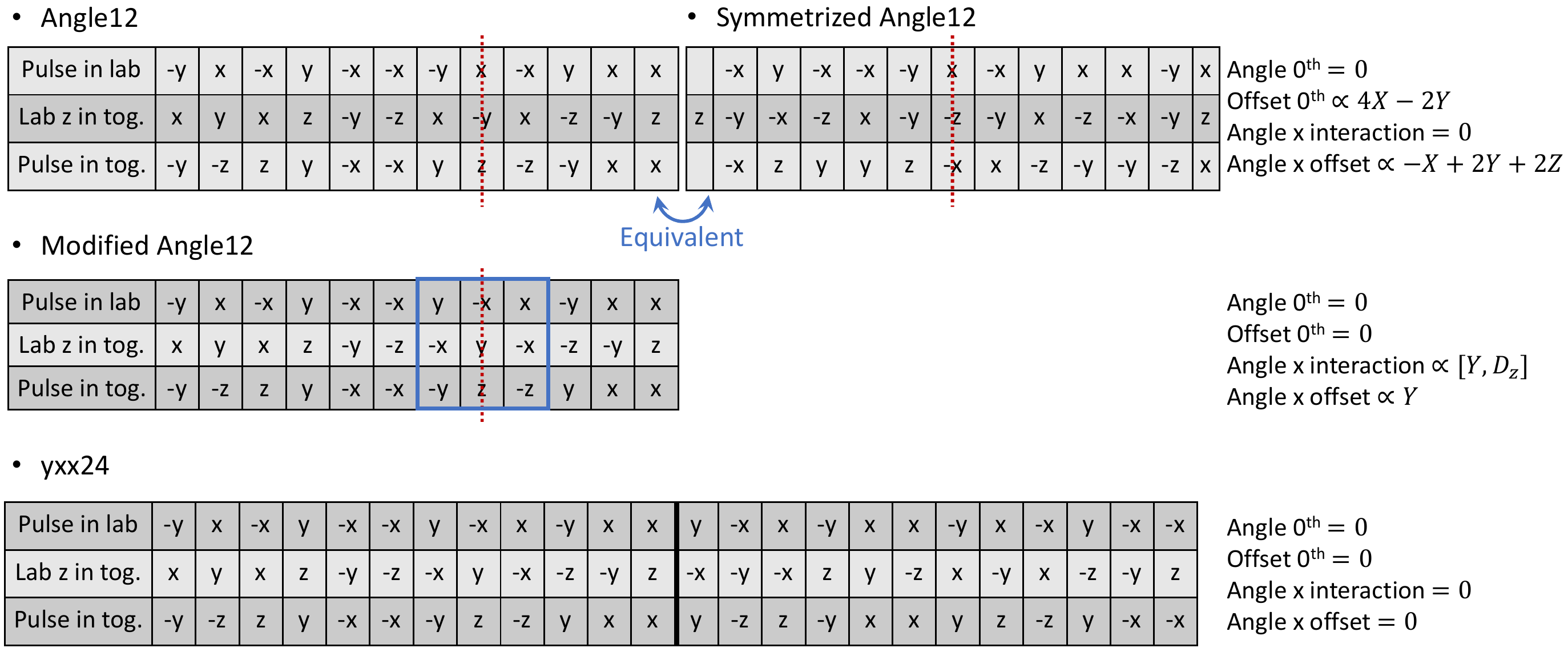}
\caption{\label{fig:yxx24intuition}
Analysis of Angle12, symmetrized Angle12, modified Angle12 and yxx24. For each sequence, we specify three elements: pulse in lab stands for the direction of pulse in lab frame; lab z in tog. stands for the direction of lab frame z axis in toggling frame after the corresponding pulse; pulse in tog. stands for the direction of pulse in toggling frame. Lab z in tog. determines the free evolution Hamiltonian (both interaction and offset) in toggling frame. For example, $-y$ indicates the Hamiltonian is $D_y-
H_y$. Pulse in tog. gives the angle error term. For example, $-y$ indicates the pulse unitary is $e^{i(pi /2+\epsilon)Y}$. AHT analysis results of Angle12 (which is equivalent to symmetrized Angle12), modified Angle12 and yxx24 are shown to the left of the tables. Modified Angle12 is obtained by $\pi$ phase shifting the pulses in the blue box. If we consider only the interaction and the offset, Angle12 and modified Angle12 are symmetric around the red dashed line. yxx24 is obtained from modified Angle12 by appending another modified Angle12 with $\pi$ phase shift.
}
\end{figure*}
The first-order average Hamiltonian contains three parts: the interaction-interaction commutator, which we can ignore for this discussion, the imperfection-imperfection commutator and the interaction-imperfection cross commutator. The  imperfection-imperfection commutator is a single-site operator so it has the same first-order average Hamiltonian for both disorder and offset. Then, the difference between disorder and imperfection lies in the interaction-imperfection cross commutator. 
Indeed, due to symmetries, $[D_\alpha,H_\alpha]=0$ for the offset (with $\alpha=x,y,z$), while this is not true for disorder. 
Still, we can show that for \textit{yxx} sequences,  additional terms arising from the disorder $[D_\alpha,H_\alpha]$ commutator sum up to zero. The detailed discussion is presented in~\cite{SM-RL}, while here we give two key factors. First,   \textit{yxx} sequences can be divided into blocks of 3 intervals of duration $\tau$, and the zeroth-order average interaction within each block is zero. This guarantees that there are no cross terms between different blocks. 
Second, if the zeroth-order offset vanishes, not only we have $\sum_j H_z^j=0$ when we sum over all time intervals, but also if we restrict the sum to the first (or 2$^{\text{nd}}$ and $3^{\text{rd}}$) intervals inside each block. 
In turns, this ensures that when summing over all blocks, commutators of the form $[D_\alpha,H_\alpha]$ add up to zero. Thanks to the similarity between the offset and disorder we were able to use the simpler form of the offset Hamiltonian in the traning algorithm, and still achieve robust sequences against disorder.

\section{Physical intuition for the construction of the yxx24 sequence}\label{appsec:yxx24intuition}
Here we explain how we design the yxx24 sequence starting from the Angle12  sequence, as an example of how human  insight can lead to better control. First we analyze Angle12 using AHT and present the results in \figref{fig:yxx24intuition}. The zeroth-order angle error vanishes, while the zeroth-order offset is proportional to $4X-2Y$ (the zeroth-order interaction term is zero as guaranteed by the \textit{yxx} pattern.) 
We notice that if we consider only the interaction and offset, Angle12 is symmetric, because the second row (``Lab z in tog.'') in \figref{fig:yxx24intuition} is mirror-symmetric around the red dashed line.
In other words, we can ``rotate'' the sequence to make it symmetric. Here by ``rotating'' we mean  shifting $n$ actions from the beginning to the end. For example, we can put the first 2 actions of Angle 12 at the end, so that it becomes $-x,y,-x,-x,-y,x,-x,y,x,x,-y,x$, which is symmetric with respect to the middle point. This sequence, labeled symmetrized Angle12 in~\figref{fig:yxx24intuition}, is equivalent (in terms of fidelity and leading order average Hamiltonian) to the original one found by DRL, if the angle error is ignored. Indeed,  the sequence rotation induces a unitary rotation of the  Floquet Hamiltonian that for decoupling sequences (where the target evolution is the identity) does not change the fidelity~\cite{Bukov15}. 
As the DRL agent only learns from the  propagator fidelity, it does not distinguish symmetric and rotated sequences. This is in contrast to traditional sequence-finding methods that are based on the approximated Floquet-Magnus expansion. Indeed, in the theoretical analysis it is convenient to consider the symmetric case as it reduces the number of nonzero terms in AHT. In particular, the symmetrized Angle 12 sequence has zero first-order average Hamiltonian except for the angle error, and we can thus focus on such terms [as listed in \figref{fig:yxx24intuition}]. 

With the AHT analysis in hand, we first want to modify the sequence such that it has vanishing zeroth-order average Hamiltonian. Notice that under Angle12 the offset gives a  zeroth-order  Hamiltonian  $4X-2Y$. To cancel this contribution we need to change the toggling frame offset Hamiltonian from $X$ to $-X$ in two intervals and from $-Y$ to $Y$ in 1 interval, while keeping the sequence symmetric. The toggling frame offset orientation is shown in the row labeled ``Lab z in tog.'' in~\figref{fig:yxx24intuition}. 
Therefore, we can cancel the zeroth-order offset by adding a $\pi$ phase shift to pulses within the blue box:  we name this new sequence  ``modified Angle12''. The AHT analysis of modified Angle12 is also shown in \figref{fig:yxx24intuition}. Its zeroth-order Hamiltonian is zero. 

Now the dominant nonzero Hamiltonian comes from the first-order cross terms between angle error and interaction, as well as between angle error and offset (since the angle error is not mirror symmetric.) Once the zeroth-order average Hamiltonian is zero, we can double the sequence and use the symmetry to get rid of the first-order corrections. Notice the two first-order corrections are antisymmetric under a $\pi$ rotation along the z-axis and thus can be easily cancelled by combining the  modified Angle12 and another modified Angle12 with a $\pi$ phase shift.  We thus arrive at the yxx24 sequence, whose zeroth-  and first-order average Hamiltonian are all zero.

\section{Experiments on Adamantane}
\label{appsec:adamantane}
In order to ensure that the improved performance of the yxx24 and yxx48 sequences in calcium fluoride and fluorapatite shown in Figure~\ref{fig:yxx_t} were not unique to the spectrometer and probe used there, we also compared the performance of these sequences to the Cory48 sequence in a powdered adamantane sample on a different 300 MHz Bruker DSX spectrometer.

Adamantane (C$_{10}$H$_{16}$) is a plastic solid with a high degree of internal motion.  The proton (hydrogen nuclei) dipolar linewidth is about $(2\pi)13$~krad/s \cite{Cory90} and the system is often used to model a 3D spin system. The T$_1$ relaxation time for the proton spins in adamantane at room temperature was measured to be just under 1 s.

Figure~\ref{fig:yxx_adamantane} shows the comparison between the performance of the Cory48, yxx24 and yxx48 as measured by the average correlation metric introduced earlier.  The collective $\pi/2$ pulses used had a pulse width $t_w = 2$ $\mu$s. The pulse center-to-center delay $\tau=8 \mu$s.  Sequence performance degraded significantly as  this duration was decreased, likely due to finite stabilization times during phase switching and the overlapping of pulse transients.

It should be noted that while the decay of the average correlation metric resembles the results of a single line-narrowing experiment, care should be taken while comparing them directly.  The bi-exponential behavior of the decays for the Cory48 sequence gives rise to effective linewidths of  823 Hz and 88 Hz respectively.  The effective linewidth is significantly broader than the 3.5 Hz obtained in \cite{Cory90} probably due to the longer $\pi/2$ pulse and $\tau$ values used here.  Similarly, the fidelity of the sequences shown here are seen to be slightly lower than those obtained in Figure~\ref{fig:yxx_t}.  Note that the plot shows the data acquired after an even number of cycles, with a maximum of 128 cycles.

The yxx24 and Cory48 sequences show almost identical behavior at all timescales in these experiments.  However, while the fidelity of the yxx48 sequence is initially lower than that of the other two, the performance at longer timescales matches that of the other two.

\begin{figure}[!htbp]
\centering
\includegraphics[width=0.9\columnwidth]{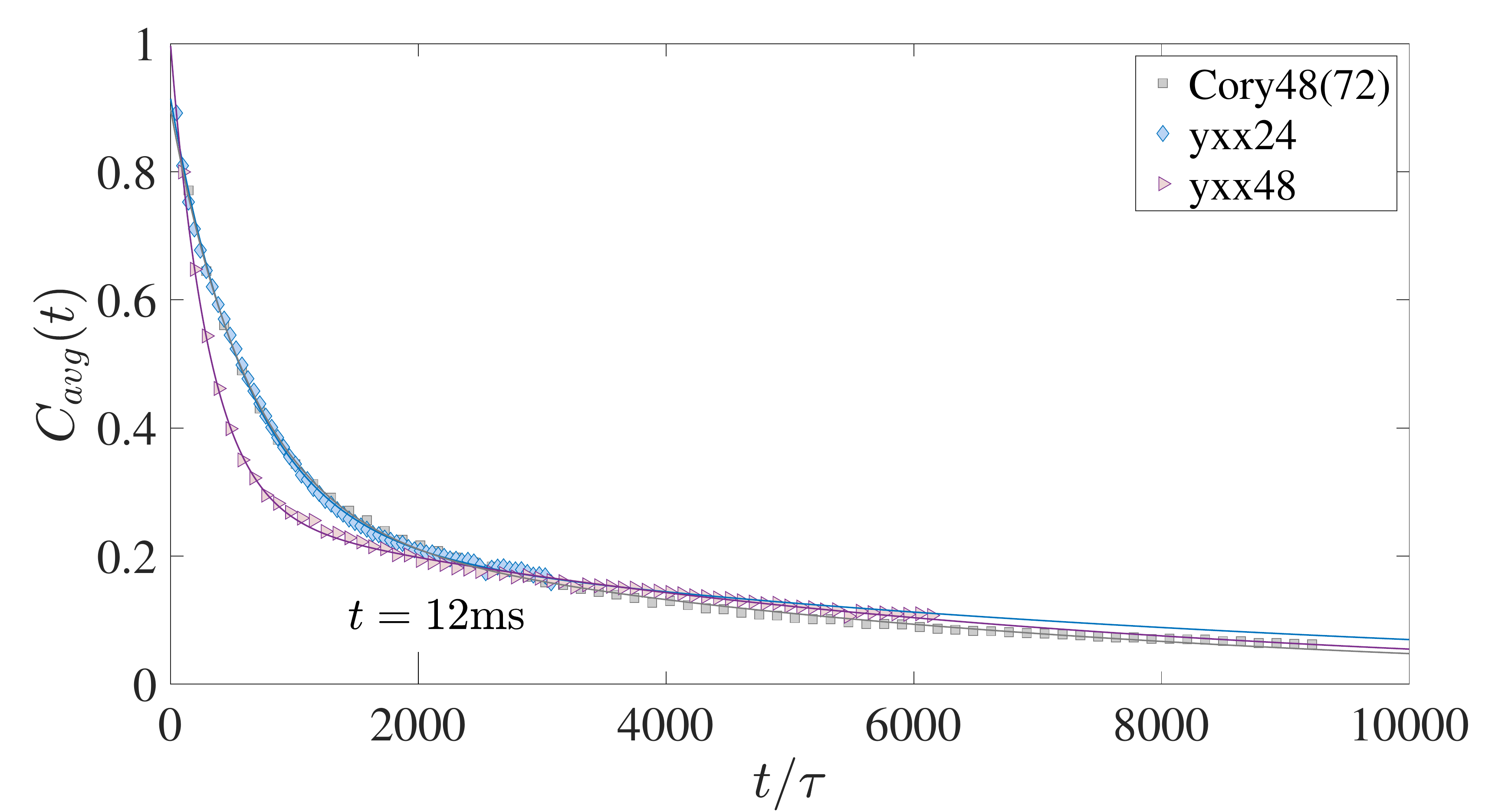}
\caption{\label{fig:yxx_adamantane}
Experimentally measured average correlation at the best calibrated condition for adamantane as a function of time. The solid curves are biexponential fits to the data. 
}
\end{figure}

\bibliography{Biblio}

\begin{thebibliography}{59}%
\makeatletter
\providecommand \@ifxundefined [1]{%
 \@ifx{#1\undefined}
}%
\providecommand \@ifnum [1]{%
 \ifnum #1\expandafter \@firstoftwo
 \else \expandafter \@secondoftwo
 \fi
}%
\providecommand \@ifx [1]{%
 \ifx #1\expandafter \@firstoftwo
 \else \expandafter \@secondoftwo
 \fi
}%
\providecommand \natexlab [1]{#1}%
\providecommand \enquote  [1]{``#1''}%
\providecommand \bibnamefont  [1]{#1}%
\providecommand \bibfnamefont [1]{#1}%
\providecommand \citenamefont [1]{#1}%
\providecommand \href@noop [0]{\@secondoftwo}%
\providecommand \href [0]{\begingroup \@sanitize@url \@href}%
\providecommand \@href[1]{\@@startlink{#1}\@@href}%
\providecommand \@@href[1]{\endgroup#1\@@endlink}%
\providecommand \@sanitize@url [0]{\catcode `\\12\catcode `\$12\catcode
  `\&12\catcode `\#12\catcode `\^12\catcode `\_12\catcode `\%12\relax}%
\providecommand \@@startlink[1]{}%
\providecommand \@@endlink[0]{}%
\providecommand \url  [0]{\begingroup\@sanitize@url \@url }%
\providecommand \@url [1]{\endgroup\@href {#1}{\urlprefix }}%
\providecommand \urlprefix  [0]{URL }%
\providecommand \Eprint [0]{\href }%
\providecommand \doibase [0]{https://doi.org/}%
\providecommand \selectlanguage [0]{\@gobble}%
\providecommand \bibinfo  [0]{\@secondoftwo}%
\providecommand \bibfield  [0]{\@secondoftwo}%
\providecommand \translation [1]{[#1]}%
\providecommand \BibitemOpen [0]{}%
\providecommand \bibitemStop [0]{}%
\providecommand \bibitemNoStop [0]{.\EOS\space}%
\providecommand \EOS [0]{\spacefactor3000\relax}%
\providecommand \BibitemShut  [1]{\csname bibitem#1\endcsname}%
\let\auto@bib@innerbib\@empty
\bibitem [{\citenamefont {Haeberlen}\ and\ \citenamefont
  {Waugh}(1968)}]{Haeberlen68}%
  \BibitemOpen
  \bibfield  {author} {\bibinfo {author} {\bibfnamefont {U.}~\bibnamefont
  {Haeberlen}}\ and\ \bibinfo {author} {\bibfnamefont {J.}~\bibnamefont
  {Waugh}},\ }\bibfield  {title} {\bibinfo {title} {Coherent averaging effects
  in magnetic resonance},\ }\href {https://doi.org/10.1103/PhysRev.175.453}
  {\bibfield  {journal} {\bibinfo  {journal} {Phys. Rev.}\ }\textbf {\bibinfo
  {volume} {175}},\ \bibinfo {pages} {453} (\bibinfo {year}
  {1968})}\BibitemShut {NoStop}%
\bibitem [{\citenamefont {Waugh}\ \emph {et~al.}(1968)\citenamefont {Waugh},
  \citenamefont {Huber},\ and\ \citenamefont {Haeberlen}}]{Waugh68}%
  \BibitemOpen
  \bibfield  {author} {\bibinfo {author} {\bibfnamefont {J.}~\bibnamefont
  {Waugh}}, \bibinfo {author} {\bibfnamefont {L.}~\bibnamefont {Huber}},\ and\
  \bibinfo {author} {\bibfnamefont {U.}~\bibnamefont {Haeberlen}},\ }\bibfield
  {title} {\bibinfo {title} {Approach to high-resolution nmr in solids},\
  }\href {https://doi.org/10.1103/PhysRevLett.20.180} {\bibfield  {journal}
  {\bibinfo  {journal} {Phys. Rev. Lett.}\ }\textbf {\bibinfo {volume} {20}},\
  \bibinfo {pages} {180} (\bibinfo {year} {1968})}\BibitemShut {NoStop}%
\bibitem [{\citenamefont {Cory}\ \emph {et~al.}(1990)\citenamefont {Cory},
  \citenamefont {Miller},\ and\ \citenamefont {Garroway}}]{Cory90}%
  \BibitemOpen
  \bibfield  {author} {\bibinfo {author} {\bibfnamefont {D.~G.}\ \bibnamefont
  {Cory}}, \bibinfo {author} {\bibfnamefont {J.~B.}\ \bibnamefont {Miller}},\
  and\ \bibinfo {author} {\bibfnamefont {A.~N.}\ \bibnamefont {Garroway}},\
  }\bibfield  {title} {\bibinfo {title} {Time-suspension multiple-pulse
  sequences : Applications to solid-state imaging.},\ }\href
  {https://www.sciencedirect.com/science/article/pii/002223649090380R}
  {\bibfield  {journal} {\bibinfo  {journal} {J. Mag. Res.}\ }\textbf {\bibinfo
  {volume} {90}},\ \bibinfo {pages} {205} (\bibinfo {year} {1990})}\BibitemShut
  {NoStop}%
\bibitem [{\citenamefont {Cho}\ \emph {et~al.}(1985)\citenamefont {Cho},
  \citenamefont {Lee}, \citenamefont {Shykind},\ and\ \citenamefont
  {Weitekamp}}]{PhysRevLett.55.1923}%
  \BibitemOpen
  \bibfield  {author} {\bibinfo {author} {\bibfnamefont {H.~M.}\ \bibnamefont
  {Cho}}, \bibinfo {author} {\bibfnamefont {C.~J.}\ \bibnamefont {Lee}},
  \bibinfo {author} {\bibfnamefont {D.~N.}\ \bibnamefont {Shykind}},\ and\
  \bibinfo {author} {\bibfnamefont {D.~P.}\ \bibnamefont {Weitekamp}},\
  }\bibfield  {title} {\bibinfo {title} {Nutation sequences for magnetic
  resonance imaging in solids},\ }\href
  {https://doi.org/10.1103/PhysRevLett.55.1923} {\bibfield  {journal} {\bibinfo
   {journal} {Phys. Rev. Lett.}\ }\textbf {\bibinfo {volume} {55}},\ \bibinfo
  {pages} {1923} (\bibinfo {year} {1985})}\BibitemShut {NoStop}%
\bibitem [{\citenamefont {Vandersypen}\ and\ \citenamefont
  {Chuang}(2005)}]{RevModPhys.76.1037}%
  \BibitemOpen
  \bibfield  {author} {\bibinfo {author} {\bibfnamefont {L.~M.~K.}\
  \bibnamefont {Vandersypen}}\ and\ \bibinfo {author} {\bibfnamefont {I.~L.}\
  \bibnamefont {Chuang}},\ }\bibfield  {title} {\bibinfo {title} {Nmr
  techniques for quantum control and computation},\ }\href
  {https://doi.org/10.1103/RevModPhys.76.1037} {\bibfield  {journal} {\bibinfo
  {journal} {Rev. Mod. Phys.}\ }\textbf {\bibinfo {volume} {76}},\ \bibinfo
  {pages} {1037} (\bibinfo {year} {2005})}\BibitemShut {NoStop}%
\bibitem [{\citenamefont {Tycko}\ and\ \citenamefont
  {Dabbagh}(1990)}]{Tycko90}%
  \BibitemOpen
  \bibfield  {author} {\bibinfo {author} {\bibfnamefont {R.}~\bibnamefont
  {Tycko}}\ and\ \bibinfo {author} {\bibfnamefont {G.}~\bibnamefont
  {Dabbagh}},\ }\bibfield  {title} {\bibinfo {title} {Measurement of nuclear
  magnetic dipole—dipole couplings in magic angle spinning nmr},\ }\href
  {https://doi.org/https://doi.org/10.1016/0009-2614(90)87235-J} {\bibfield
  {journal} {\bibinfo  {journal} {Chemical Physics Letters}\ }\textbf {\bibinfo
  {volume} {173}},\ \bibinfo {pages} {461 } (\bibinfo {year}
  {1990})}\BibitemShut {NoStop}%
\bibitem [{\citenamefont {Boutis}\ \emph {et~al.}(2003)\citenamefont {Boutis},
  \citenamefont {Cappellaro}, \citenamefont {Cho}, \citenamefont {Ramanathan},\
  and\ \citenamefont {Cory}}]{Boutis03}%
  \BibitemOpen
  \bibfield  {author} {\bibinfo {author} {\bibfnamefont {G.~S.}\ \bibnamefont
  {Boutis}}, \bibinfo {author} {\bibfnamefont {P.}~\bibnamefont {Cappellaro}},
  \bibinfo {author} {\bibfnamefont {H.}~\bibnamefont {Cho}}, \bibinfo {author}
  {\bibfnamefont {C.}~\bibnamefont {Ramanathan}},\ and\ \bibinfo {author}
  {\bibfnamefont {D.~G.}\ \bibnamefont {Cory}},\ }\bibfield  {title} {\bibinfo
  {title} {Pulse error compensating symmetric magic-echo trains},\ }\href
  {https://doi.org/10.1016/S1090-7807(03)00010-7} {\bibfield  {journal}
  {\bibinfo  {journal} {J. Mag. Res.}\ }\textbf {\bibinfo {volume} {161}},\
  \bibinfo {pages} {132} (\bibinfo {year} {2003})}\BibitemShut {NoStop}%
\bibitem [{\citenamefont {Mansfield}\ \emph {et~al.}(1973)\citenamefont
  {Mansfield}, \citenamefont {Orchard}, \citenamefont {Stalker},\ and\
  \citenamefont {Richards}}]{Mansfield73}%
  \BibitemOpen
  \bibfield  {author} {\bibinfo {author} {\bibfnamefont {P.}~\bibnamefont
  {Mansfield}}, \bibinfo {author} {\bibfnamefont {M.~J.}\ \bibnamefont
  {Orchard}}, \bibinfo {author} {\bibfnamefont {D.~C.}\ \bibnamefont
  {Stalker}},\ and\ \bibinfo {author} {\bibfnamefont {K.~H.~B.}\ \bibnamefont
  {Richards}},\ }\bibfield  {title} {\bibinfo {title} {Symmetrized multipulse
  nuclear-magnetic-resonance experiments in solids: Measurement of the
  chemical-shift shielding tensor in some compounds},\ }\href
  {https://doi.org/10.1103/PhysRevB.7.90} {\bibfield  {journal} {\bibinfo
  {journal} {Phys. Rev. B}\ }\textbf {\bibinfo {volume} {7}},\ \bibinfo {pages}
  {90} (\bibinfo {year} {1973})}\BibitemShut {NoStop}%
\bibitem [{\citenamefont {Rhim}\ \emph {et~al.}(1973)\citenamefont {Rhim},
  \citenamefont {Elleman},\ and\ \citenamefont {Vaughan}}]{Rhim73}%
  \BibitemOpen
  \bibfield  {author} {\bibinfo {author} {\bibfnamefont {W.-K.}\ \bibnamefont
  {Rhim}}, \bibinfo {author} {\bibfnamefont {D.~D.}\ \bibnamefont {Elleman}},\
  and\ \bibinfo {author} {\bibfnamefont {R.~W.}\ \bibnamefont {Vaughan}},\
  }\bibfield  {title} {\bibinfo {title} {Enhanced resolution for solid state
  nmr},\ }\href {https://doi.org/10.1063/1.1679423} {\bibfield  {journal}
  {\bibinfo  {journal} {J. Chem. Phys.}\ }\textbf {\bibinfo {volume} {58}},\
  \bibinfo {pages} {1772} (\bibinfo {year} {1973})}\BibitemShut {NoStop}%
\bibitem [{\citenamefont {Connor}\ \emph {et~al.}(1985)\citenamefont {Connor},
  \citenamefont {Naito}, \citenamefont {Takegoshi},\ and\ \citenamefont
  {McDowell}}]{TakegoshiCPL85}%
  \BibitemOpen
  \bibfield  {author} {\bibinfo {author} {\bibfnamefont {C.}~\bibnamefont
  {Connor}}, \bibinfo {author} {\bibfnamefont {A.}~\bibnamefont {Naito}},
  \bibinfo {author} {\bibfnamefont {K.}~\bibnamefont {Takegoshi}},\ and\
  \bibinfo {author} {\bibfnamefont {C.}~\bibnamefont {McDowell}},\ }\bibfield
  {title} {\bibinfo {title} {Intermolecular spin-diffusion between 31p nuclei
  in a single crystal of dipotassium $\alpha$-d-glucose-1-phosphate dihydrate;
  a 1-d analogue of the 2-d exchange nmr experiment},\ }\href
  {https://doi.org/https://doi.org/10.1016/0009-2614(85)80929-5} {\bibfield
  {journal} {\bibinfo  {journal} {Chemical Physics Letters}\ }\textbf {\bibinfo
  {volume} {113}},\ \bibinfo {pages} {123 } (\bibinfo {year}
  {1985})}\BibitemShut {NoStop}%
\bibitem [{\citenamefont {Tycko}(1999)}]{Tycko99}%
  \BibitemOpen
  \bibfield  {author} {\bibinfo {author} {\bibfnamefont {R.}~\bibnamefont
  {Tycko}},\ }\bibfield  {title} {\bibinfo {title} {Selection rules for
  multiple quantum \{NMR\} excitation in solids: Derivation from time-reversal
  symmetry and comparison with simulations and 13c \{NMR\} experiments},\
  }\href {https://doi.org/10.1006/jmre.1999.1776} {\bibfield  {journal}
  {\bibinfo  {journal} {J. Mag. Res.}\ }\textbf {\bibinfo {volume} {139}},\
  \bibinfo {pages} {302 } (\bibinfo {year} {1999})}\BibitemShut {NoStop}%
\bibitem [{\citenamefont {Khaneja}\ \emph {et~al.}(2005)\citenamefont
  {Khaneja}, \citenamefont {Reiss}, \citenamefont {Kehlet}, \citenamefont
  {Schulte-Herbuggen},\ and\ \citenamefont {Glaser}}]{Khaneja05}%
  \BibitemOpen
  \bibfield  {author} {\bibinfo {author} {\bibfnamefont {N.}~\bibnamefont
  {Khaneja}}, \bibinfo {author} {\bibfnamefont {T.}~\bibnamefont {Reiss}},
  \bibinfo {author} {\bibfnamefont {C.}~\bibnamefont {Kehlet}}, \bibinfo
  {author} {\bibfnamefont {T.}~\bibnamefont {Schulte-Herbuggen}},\ and\
  \bibinfo {author} {\bibfnamefont {S.}~\bibnamefont {Glaser}},\ }\bibfield
  {title} {\bibinfo {title} {Optimal control of coupled spin dynamics: design
  of nmr pulse sequences by gradient ascent algorithms},\ }\href
  {https://doi.org/10.1016/j.jmr.2004.11.004} {\bibfield  {journal} {\bibinfo
  {journal} {J. Mag. Res.}\ }\textbf {\bibinfo {volume} {172}},\ \bibinfo
  {pages} {296} (\bibinfo {year} {2005})}\BibitemShut {NoStop}%
\bibitem [{\citenamefont {Doria}\ \emph {et~al.}(2011)\citenamefont {Doria},
  \citenamefont {Calarco},\ and\ \citenamefont {Montangero}}]{Doria11}%
  \BibitemOpen
  \bibfield  {author} {\bibinfo {author} {\bibfnamefont {P.}~\bibnamefont
  {Doria}}, \bibinfo {author} {\bibfnamefont {T.}~\bibnamefont {Calarco}},\
  and\ \bibinfo {author} {\bibfnamefont {S.}~\bibnamefont {Montangero}},\
  }\bibfield  {title} {\bibinfo {title} {Optimal control technique for
  many-body quantum dynamics},\ }\href
  {https://doi.org/10.1103/PhysRevLett.106.190501} {\bibfield  {journal}
  {\bibinfo  {journal} {Phys. Rev. Lett.}\ }\textbf {\bibinfo {volume} {106}},\
  \bibinfo {pages} {190501} (\bibinfo {year} {2011})}\BibitemShut {NoStop}%
\bibitem [{\citenamefont {Lloyd}\ and\ \citenamefont
  {Montangero}(2014)}]{Lloyd14l}%
  \BibitemOpen
  \bibfield  {author} {\bibinfo {author} {\bibfnamefont {S.}~\bibnamefont
  {Lloyd}}\ and\ \bibinfo {author} {\bibfnamefont {S.}~\bibnamefont
  {Montangero}},\ }\bibfield  {title} {\bibinfo {title} {Information
  theoretical analysis of quantum optimal control},\ }\href
  {https://doi.org/10.1103/PhysRevLett.113.010502} {\bibfield  {journal}
  {\bibinfo  {journal} {Phys. Rev. Lett.}\ }\textbf {\bibinfo {volume} {113}},\
  \bibinfo {pages} {010502} (\bibinfo {year} {2014})}\BibitemShut {NoStop}%
\bibitem [{\citenamefont {Sakellariou}\ \emph {et~al.}(2000)\citenamefont
  {Sakellariou}, \citenamefont {Lesage}, \citenamefont {Hodgkinson},\ and\
  \citenamefont {Emsley}}]{SAKELLARIOU2000253}%
  \BibitemOpen
  \bibfield  {author} {\bibinfo {author} {\bibfnamefont {D.}~\bibnamefont
  {Sakellariou}}, \bibinfo {author} {\bibfnamefont {A.}~\bibnamefont {Lesage}},
  \bibinfo {author} {\bibfnamefont {P.}~\bibnamefont {Hodgkinson}},\ and\
  \bibinfo {author} {\bibfnamefont {L.}~\bibnamefont {Emsley}},\ }\bibfield
  {title} {\bibinfo {title} {Homonuclear dipolar decoupling in solid-state nmr
  using continuous phase modulation},\ }\href
  {https://doi.org/https://doi.org/10.1016/S0009-2614(00)00127-5} {\bibfield
  {journal} {\bibinfo  {journal} {Chemical Physics Letters}\ }\textbf {\bibinfo
  {volume} {319}},\ \bibinfo {pages} {253 } (\bibinfo {year}
  {2000})}\BibitemShut {NoStop}%
\bibitem [{\citenamefont {Silver}\ \emph {et~al.}(2016)\citenamefont {Silver},
  \citenamefont {Huang}, \citenamefont {Maddison}, \citenamefont {Guez},
  \citenamefont {Sifre}, \citenamefont {van~den Driessche}, \citenamefont
  {Schrittwieser}, \citenamefont {Antonoglou}, \citenamefont {Panneershelvam},
  \citenamefont {Lanctot}, \citenamefont {Dieleman}, \citenamefont {Grewe},
  \citenamefont {Nham}, \citenamefont {Kalchbrenner}, \citenamefont
  {Sutskever}, \citenamefont {Lillicrap}, \citenamefont {Leach}, \citenamefont
  {Kavukcuoglu}, \citenamefont {Graepel},\ and\ \citenamefont
  {Hassabis}}]{silver2016}%
  \BibitemOpen
  \bibfield  {author} {\bibinfo {author} {\bibfnamefont {D.}~\bibnamefont
  {Silver}}, \bibinfo {author} {\bibfnamefont {A.}~\bibnamefont {Huang}},
  \bibinfo {author} {\bibfnamefont {C.~J.}\ \bibnamefont {Maddison}}, \bibinfo
  {author} {\bibfnamefont {A.}~\bibnamefont {Guez}}, \bibinfo {author}
  {\bibfnamefont {L.}~\bibnamefont {Sifre}}, \bibinfo {author} {\bibfnamefont
  {G.}~\bibnamefont {van~den Driessche}}, \bibinfo {author} {\bibfnamefont
  {J.}~\bibnamefont {Schrittwieser}}, \bibinfo {author} {\bibfnamefont
  {I.}~\bibnamefont {Antonoglou}}, \bibinfo {author} {\bibfnamefont
  {V.}~\bibnamefont {Panneershelvam}}, \bibinfo {author} {\bibfnamefont
  {M.}~\bibnamefont {Lanctot}}, \bibinfo {author} {\bibfnamefont
  {S.}~\bibnamefont {Dieleman}}, \bibinfo {author} {\bibfnamefont
  {D.}~\bibnamefont {Grewe}}, \bibinfo {author} {\bibfnamefont
  {J.}~\bibnamefont {Nham}}, \bibinfo {author} {\bibfnamefont {N.}~\bibnamefont
  {Kalchbrenner}}, \bibinfo {author} {\bibfnamefont {I.}~\bibnamefont
  {Sutskever}}, \bibinfo {author} {\bibfnamefont {T.}~\bibnamefont
  {Lillicrap}}, \bibinfo {author} {\bibfnamefont {M.}~\bibnamefont {Leach}},
  \bibinfo {author} {\bibfnamefont {K.}~\bibnamefont {Kavukcuoglu}}, \bibinfo
  {author} {\bibfnamefont {T.}~\bibnamefont {Graepel}},\ and\ \bibinfo {author}
  {\bibfnamefont {D.}~\bibnamefont {Hassabis}},\ }\bibfield  {title} {\bibinfo
  {title} {Mastering the game of go with deep neural networks and tree
  search},\ }\href {https://doi.org/10.1038/nature16961} {\bibfield  {journal}
  {\bibinfo  {journal} {Nature}\ }\textbf {\bibinfo {volume} {529}},\ \bibinfo
  {pages} {484} (\bibinfo {year} {2016})}\BibitemShut {NoStop}%
\bibitem [{\citenamefont {Vinyals}\ \emph {et~al.}(2019)\citenamefont
  {Vinyals}, \citenamefont {Babuschkin}, \citenamefont {Czarnecki},
  \citenamefont {Mathieu}, \citenamefont {Dudzik}, \citenamefont {Chung},
  \citenamefont {Choi}, \citenamefont {Powell}, \citenamefont {Ewalds},
  \citenamefont {Georgiev} \emph {et~al.}}]{Vinyals19}%
  \BibitemOpen
  \bibfield  {author} {\bibinfo {author} {\bibfnamefont {O.}~\bibnamefont
  {Vinyals}}, \bibinfo {author} {\bibfnamefont {I.}~\bibnamefont {Babuschkin}},
  \bibinfo {author} {\bibfnamefont {W.~M.}\ \bibnamefont {Czarnecki}}, \bibinfo
  {author} {\bibfnamefont {M.}~\bibnamefont {Mathieu}}, \bibinfo {author}
  {\bibfnamefont {A.}~\bibnamefont {Dudzik}}, \bibinfo {author} {\bibfnamefont
  {J.}~\bibnamefont {Chung}}, \bibinfo {author} {\bibfnamefont {D.~H.}\
  \bibnamefont {Choi}}, \bibinfo {author} {\bibfnamefont {R.}~\bibnamefont
  {Powell}}, \bibinfo {author} {\bibfnamefont {T.}~\bibnamefont {Ewalds}},
  \bibinfo {author} {\bibfnamefont {P.}~\bibnamefont {Georgiev}}, \emph
  {et~al.},\ }\bibfield  {title} {\bibinfo {title} {Grandmaster level in
  starcraft ii using multi-agent reinforcement learning},\ }\href
  {https://www.nature.com/articles/s41586-019-1724-z?} {\bibfield  {journal}
  {\bibinfo  {journal} {Nature}\ }\textbf {\bibinfo {volume} {575}},\ \bibinfo
  {pages} {350} (\bibinfo {year} {2019})}\BibitemShut {NoStop}%
\bibitem [{\citenamefont {Sutton}\ and\ \citenamefont
  {Barto}(2018)}]{sutton2018reinforcement}%
  \BibitemOpen
  \bibfield  {author} {\bibinfo {author} {\bibfnamefont {R.}~\bibnamefont
  {Sutton}}\ and\ \bibinfo {author} {\bibfnamefont {A.}~\bibnamefont {Barto}},\
  }\href {https://books.google.co.jp/books?id=sWV0DwAAQBAJ} {\emph {\bibinfo
  {title} {Reinforcement Learning: An Introduction}}},\ Adaptive Computation
  and Machine Learning series\ (\bibinfo  {publisher} {MIT Press},\ \bibinfo
  {year} {2018})\BibitemShut {NoStop}%
\bibitem [{\citenamefont {LeCun}\ \emph {et~al.}(2015)\citenamefont {LeCun},
  \citenamefont {Bengio},\ and\ \citenamefont {Hinton}}]{lecun2015}%
  \BibitemOpen
  \bibfield  {author} {\bibinfo {author} {\bibfnamefont {Y.}~\bibnamefont
  {LeCun}}, \bibinfo {author} {\bibfnamefont {Y.}~\bibnamefont {Bengio}},\ and\
  \bibinfo {author} {\bibfnamefont {G.}~\bibnamefont {Hinton}},\ }\bibfield
  {title} {\bibinfo {title} {Deep learning},\ }\href
  {https://doi.org/10.1038/nature14539} {\bibfield  {journal} {\bibinfo
  {journal} {Nature}\ }\textbf {\bibinfo {volume} {521}},\ \bibinfo {pages}
  {436} (\bibinfo {year} {2015})}\BibitemShut {NoStop}%
\bibitem [{\citenamefont {Bukov}\ \emph {et~al.}(2018)\citenamefont {Bukov},
  \citenamefont {Day}, \citenamefont {Sels}, \citenamefont {Weinberg},
  \citenamefont {Polkovnikov},\ and\ \citenamefont {Mehta}}]{Bukov18}%
  \BibitemOpen
  \bibfield  {author} {\bibinfo {author} {\bibfnamefont {M.}~\bibnamefont
  {Bukov}}, \bibinfo {author} {\bibfnamefont {A.~G.~R.}\ \bibnamefont {Day}},
  \bibinfo {author} {\bibfnamefont {D.}~\bibnamefont {Sels}}, \bibinfo {author}
  {\bibfnamefont {P.}~\bibnamefont {Weinberg}}, \bibinfo {author}
  {\bibfnamefont {A.}~\bibnamefont {Polkovnikov}},\ and\ \bibinfo {author}
  {\bibfnamefont {P.}~\bibnamefont {Mehta}},\ }\bibfield  {title} {\bibinfo
  {title} {Reinforcement learning in different phases of quantum control},\
  }\href {https://doi.org/10.1103/PhysRevX.8.031086} {\bibfield  {journal}
  {\bibinfo  {journal} {Phys. Rev. X}\ }\textbf {\bibinfo {volume} {8}},\
  \bibinfo {pages} {031086} (\bibinfo {year} {2018})}\BibitemShut {NoStop}%
\bibitem [{\citenamefont {Zhang}\ \emph {et~al.}(2019)\citenamefont {Zhang},
  \citenamefont {Wei}, \citenamefont {Asad}, \citenamefont {Yang},\ and\
  \citenamefont {Wang}}]{Zhang19}%
  \BibitemOpen
  \bibfield  {author} {\bibinfo {author} {\bibfnamefont {X.-M.}\ \bibnamefont
  {Zhang}}, \bibinfo {author} {\bibfnamefont {Z.}~\bibnamefont {Wei}}, \bibinfo
  {author} {\bibfnamefont {R.}~\bibnamefont {Asad}}, \bibinfo {author}
  {\bibfnamefont {X.-C.}\ \bibnamefont {Yang}},\ and\ \bibinfo {author}
  {\bibfnamefont {X.}~\bibnamefont {Wang}},\ }\bibfield  {title} {\bibinfo
  {title} {When does reinforcement learning stand out in quantum control? a
  comparative study on state preparation},\ }\href
  {https://doi.org/10.1038/s41534-019-0201-8} {\bibfield  {journal} {\bibinfo
  {journal} {npj Quantum Information}\ }\textbf {\bibinfo {volume} {5}},\
  \bibinfo {pages} {85} (\bibinfo {year} {2019})}\BibitemShut {NoStop}%
\bibitem [{\citenamefont {{Chen}}\ \emph {et~al.}(2014)\citenamefont {{Chen}},
  \citenamefont {{Dong}}, \citenamefont {{Li}}, \citenamefont {{Chu}},\ and\
  \citenamefont {{Tarn}}}]{Chen14}%
  \BibitemOpen
  \bibfield  {author} {\bibinfo {author} {\bibfnamefont {C.}~\bibnamefont
  {{Chen}}}, \bibinfo {author} {\bibfnamefont {D.}~\bibnamefont {{Dong}}},
  \bibinfo {author} {\bibfnamefont {H.}~\bibnamefont {{Li}}}, \bibinfo {author}
  {\bibfnamefont {J.}~\bibnamefont {{Chu}}},\ and\ \bibinfo {author}
  {\bibfnamefont {T.}~\bibnamefont {{Tarn}}},\ }\bibfield  {title} {\bibinfo
  {title} {Fidelity-based probabilistic q-learning for control of quantum
  systems},\ }\href {https://doi.org/10.1109/TNNLS.2013.2283574} {\bibfield
  {journal} {\bibinfo  {journal} {IEEE Transactions on Neural Networks and
  Learning Systems}\ }\textbf {\bibinfo {volume} {25}},\ \bibinfo {pages} {920}
  (\bibinfo {year} {2014})}\BibitemShut {NoStop}%
\bibitem [{\citenamefont {Chen}\ and\ \citenamefont {Xue}(2019)}]{Chen19a}%
  \BibitemOpen
  \bibfield  {author} {\bibinfo {author} {\bibfnamefont {J.-J.}\ \bibnamefont
  {Chen}}\ and\ \bibinfo {author} {\bibfnamefont {M.}~\bibnamefont {Xue}},\
  }\href@noop {} {\bibinfo {title} {Manipulation of spin dynamics by deep
  reinforcement learning agent}} (\bibinfo {year} {2019}),\ \Eprint
  {https://arxiv.org/abs/1901.08748} {arXiv:1901.08748 [quant-ph]} \BibitemShut
  {NoStop}%
\bibitem [{\citenamefont {Albarr\'an-Arriagada}\ \emph
  {et~al.}(2018)\citenamefont {Albarr\'an-Arriagada}, \citenamefont {Retamal},
  \citenamefont {Solano},\ and\ \citenamefont {Lamata}}]{Albarran-Arriagada18}%
  \BibitemOpen
  \bibfield  {author} {\bibinfo {author} {\bibfnamefont {F.}~\bibnamefont
  {Albarr\'an-Arriagada}}, \bibinfo {author} {\bibfnamefont {J.~C.}\
  \bibnamefont {Retamal}}, \bibinfo {author} {\bibfnamefont {E.}~\bibnamefont
  {Solano}},\ and\ \bibinfo {author} {\bibfnamefont {L.}~\bibnamefont
  {Lamata}},\ }\bibfield  {title} {\bibinfo {title} {Measurement-based
  adaptation protocol with quantum reinforcement learning},\ }\href
  {https://doi.org/10.1103/PhysRevA.98.042315} {\bibfield  {journal} {\bibinfo
  {journal} {Phys. Rev. A}\ }\textbf {\bibinfo {volume} {98}},\ \bibinfo
  {pages} {042315} (\bibinfo {year} {2018})}\BibitemShut {NoStop}%
\bibitem [{\citenamefont {Mackeprang}\ \emph {et~al.}(2020)\citenamefont
  {Mackeprang}, \citenamefont {Dasari},\ and\ \citenamefont
  {Wrachtrup}}]{Mackeprang20}%
  \BibitemOpen
  \bibfield  {author} {\bibinfo {author} {\bibfnamefont {J.}~\bibnamefont
  {Mackeprang}}, \bibinfo {author} {\bibfnamefont {D.~B.~R.}\ \bibnamefont
  {Dasari}},\ and\ \bibinfo {author} {\bibfnamefont {J.}~\bibnamefont
  {Wrachtrup}},\ }\bibfield  {title} {\bibinfo {title} {A reinforcement
  learning approach for quantum state engineering},\ }\href
  {https://link.springer.com/article/10.1007%2Fs42484-020-00016-8#citeas}
  {\bibfield  {journal} {\bibinfo  {journal} {Quantum Machine Intelligence}\
  }\textbf {\bibinfo {volume} {2}},\ \bibinfo {pages} {1} (\bibinfo {year}
  {2020})}\BibitemShut {NoStop}%
\bibitem [{\citenamefont {Niu}\ \emph {et~al.}(2019)\citenamefont {Niu},
  \citenamefont {Boixo}, \citenamefont {Smelyanskiy},\ and\ \citenamefont
  {Neven}}]{Niu19}%
  \BibitemOpen
  \bibfield  {author} {\bibinfo {author} {\bibfnamefont {M.~Y.}\ \bibnamefont
  {Niu}}, \bibinfo {author} {\bibfnamefont {S.}~\bibnamefont {Boixo}}, \bibinfo
  {author} {\bibfnamefont {V.~N.}\ \bibnamefont {Smelyanskiy}},\ and\ \bibinfo
  {author} {\bibfnamefont {H.}~\bibnamefont {Neven}},\ }\bibfield  {title}
  {\bibinfo {title} {Universal quantum control through deep reinforcement
  learning},\ }\href {https://doi.org/10.1038/s41534-019-0141-3} {\bibfield
  {journal} {\bibinfo  {journal} {npj Quantum Information}\ }\textbf {\bibinfo
  {volume} {5}},\ \bibinfo {pages} {33} (\bibinfo {year} {2019})}\BibitemShut
  {NoStop}%
\bibitem [{\citenamefont {Dalgaard}\ \emph {et~al.}(2020)\citenamefont
  {Dalgaard}, \citenamefont {Motzoi}, \citenamefont {S{\o}rensen},\ and\
  \citenamefont {Sherson}}]{Dalgaard20}%
  \BibitemOpen
  \bibfield  {author} {\bibinfo {author} {\bibfnamefont {M.}~\bibnamefont
  {Dalgaard}}, \bibinfo {author} {\bibfnamefont {F.}~\bibnamefont {Motzoi}},
  \bibinfo {author} {\bibfnamefont {J.~J.}\ \bibnamefont {S{\o}rensen}},\ and\
  \bibinfo {author} {\bibfnamefont {J.}~\bibnamefont {Sherson}},\ }\bibfield
  {title} {\bibinfo {title} {Global optimization of quantum dynamics with
  alphazero deep exploration},\ }\href
  {https://www.nature.com/articles/s41534-019-0241-0} {\bibfield  {journal}
  {\bibinfo  {journal} {npj Quantum Information}\ }\textbf {\bibinfo {volume}
  {6}},\ \bibinfo {pages} {6} (\bibinfo {year} {2020})}\BibitemShut {NoStop}%
\bibitem [{\citenamefont {Daraeizadeh}\ \emph {et~al.}(2020)\citenamefont
  {Daraeizadeh}, \citenamefont {Premaratne}, \citenamefont {Khammassi},
  \citenamefont {Song}, \citenamefont {Perkowski},\ and\ \citenamefont
  {Matsuura}}]{Daraeizadeh20}%
  \BibitemOpen
  \bibfield  {author} {\bibinfo {author} {\bibfnamefont {S.}~\bibnamefont
  {Daraeizadeh}}, \bibinfo {author} {\bibfnamefont {S.~P.}\ \bibnamefont
  {Premaratne}}, \bibinfo {author} {\bibfnamefont {N.}~\bibnamefont
  {Khammassi}}, \bibinfo {author} {\bibfnamefont {X.}~\bibnamefont {Song}},
  \bibinfo {author} {\bibfnamefont {M.}~\bibnamefont {Perkowski}},\ and\
  \bibinfo {author} {\bibfnamefont {A.~Y.}\ \bibnamefont {Matsuura}},\
  }\bibfield  {title} {\bibinfo {title} {Machine-learning-based three-qubit
  gate design for the toffoli gate and parity check in transmon systems},\
  }\href {https://doi.org/10.1103/PhysRevA.102.012601} {\bibfield  {journal}
  {\bibinfo  {journal} {Phys. Rev. A}\ }\textbf {\bibinfo {volume} {102}},\
  \bibinfo {pages} {012601} (\bibinfo {year} {2020})}\BibitemShut {NoStop}%
\bibitem [{\citenamefont {Walln\"ofer}\ \emph {et~al.}(2020)\citenamefont
  {Walln\"ofer}, \citenamefont {Melnikov}, \citenamefont {D\"ur},\ and\
  \citenamefont {Briegel}}]{Wallnofer20}%
  \BibitemOpen
  \bibfield  {author} {\bibinfo {author} {\bibfnamefont {J.}~\bibnamefont
  {Walln\"ofer}}, \bibinfo {author} {\bibfnamefont {A.~A.}\ \bibnamefont
  {Melnikov}}, \bibinfo {author} {\bibfnamefont {W.}~\bibnamefont {D\"ur}},\
  and\ \bibinfo {author} {\bibfnamefont {H.~J.}\ \bibnamefont {Briegel}},\
  }\bibfield  {title} {\bibinfo {title} {Machine learning for long-distance
  quantum communication},\ }\href {https://doi.org/10.1103/PRXQuantum.1.010301}
  {\bibfield  {journal} {\bibinfo  {journal} {PRX Quantum}\ }\textbf {\bibinfo
  {volume} {1}},\ \bibinfo {pages} {010301} (\bibinfo {year}
  {2020})}\BibitemShut {NoStop}%
\bibitem [{\citenamefont {F\"osel}\ \emph {et~al.}(2018)\citenamefont
  {F\"osel}, \citenamefont {Tighineanu}, \citenamefont {Weiss},\ and\
  \citenamefont {Marquardt}}]{Fosel18}%
  \BibitemOpen
  \bibfield  {author} {\bibinfo {author} {\bibfnamefont {T.}~\bibnamefont
  {F\"osel}}, \bibinfo {author} {\bibfnamefont {P.}~\bibnamefont {Tighineanu}},
  \bibinfo {author} {\bibfnamefont {T.}~\bibnamefont {Weiss}},\ and\ \bibinfo
  {author} {\bibfnamefont {F.}~\bibnamefont {Marquardt}},\ }\bibfield  {title}
  {\bibinfo {title} {Reinforcement learning with neural networks for quantum
  feedback},\ }\href {https://doi.org/10.1103/PhysRevX.8.031084} {\bibfield
  {journal} {\bibinfo  {journal} {Phys. Rev. X}\ }\textbf {\bibinfo {volume}
  {8}},\ \bibinfo {pages} {031084} (\bibinfo {year} {2018})}\BibitemShut
  {NoStop}%
\bibitem [{\citenamefont {Nautrup}\ \emph {et~al.}(2019)\citenamefont
  {Nautrup}, \citenamefont {Delfosse}, \citenamefont {Dunjko}, \citenamefont
  {Briegel},\ and\ \citenamefont {Friis}}]{Nautrup19}%
  \BibitemOpen
  \bibfield  {author} {\bibinfo {author} {\bibfnamefont {H.~P.}\ \bibnamefont
  {Nautrup}}, \bibinfo {author} {\bibfnamefont {N.}~\bibnamefont {Delfosse}},
  \bibinfo {author} {\bibfnamefont {V.}~\bibnamefont {Dunjko}}, \bibinfo
  {author} {\bibfnamefont {H.~J.}\ \bibnamefont {Briegel}},\ and\ \bibinfo
  {author} {\bibfnamefont {N.}~\bibnamefont {Friis}},\ }\bibfield  {title}
  {\bibinfo {title} {Optimizing {Q}uantum {E}rror {C}orrection {C}odes with
  {R}einforcement {L}earning},\ }\href
  {https://doi.org/10.22331/q-2019-12-16-215} {\bibfield  {journal} {\bibinfo
  {journal} {{Quantum}}\ }\textbf {\bibinfo {volume} {3}},\ \bibinfo {pages}
  {215} (\bibinfo {year} {2019})}\BibitemShut {NoStop}%
\bibitem [{\citenamefont {Sweke}\ \emph {et~al.}(2021)\citenamefont {Sweke},
  \citenamefont {Kesselring}, \citenamefont {van Nieuwenburg},\ and\
  \citenamefont {Eisert}}]{Sweke21}%
  \BibitemOpen
  \bibfield  {author} {\bibinfo {author} {\bibfnamefont {R.}~\bibnamefont
  {Sweke}}, \bibinfo {author} {\bibfnamefont {M.~S.}\ \bibnamefont
  {Kesselring}}, \bibinfo {author} {\bibfnamefont {E.~P.~L.}\ \bibnamefont {van
  Nieuwenburg}},\ and\ \bibinfo {author} {\bibfnamefont {J.}~\bibnamefont
  {Eisert}},\ }\bibfield  {title} {\bibinfo {title} {Reinforcement learning
  decoders for fault-tolerant quantum computation},\ }\href
  {https://doi.org/10.1088/2632-2153/abc609} {\bibfield  {journal} {\bibinfo
  {journal} {Machine Learning: Science and Technology}\ }\textbf {\bibinfo
  {volume} {2}},\ \bibinfo {pages} {025005} (\bibinfo {year}
  {2021})}\BibitemShut {NoStop}%
\bibitem [{\citenamefont {Zhang}\ \emph {et~al.}(2018)\citenamefont {Zhang},
  \citenamefont {Cui}, \citenamefont {Wang},\ and\ \citenamefont
  {Yung}}]{Zhang18}%
  \BibitemOpen
  \bibfield  {author} {\bibinfo {author} {\bibfnamefont {X.-M.}\ \bibnamefont
  {Zhang}}, \bibinfo {author} {\bibfnamefont {Z.-W.}\ \bibnamefont {Cui}},
  \bibinfo {author} {\bibfnamefont {X.}~\bibnamefont {Wang}},\ and\ \bibinfo
  {author} {\bibfnamefont {M.-H.}\ \bibnamefont {Yung}},\ }\bibfield  {title}
  {\bibinfo {title} {Automatic spin-chain learning to explore the quantum speed
  limit},\ }\href {https://doi.org/10.1103/PhysRevA.97.052333} {\bibfield
  {journal} {\bibinfo  {journal} {Phys. Rev. A}\ }\textbf {\bibinfo {volume}
  {97}},\ \bibinfo {pages} {052333} (\bibinfo {year} {2018})}\BibitemShut
  {NoStop}%
\bibitem [{\citenamefont {Schuff}\ \emph {et~al.}(2020)\citenamefont {Schuff},
  \citenamefont {Fiderer},\ and\ \citenamefont {Braun}}]{Schuff20}%
  \BibitemOpen
  \bibfield  {author} {\bibinfo {author} {\bibfnamefont {J.}~\bibnamefont
  {Schuff}}, \bibinfo {author} {\bibfnamefont {L.~J.}\ \bibnamefont
  {Fiderer}},\ and\ \bibinfo {author} {\bibfnamefont {D.}~\bibnamefont
  {Braun}},\ }\bibfield  {title} {\bibinfo {title} {Improving the dynamics of
  quantum sensors with reinforcement learning},\ }\href
  {https://doi.org/10.1088/1367-2630/ab6f1f} {\bibfield  {journal} {\bibinfo
  {journal} {New Journal of Physics}\ }\textbf {\bibinfo {volume} {22}},\
  \bibinfo {pages} {035001} (\bibinfo {year} {2020})}\BibitemShut {NoStop}%
\bibitem [{\citenamefont {Ladd}\ \emph {et~al.}(2005)\citenamefont {Ladd},
  \citenamefont {Maryenko}, \citenamefont {Yamamoto}, \citenamefont {Abe},\
  and\ \citenamefont {Itoh}}]{Ladd05}%
  \BibitemOpen
  \bibfield  {author} {\bibinfo {author} {\bibfnamefont {T.~D.}\ \bibnamefont
  {Ladd}}, \bibinfo {author} {\bibfnamefont {D.}~\bibnamefont {Maryenko}},
  \bibinfo {author} {\bibfnamefont {Y.}~\bibnamefont {Yamamoto}}, \bibinfo
  {author} {\bibfnamefont {E.}~\bibnamefont {Abe}},\ and\ \bibinfo {author}
  {\bibfnamefont {K.~M.}\ \bibnamefont {Itoh}},\ }\bibfield  {title} {\bibinfo
  {title} {Coherence time of decoupled nuclear spins in silicon},\ }\href
  {https://doi.org/10.1103/PhysRevB.71.014401} {\bibfield  {journal} {\bibinfo
  {journal} {Phys. Rev. B}\ }\textbf {\bibinfo {volume} {71}},\ \bibinfo
  {pages} {014401} (\bibinfo {year} {2005})}\BibitemShut {NoStop}%
\bibitem [{\citenamefont {Wei}\ \emph {et~al.}(2018)\citenamefont {Wei},
  \citenamefont {Ramanathan},\ and\ \citenamefont {Cappellaro}}]{Wei18}%
  \BibitemOpen
  \bibfield  {author} {\bibinfo {author} {\bibfnamefont {K.~X.}\ \bibnamefont
  {Wei}}, \bibinfo {author} {\bibfnamefont {C.}~\bibnamefont {Ramanathan}},\
  and\ \bibinfo {author} {\bibfnamefont {P.}~\bibnamefont {Cappellaro}},\
  }\bibfield  {title} {\bibinfo {title} {Exploring localization in nuclear spin
  chains},\ }\href {https://doi.org/10.1103/PhysRevLett.120.070501} {\bibfield
  {journal} {\bibinfo  {journal} {Phys. Rev. Lett.}\ }\textbf {\bibinfo
  {volume} {120}},\ \bibinfo {pages} {070501} (\bibinfo {year}
  {2018})}\BibitemShut {NoStop}%
\bibitem [{\citenamefont {Wei}\ \emph {et~al.}(2019)\citenamefont {Wei},
  \citenamefont {Peng}, \citenamefont {{Shtanko}}, \citenamefont {{Marvian}},
  \citenamefont {{Lloyd}}, \citenamefont {{Ramanathan}},\ and\ \citenamefont
  {{Cappellaro}}}]{Wei19}%
  \BibitemOpen
  \bibfield  {author} {\bibinfo {author} {\bibfnamefont {K.~X.}\ \bibnamefont
  {Wei}}, \bibinfo {author} {\bibfnamefont {P.}~\bibnamefont {Peng}}, \bibinfo
  {author} {\bibfnamefont {O.}~\bibnamefont {{Shtanko}}}, \bibinfo {author}
  {\bibfnamefont {I.}~\bibnamefont {{Marvian}}}, \bibinfo {author}
  {\bibfnamefont {S.}~\bibnamefont {{Lloyd}}}, \bibinfo {author} {\bibfnamefont
  {C.}~\bibnamefont {{Ramanathan}}},\ and\ \bibinfo {author} {\bibfnamefont
  {P.}~\bibnamefont {{Cappellaro}}},\ }\bibfield  {title} {\bibinfo {title}
  {Emergent prethermalization signatures in out-of-time ordered correlations},\
  }\href {https://doi.org/10.1103/PhysRevLett.123.090605} {\bibfield  {journal}
  {\bibinfo  {journal} {Phys. Rev. Lett.}\ }\textbf {\bibinfo {volume} {123}},\
  \bibinfo {eid} {090605} (\bibinfo {year} {2019})}\BibitemShut {NoStop}%
\bibitem [{\citenamefont {Peng}\ \emph {et~al.}(2021)\citenamefont {Peng},
  \citenamefont {Yin}, \citenamefont {Huang}, \citenamefont {Ramanathan},\ and\
  \citenamefont {Cappellaro}}]{Peng21}%
  \BibitemOpen
  \bibfield  {author} {\bibinfo {author} {\bibfnamefont {P.}~\bibnamefont
  {Peng}}, \bibinfo {author} {\bibfnamefont {C.}~\bibnamefont {Yin}}, \bibinfo
  {author} {\bibfnamefont {X.}~\bibnamefont {Huang}}, \bibinfo {author}
  {\bibfnamefont {C.}~\bibnamefont {Ramanathan}},\ and\ \bibinfo {author}
  {\bibfnamefont {P.}~\bibnamefont {Cappellaro}},\ }\bibfield  {title}
  {\bibinfo {title} {Floquet prethermalization in dipolar spin chains},\
  }\href@noop {} {\bibfield  {journal} {\bibinfo  {journal} {Nature Physics}\ }
  (\bibinfo {year} {2021})}\BibitemShut {NoStop}%
\bibitem [{\citenamefont {Moore}\ and\ \citenamefont
  {Rabitz}(2012)}]{Moore12c}%
  \BibitemOpen
  \bibfield  {author} {\bibinfo {author} {\bibfnamefont {K.~W.}\ \bibnamefont
  {Moore}}\ and\ \bibinfo {author} {\bibfnamefont {H.}~\bibnamefont {Rabitz}},\
  }\bibfield  {title} {\bibinfo {title} {Exploring constrained quantum control
  landscapes},\ }\href {https://doi.org/10.1063/1.4757133} {\bibfield
  {journal} {\bibinfo  {journal} {J. Chem. Phys.}\ }\textbf {\bibinfo {volume}
  {137}},\ \bibinfo {pages} {134113} (\bibinfo {year} {2012})}\BibitemShut
  {NoStop}%
\bibitem [{\citenamefont {Such}\ \emph {et~al.}(2018)\citenamefont {Such},
  \citenamefont {Madhavan}, \citenamefont {Conti}, \citenamefont {Lehman},
  \citenamefont {Stanley},\ and\ \citenamefont {Clune}}]{Such17}%
  \BibitemOpen
  \bibfield  {author} {\bibinfo {author} {\bibfnamefont {F.~P.}\ \bibnamefont
  {Such}}, \bibinfo {author} {\bibfnamefont {V.}~\bibnamefont {Madhavan}},
  \bibinfo {author} {\bibfnamefont {E.}~\bibnamefont {Conti}}, \bibinfo
  {author} {\bibfnamefont {J.}~\bibnamefont {Lehman}}, \bibinfo {author}
  {\bibfnamefont {K.~O.}\ \bibnamefont {Stanley}},\ and\ \bibinfo {author}
  {\bibfnamefont {J.}~\bibnamefont {Clune}},\ }\href@noop {} {\bibinfo {title}
  {Deep neuroevolution: Genetic algorithms are a competitive alternative for
  training deep neural networks for reinforcement learning}} (\bibinfo {year}
  {2018}),\ \Eprint {https://arxiv.org/abs/1712.06567} {arXiv:1712.06567
  [cs.NE]} \BibitemShut {NoStop}%
\bibitem [{\citenamefont {Mansfield}(1971)}]{Mansfield71}%
  \BibitemOpen
  \bibfield  {author} {\bibinfo {author} {\bibfnamefont {P.}~\bibnamefont
  {Mansfield}},\ }\bibfield  {title} {\bibinfo {title} {Symmetrized pulse
  sequences in high resolution nmr in solids},\ }\href
  {https://doi.org/10.1088/0022-3719/4/11/020} {\bibfield  {journal} {\bibinfo
  {journal} {J. Phys. C}\ }\textbf {\bibinfo {volume} {4}},\ \bibinfo {pages}
  {1444} (\bibinfo {year} {1971})}\BibitemShut {NoStop}%
\bibitem [{\citenamefont {Haeberlen}(1976)}]{Haeberlen76}%
  \BibitemOpen
  \bibfield  {author} {\bibinfo {author} {\bibfnamefont {U.}~\bibnamefont
  {Haeberlen}},\ }\href@noop {} {\emph {\bibinfo {title} {High Resolution NMR
  in Solids: Selective Averaging}}}\ (\bibinfo  {publisher} {Academic Press
  Inc., New York},\ \bibinfo {year} {1976})\BibitemShut {NoStop}%
\bibitem [{\citenamefont {Yin}\ \emph {et~al.}(2021)\citenamefont {Yin},
  \citenamefont {Peng}, \citenamefont {Huang}, \citenamefont {Ramanathan},\
  and\ \citenamefont {Cappellaro}}]{Yin21}%
  \BibitemOpen
  \bibfield  {author} {\bibinfo {author} {\bibfnamefont {C.}~\bibnamefont
  {Yin}}, \bibinfo {author} {\bibfnamefont {P.}~\bibnamefont {Peng}}, \bibinfo
  {author} {\bibfnamefont {X.}~\bibnamefont {Huang}}, \bibinfo {author}
  {\bibfnamefont {C.}~\bibnamefont {Ramanathan}},\ and\ \bibinfo {author}
  {\bibfnamefont {P.}~\bibnamefont {Cappellaro}},\ }\bibfield  {title}
  {\bibinfo {title} {Prethermal quasiconserved observables in floquet quantum
  systems},\ }\href {https://doi.org/10.1103/PhysRevB.103.054305} {\bibfield
  {journal} {\bibinfo  {journal} {Phys. Rev. B}\ }\textbf {\bibinfo {volume}
  {103}},\ \bibinfo {pages} {054305} (\bibinfo {year} {2021})}\BibitemShut
  {NoStop}%
\bibitem [{\citenamefont {Heyl}\ \emph {et~al.}(2019)\citenamefont {Heyl},
  \citenamefont {Hauke},\ and\ \citenamefont {Zoller}}]{Heyl19}%
  \BibitemOpen
  \bibfield  {author} {\bibinfo {author} {\bibfnamefont {M.}~\bibnamefont
  {Heyl}}, \bibinfo {author} {\bibfnamefont {P.}~\bibnamefont {Hauke}},\ and\
  \bibinfo {author} {\bibfnamefont {P.}~\bibnamefont {Zoller}},\ }\bibfield
  {title} {\bibinfo {title} {Quantum localization bounds trotter errors in
  digital quantum simulation},\ }\bibfield  {journal} {\bibinfo  {journal}
  {Science Advances}\ }\textbf {\bibinfo {volume} {5}},\ \href
  {https://doi.org/10.1126/sciadv.aau8342} {10.1126/sciadv.aau8342} (\bibinfo
  {year} {2019})\BibitemShut {NoStop}%
\bibitem [{\citenamefont {Nair}\ and\ \citenamefont
  {Hinton}(2010)}]{vinod2010}%
  \BibitemOpen
  \bibfield  {author} {\bibinfo {author} {\bibfnamefont {V.}~\bibnamefont
  {Nair}}\ and\ \bibinfo {author} {\bibfnamefont {G.}~\bibnamefont {Hinton}},\
  }\bibfield  {title} {\bibinfo {title} {Rectified linear units improve
  restricted boltzmann machines}\ }(\bibinfo {year} {2010})\ pp.\ \bibinfo
  {pages} {807--814}\BibitemShut {NoStop}%
\bibitem [{Note1()}]{Note1}%
  \BibitemOpen
  \bibinfo {note} {We also varied the number of sequences and found it does not
  affect the performance of RL}\BibitemShut {NoStop}%
\bibitem [{SM-()}]{SM-RL}%
  \BibitemOpen
  \href@noop {} {}\bibinfo {howpublished} {See supplementary online material
  for details of the experimental systems, hyperparameters and additional
  results, which includes
  Refs.~\cite{Peng21,Canters69,Powles62,Abragam61}}\BibitemShut {NoStop}%
\bibitem [{\citenamefont {der Lugt}\ and\ \citenamefont
  {Caspers}(1964)}]{VanderLugt64}%
  \BibitemOpen
  \bibfield  {author} {\bibinfo {author} {\bibfnamefont {W.~V.}\ \bibnamefont
  {der Lugt}}\ and\ \bibinfo {author} {\bibfnamefont {W.}~\bibnamefont
  {Caspers}},\ }\bibfield  {title} {\bibinfo {title} {Nuclear magnetic
  resonance line shape of fluorine in apatite},\ }\href
  {https://doi.org/10.1016/0031-8914(64)90191-0} {\bibfield  {journal}
  {\bibinfo  {journal} {Physica}\ }\textbf {\bibinfo {volume} {30}},\ \bibinfo
  {pages} {1658} (\bibinfo {year} {1964})}\BibitemShut {NoStop}%
\bibitem [{Note2()}]{Note2}%
  \BibitemOpen
  \bibinfo {note} {The Floquet-Magnus expansion does not converge in a
  many-body quantum system, and thus one has to truncate the series in Eq.~\ref
  {eq:magnus} and leave a small time-dependent and/or non-local
  Hamiltonian~\cite {Abanin15}. Effects of the truncation and the small
  time-dependent Hamiltonian are only evident at very long time scale,
  therefore ignored in this paper.}\BibitemShut {Stop}%
\bibitem [{\citenamefont {Magnus}(1954)}]{Magnus54}%
  \BibitemOpen
  \bibfield  {author} {\bibinfo {author} {\bibfnamefont {W.}~\bibnamefont
  {Magnus}},\ }\bibfield  {title} {\bibinfo {title} {On the exponential
  solution of differential equations for a linear operator},\ }\href
  {https://onlinelibrary.wiley.com/doi/abs/10.1002/cpa.3160070404} {\bibfield
  {journal} {\bibinfo  {journal} {Communications on Pure and Applied
  Mathematics}\ }\textbf {\bibinfo {volume} {7}},\ \bibinfo {pages} {649}
  (\bibinfo {year} {1954})}\BibitemShut {NoStop}%
\bibitem [{\citenamefont {Blanes}\ \emph {et~al.}(2009)\citenamefont {Blanes},
  \citenamefont {Casas}, \citenamefont {Oteo},\ and\ \citenamefont
  {Ros}}]{Blanes09}%
  \BibitemOpen
  \bibfield  {author} {\bibinfo {author} {\bibfnamefont {S.}~\bibnamefont
  {Blanes}}, \bibinfo {author} {\bibfnamefont {F.}~\bibnamefont {Casas}},
  \bibinfo {author} {\bibfnamefont {J.}~\bibnamefont {Oteo}},\ and\ \bibinfo
  {author} {\bibfnamefont {J.}~\bibnamefont {Ros}},\ }\bibfield  {title}
  {\bibinfo {title} {The magnus expansion and some of its applications},\
  }\href {https://doi.org/10.1016/j.physrep.2008.11.001} {\bibfield  {journal}
  {\bibinfo  {journal} {Physics Reports}\ }\textbf {\bibinfo {volume} {470}},\
  \bibinfo {pages} {151 } (\bibinfo {year} {2009})}\BibitemShut {NoStop}%
\bibitem [{\citenamefont {Burum}\ and\ \citenamefont {Rhim}(1979)}]{Burum79}%
  \BibitemOpen
  \bibfield  {author} {\bibinfo {author} {\bibfnamefont {D.~P.}\ \bibnamefont
  {Burum}}\ and\ \bibinfo {author} {\bibfnamefont {W.~K.}\ \bibnamefont
  {Rhim}},\ }\bibfield  {title} {\bibinfo {title} {Analysis of multiple pulse
  nmr in solids. iii},\ }\href {https://doi.org/10.1063/1.438385} {\bibfield
  {journal} {\bibinfo  {journal} {J. Chem. Phys.}\ }\textbf {\bibinfo {volume}
  {71}},\ \bibinfo {pages} {944} (\bibinfo {year} {1979})}\BibitemShut
  {NoStop}%
\bibitem [{\citenamefont {Burum}\ \emph {et~al.}(1981)\citenamefont {Burum},
  \citenamefont {Linder},\ and\ \citenamefont {Ernst}}]{Burum81}%
  \BibitemOpen
  \bibfield  {author} {\bibinfo {author} {\bibfnamefont {D.}~\bibnamefont
  {Burum}}, \bibinfo {author} {\bibfnamefont {M.}~\bibnamefont {Linder}},\ and\
  \bibinfo {author} {\bibfnamefont {R.~R.}\ \bibnamefont {Ernst}},\ }\bibfield
  {title} {\bibinfo {title} {Low-power multipulse line narrowing in solid-state
  nmr},\ }\href {https://doi.org/10.1016/0022-2364(81)90200-6} {\bibfield
  {journal} {\bibinfo  {journal} {J. Mag. Res.}\ }\textbf {\bibinfo {volume}
  {44}},\ \bibinfo {pages} {173 } (\bibinfo {year} {1981})}\BibitemShut
  {NoStop}%
\bibitem [{\citenamefont {Bukov}\ \emph {et~al.}(2015)\citenamefont {Bukov},
  \citenamefont {D'Alessio},\ and\ \citenamefont {Polkovnikov}}]{Bukov15}%
  \BibitemOpen
  \bibfield  {author} {\bibinfo {author} {\bibfnamefont {M.}~\bibnamefont
  {Bukov}}, \bibinfo {author} {\bibfnamefont {L.}~\bibnamefont {D'Alessio}},\
  and\ \bibinfo {author} {\bibfnamefont {A.}~\bibnamefont {Polkovnikov}},\
  }\bibfield  {title} {\bibinfo {title} {Universal high-frequency behavior of
  periodically driven systems: from dynamical stabilization to floquet
  engineering},\ }\href {https://doi.org/10.1080/00018732.2015.1055918}
  {\bibfield  {journal} {\bibinfo  {journal} {Advances in Physics}\ }\textbf
  {\bibinfo {volume} {64}},\ \bibinfo {pages} {139} (\bibinfo {year}
  {2015})}\BibitemShut {NoStop}%
\bibitem [{\citenamefont {{Powles}}\ and\ \citenamefont
  {{Mansfield}}(1962)}]{Powles62}%
  \BibitemOpen
  \bibfield  {author} {\bibinfo {author} {\bibfnamefont {J.~G.}\ \bibnamefont
  {{Powles}}}\ and\ \bibinfo {author} {\bibfnamefont {P.}~\bibnamefont
  {{Mansfield}}},\ }\bibfield  {title} {\bibinfo {title} {Double-pulse
  nuclear-resonance transients in solids},\ }\href
  {https://doi.org/10.1016/0031-9163(62)90147-6} {\bibfield  {journal}
  {\bibinfo  {journal} {Physics Letters}\ }\textbf {\bibinfo {volume} {2}},\
  \bibinfo {pages} {58} (\bibinfo {year} {1962})}\BibitemShut {NoStop}%
\bibitem [{Note3()}]{Note3}%
  \BibitemOpen
  \bibinfo {note} {We do not know where the z field comes from exactly, because
  yxx24 has zero average Hamiltonian up to second order for all the
  experimental imperfections we know and the third order is too complicated to
  track. But since we observe this field only in the disorder sample FAp, we
  suspect it is due to high order ($>2$) average Hamiltonian involving even
  number of disorder Hamiltonians such as $\DOTSB \sum@ \slimits@ _i [\epsilon
  \sigma _x^i,[w_i \sigma _x^i,w_i \sigma _z^i]]$ where $w_i^2$ does not
  average to zero and thus appears as a field.}\BibitemShut {Stop}%
\bibitem [{\citenamefont {Canters}\ and\ \citenamefont
  {Johnson}(1972)}]{Canters69}%
  \BibitemOpen
  \bibfield  {author} {\bibinfo {author} {\bibfnamefont {G.}~\bibnamefont
  {Canters}}\ and\ \bibinfo {author} {\bibfnamefont {C.}~\bibnamefont
  {Johnson}},\ }\bibfield  {title} {\bibinfo {title} {Numerical evaluation of
  moments and shapes of magnetic resonance lines for crystals and powders},\
  }\href {https://doi.org/https://doi.org/10.1016/0022-2364(72)90083-2}
  {\bibfield  {journal} {\bibinfo  {journal} {Journal of Magnetic Resonance
  (1969)}\ }\textbf {\bibinfo {volume} {6}},\ \bibinfo {pages} {1} (\bibinfo
  {year} {1972})}\BibitemShut {NoStop}%
\bibitem [{\citenamefont {Abragam}(1961)}]{Abragam61}%
  \BibitemOpen
  \bibfield  {author} {\bibinfo {author} {\bibfnamefont {A.}~\bibnamefont
  {Abragam}},\ }\href@noop {} {\emph {\bibinfo {title} {Principles of Nuclear
  Magnetism}}}\ (\bibinfo  {publisher} {Oxford Univ. Press},\ \bibinfo {year}
  {1961})\BibitemShut {NoStop}%
\bibitem [{\citenamefont {Abanin}\ \emph {et~al.}(2015)\citenamefont {Abanin},
  \citenamefont {De~Roeck},\ and\ \citenamefont {Huveneers}}]{Abanin15}%
  \BibitemOpen
  \bibfield  {author} {\bibinfo {author} {\bibfnamefont {D.~A.}\ \bibnamefont
  {Abanin}}, \bibinfo {author} {\bibfnamefont {W.}~\bibnamefont {De~Roeck}},\
  and\ \bibinfo {author} {\bibfnamefont {F.}~\bibnamefont {Huveneers}},\
  }\bibfield  {title} {\bibinfo {title} {Exponentially slow heating in
  periodically driven many-body systems},\ }\href
  {https://doi.org/10.1103/PhysRevLett.115.256803} {\bibfield  {journal}
  {\bibinfo  {journal} {Phys. Rev. Lett.}\ }\textbf {\bibinfo {volume} {115}},\
  \bibinfo {pages} {256803} (\bibinfo {year} {2015})}\BibitemShut {NoStop}%
\end{thebibliography}%

\section*{Supplemental Material}

\section{Experimental system}
In the main text, we present  experimental results  from CaF$_2$ and flourapitite (FAp). The description of  our FAp sample can be found in the supplementary information of Ref.~\cite{Peng21}. Here we provide details of the CaF$_2$ sample. We use a single crystal of CaF$_2$ where the $^{19}$F atoms form a simple cubic lattice with nearest-neighbor distance $d=2.7315$~\r{A}~\cite{Canters69}. The sample is of millimeter scale. The experiments are performed at room temperature inside a magnetic field $B=7T$ using a 300MHz Bruker spectrometer. The total Hamiltonian is 
\begin{equation}
H_{tot}=\omega_F \sum_k S_z^k + \sum_{j<k} \frac{\hbar \gamma_F^2}{|\vec{r}_{jk}|^3}\left(\vec{S}^j \cdot\vec{S}^k-\frac{3\vec{S}^j\cdot \vec{r}_{jk}\vec{S}^k\cdot \vec{r}_{jk}}{|\vec{r}_{jk}|^2}\right),
\end{equation}
where $\omega_F=\gamma_F B\approx(2\pi)282.4$~MHz is the Zeeman frequency of $^{19}$F, $\gamma_F\approx241.67\times10^6$~rad$\cdot$s$^{-1}$ $\cdot$T$^{-1}$ is the gyromagnetic ratio~\cite{Canters69}, $\vec{r}_{jk}$ is the displacement between the $j$ and $k$ spins. Because the Zeeman frequency is much larger than other energy scales in experiments, it is convenient to write the Hamiltonian in the rotating frame set by the Zeeman interaction, and neglect the fast rotating parts of the Hamiltonian (rotating wave approximation). We then obtain the secular dipolar Hamiltonian
\begin{equation}
D_z=\frac{1}{2}\sum_{j<k}^N J_{jk}\left(3S_z^j S_z^k - \vec{S}^j\!\cdot\!\vec{S}^k \right),
\end{equation}
with $J_{jk}=\hbar\gamma_F^2\frac{3\cos(\theta_{jk})^2-1}{|\vec r_{jk}|^3}$, where $\theta_{jk}$ is the angle between $\vec r_{jk}$ and the magnetic field (aligned with the $z$-axis). If $\theta_{jk}=0$, the nearest-neighbor coupling strength reaches its maximum value $J_M=65.8$~krad/s. As we see, the coupling strength depends on the orientation of the sample with respect to the magnetic field. The orientation is difficult to measure precisely, however, it can be inferred from the free induction decay (FID) of $^{19}$F spins.

\begin{figure}[h]
	\centering
	\includegraphics[width=\columnwidth]{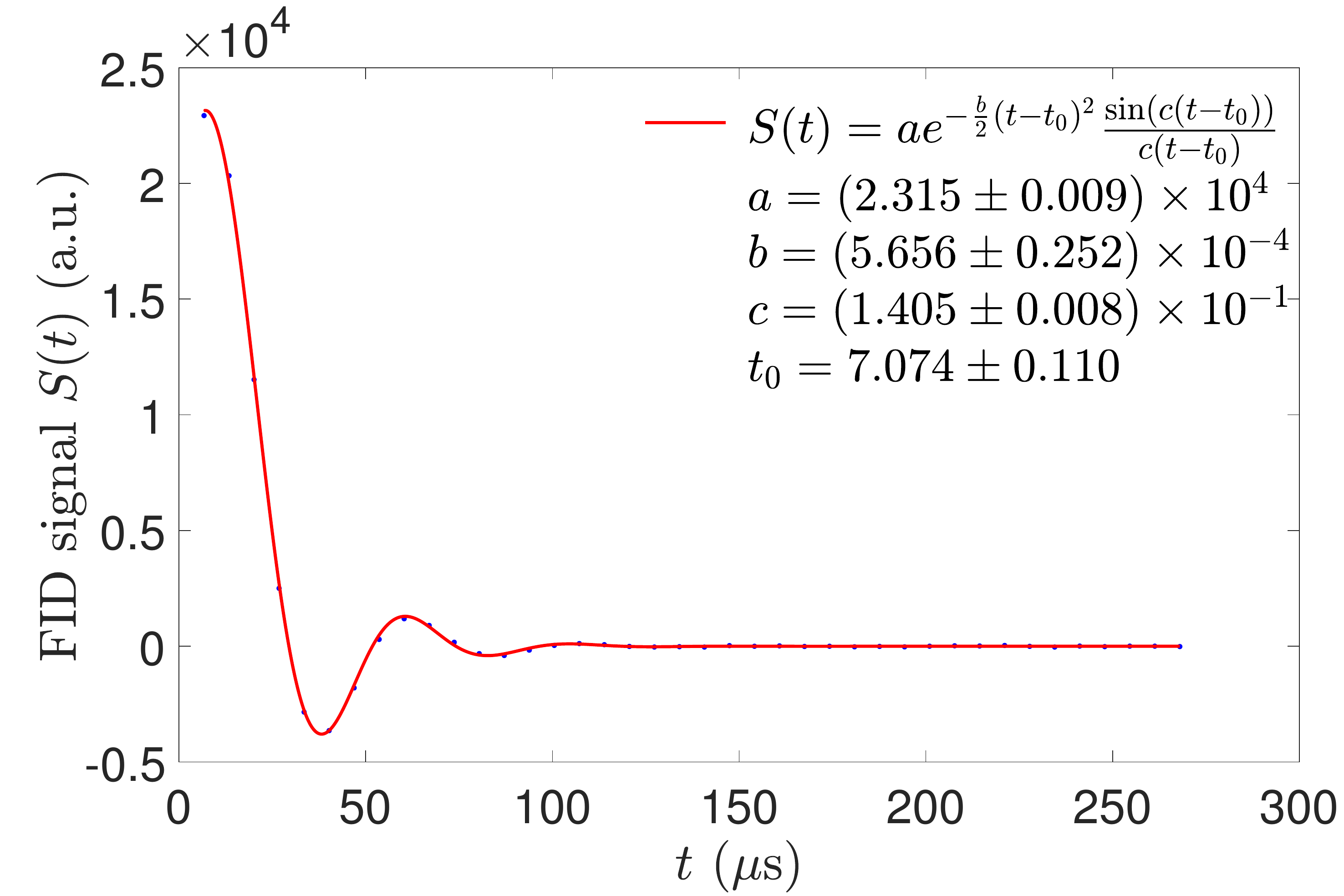}
	\caption{\label{fig:fitFID}
		Experimental FID data (blue dots) and fitted curve (red curve). The fitted coefficients are shown in the panel, with the error being one standard deviation. Fitting $R^2=0.9997$.
	}
\end{figure}

The FID reveals the dephasing process of the transverse polarization under dipolar interaction $S(t)=\mathrm{Tr}[X(t)X]$, with $X=\sum_k S_x^k$ and $X(t)=e^{-iD_zt}Xe^{iD_zt}$. Fourier transform of the FID signal $S(t)$ gives the nuclear magnetic resonance (NMR) spectrum. Reference~\cite{Canters69} reports that the second moment of the NMR spectrum ($M_2=S''(0)/S(0)$) is related to the orientation of the sample via $M_2=\frac{9}{16}\sum_k J_{jk}^2$ for spin-1/2. For a simple cubic lattice, the 2nd moment reduces to $M_2=\frac{1}{16}[c_1+c_2(\lambda_1^4+\lambda_2^4+\lambda_3^4)]J_M^2$, with $c_1=-7.2722$, $c_2=37.3260$ and $\lambda_1, \lambda_2, \lambda_3$  the direction cosines of the magnetic field with respect to the crystal axes. The maximum $M_2^{1/2}$ is 90.2~krad/s at the [100] direction, while the minimum is 37.4~krad/s for the [111] direction. We measured $M_2^{1/2}=84.5$~krad/s so the orientation is closer to [100] (details explained later). To compare the experiments (3D $1/r^3$ interaction) with simulations (1D nearest neighbor interaction), we define an effective interaction strength by equating the $M_2$ for the two cases. For 1D nearest neighbor dipolar interaction with strength $J$, $M_2^{1/2}=3\sqrt{2}J^2/4$. Plugging in $M_2^{1/2}=84.5$~krad/s from our experiments, we get the corresponding effective 1D interaction strength $J_{\textrm{eff}}=79.7$~krad/s. We note that even though the $M_2$ is matched between experiments and simulation with interaction strength $J_{\textrm{eff}}$, for other experiments and observables they are not supposed match exactly, but should exhibit qualitatively similar behavior. For comparison, FAp, a crystal with a large anisotropy that gives rise to a quasi-1D behavior, has $J=32.7$~krad/s when aligned with its c-axis along z. 

Figure~\ref{fig:fitFID} shows the FID signal for our CaF$_2$ sample, where a solid echo is used to overcome the dead time of the receiver~\cite{Powles62}. Due to the limitation of time step, it is difficult to directly calculate $M_2$ using $S''(0)/S(0)$. Instead, we fit $S(t)$ to an empirical function $S(t)=ae^{-\frac{b}{2}(t-t_0)^2}\frac{\sin(c (t-t_0))}{c (t-t_0)}$~\cite{Abragam61}, so that we can evaluate $M_2$ more accurate by utilizing more data points. The fitting result is shown in Fig.~\ref{fig:fitFID}. Then we calculate the second derivative of the empirical function and obtain $M_2^{1/2}=\sqrt{b+c^2/3}=84.5\pm0.5$~krad/s.

\section{Numerical verification}
\subsection{Average correlation and infidelity}\label{sec:SM48}
In the main text, we use the fidelity of the propagator as the reward for RL, but fidelity is not  measurable in our experiments (indeed it would require process tomography.) 
\begin{figure}[h]
	\centering
	\includegraphics[width=\columnwidth]{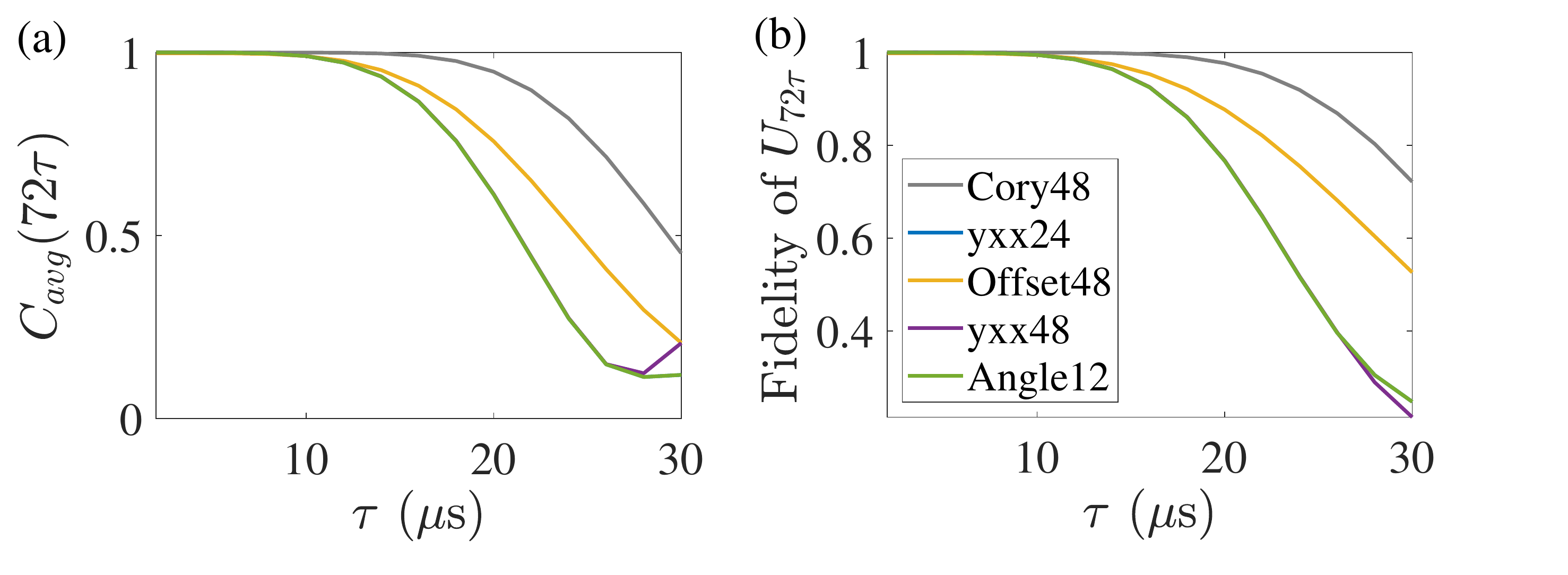}
	\includegraphics[width=\columnwidth]{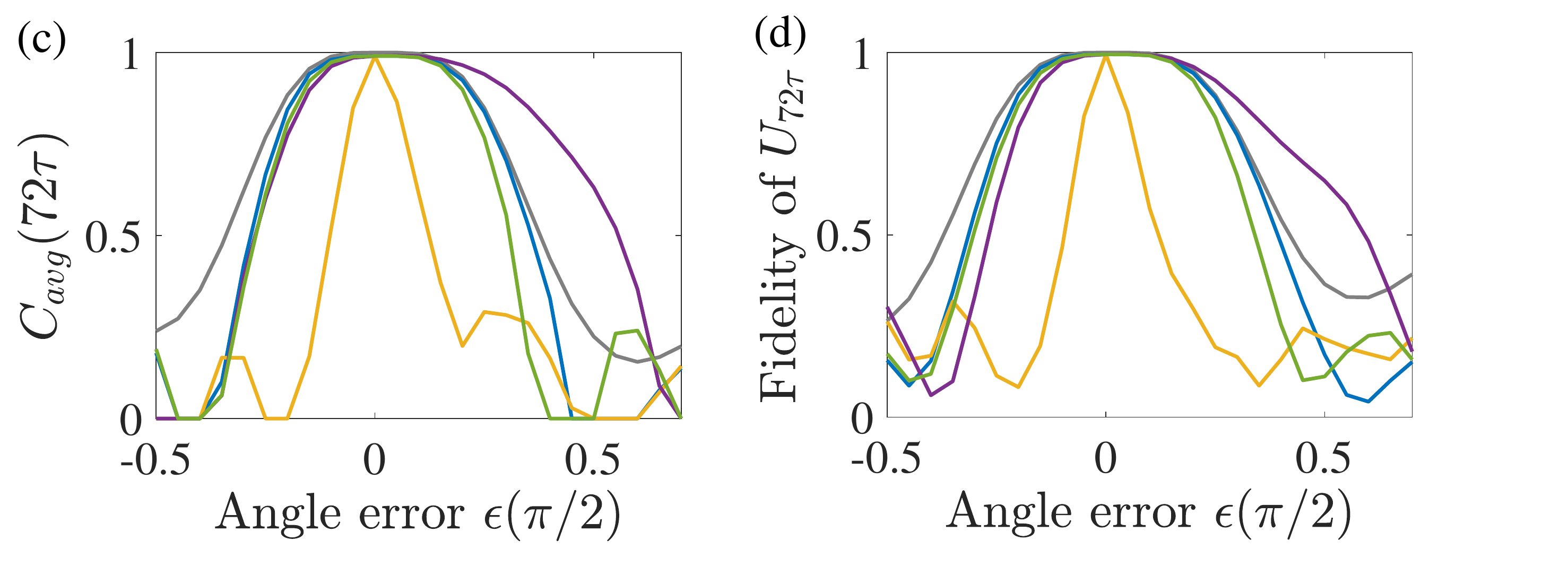}
	\includegraphics[width=\columnwidth]{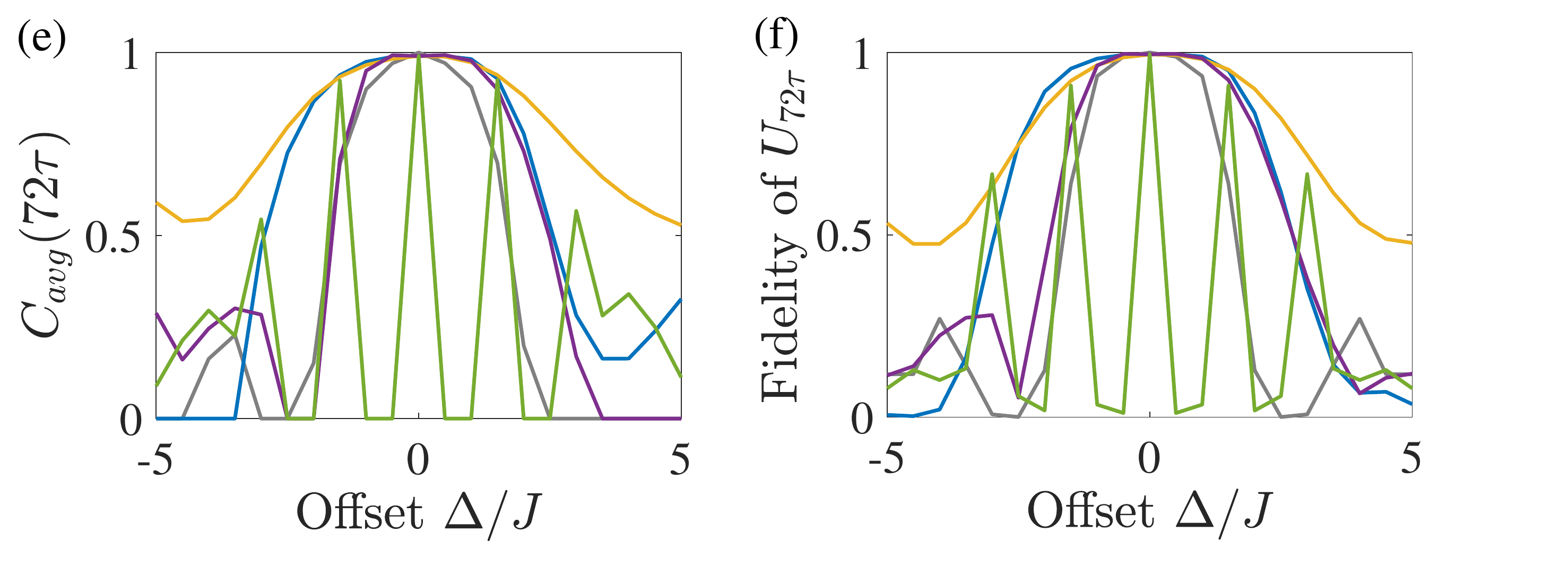}
	\caption{\label{fig:PU}
		Numerical comparison between  average correlation $C_{avg}=(C_{XX}C_{YY}C_{ZZ})^{1/3}$ and fidelity of the effective propagator at $72\tau$, for different $\tau$ (a-b), offset $\Delta$ (c-d) and angle error $\epsilon$ (e-f). The average correlation quantitatively reflects the fidelity. Here we use $t_w=1~\mu$s to mimic experimental conditions, system size $L=6$ and other parameters  as in Fig.~3 of the main text.
	}
\end{figure}
Here we verify that the experimentally measurable average correlation can quantitatively reflect fidelity. Similar to experimental results in Fig.~3 and Fig.~5 in the main text, we plot the simulated $C_{avg}(72\tau)$ and fidelity in~Fig.~\ref{fig:PU}. 

Comparing fidelity and average correlations as a function of  all the parameters we explore, we find that the average correlation closely resembles the propagator fidelity. 
We also reveal new features in the Angle12 sequence when considering  a broader range of offsets in simulations. 
The multiple peaks found for Angle12  (Fig.~\ref{fig:PU}e-f) as a function of the offset indicate that the sequence is effective only for some particular values of the ratio $\Delta/J$. Indeed, the zeroth order average offset of Angle12 forms an effective magnetic field whose strength is $\sqrt{20}\Delta/12$. Then, the peaks arise when $6\sqrt{20}\Delta\tau$ is a multiple of $2\pi$. Therefore, the peak center distance in unit of $\Delta/F$is $\pi/(6\sqrt{5}J\tau)\approx0.72$ in agreement with Fig.~\ref{fig:PU}(e-f).

\subsection{Comparing performance of different-length sequences}
Experimentally measured average correlation DRL sequences, yxx sequences and Cory48 for different $\tau$ (a), angle error (b) and frequency offset (c).
\begin{figure}[h]
	\centering
	\includegraphics[width=\linewidth]{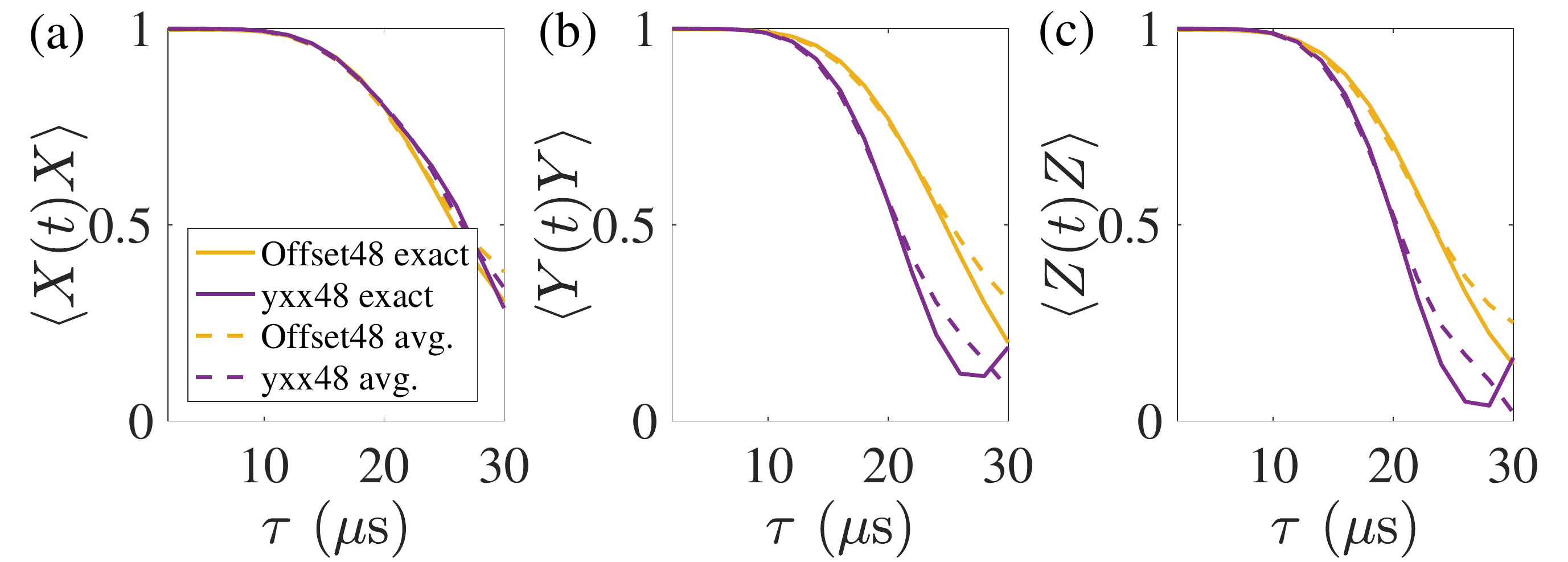}
	\includegraphics[width=\linewidth]{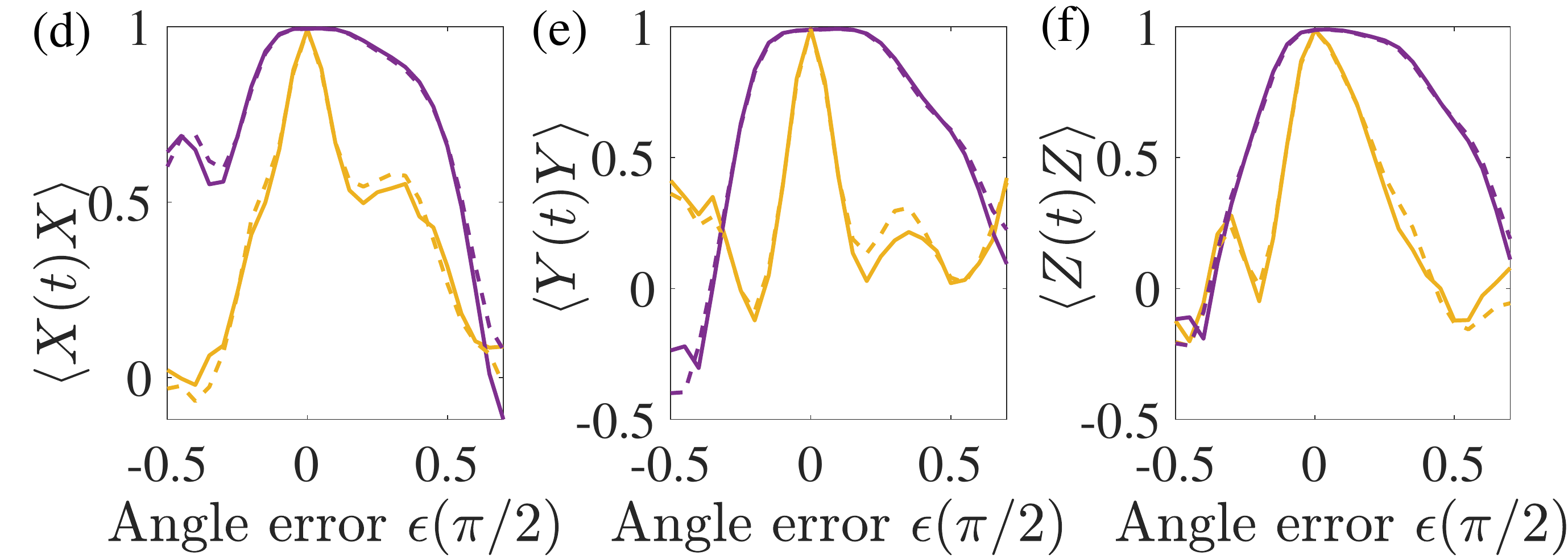}
	\includegraphics[width=\linewidth]{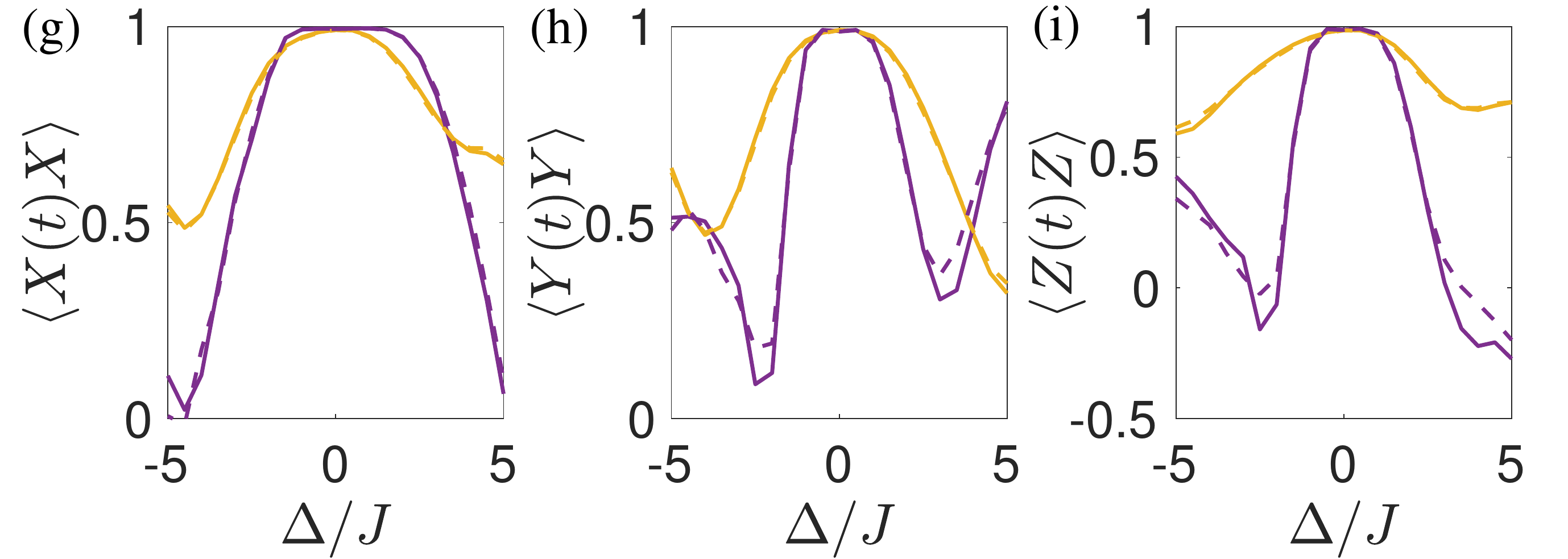}
	\caption{
		\label{fig:ave_exact}
		Comparison of average signal and the exact signal using numerical simulation. Exact results (solid curves) are obtained by first calculating the Floquet Hamiltonian $H_F$ and then evolve the initial state under $H_F$ for time $t=72\tau$; averaged results (dashed curves) are calculate as $(\langle \mathcal{O}(48\tau)\mathcal{O} \rangle +\langle \mathcal{O}(96\tau)\mathcal{O} \rangle)/2$, with $\mathcal{O}=X,Y,Z$. 
		Parameters are the same as in Fig.~\ref{fig:PU}.
	}
\end{figure}

\begin{figure*}[!htbp]
	\centering
	\includegraphics[width=0.32\linewidth]{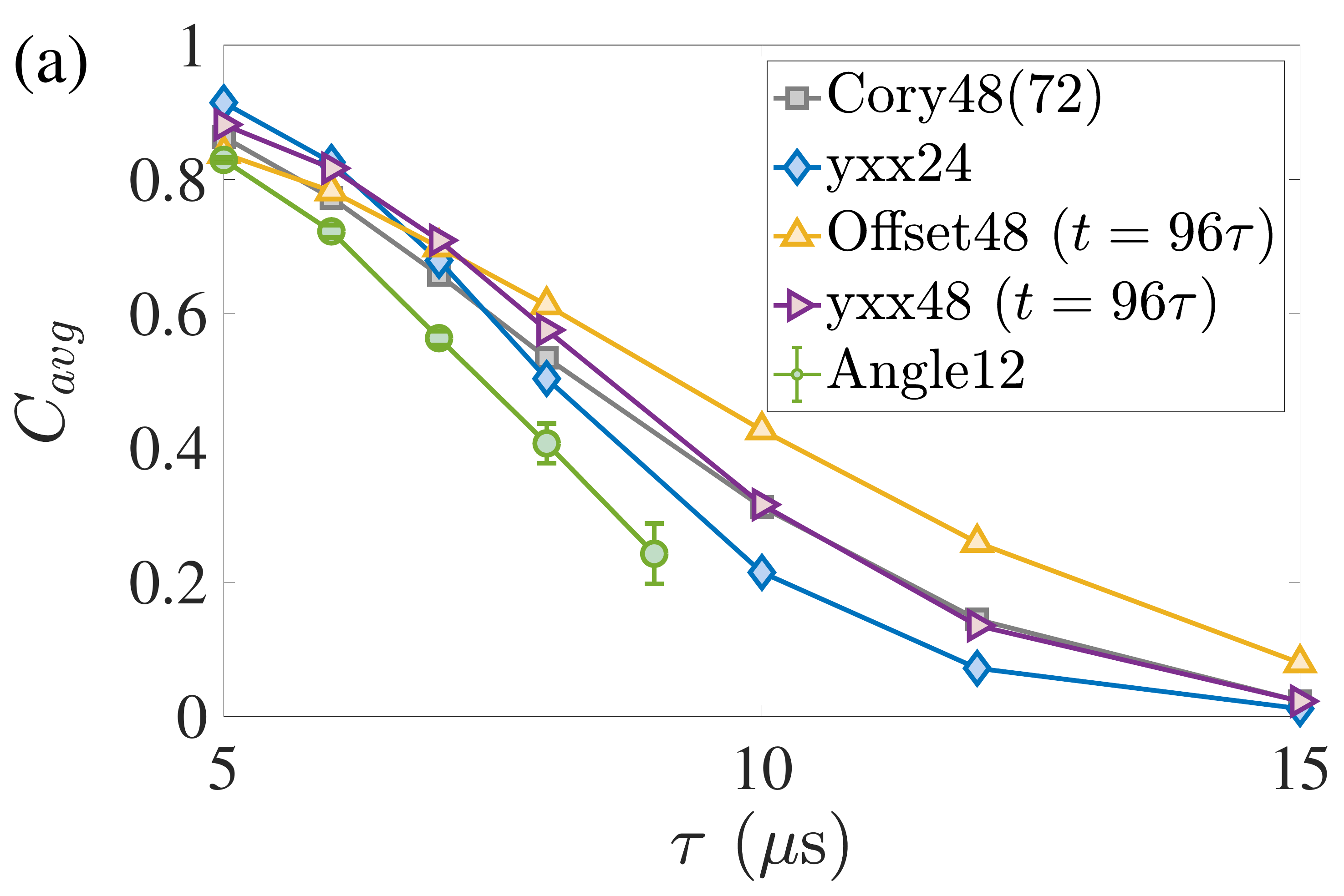}
	\includegraphics[width=0.32\linewidth]{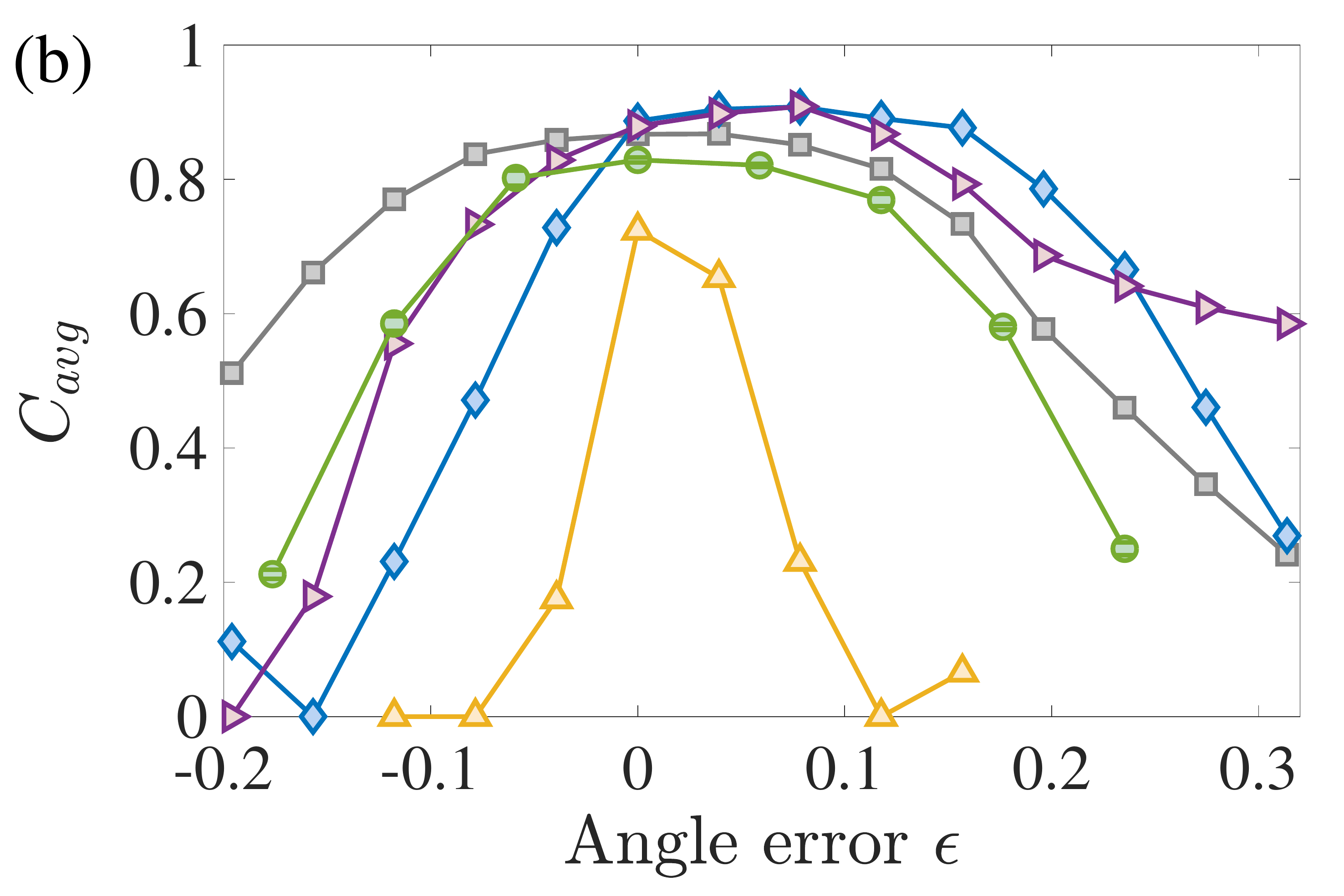}
	\includegraphics[width=0.32\linewidth]{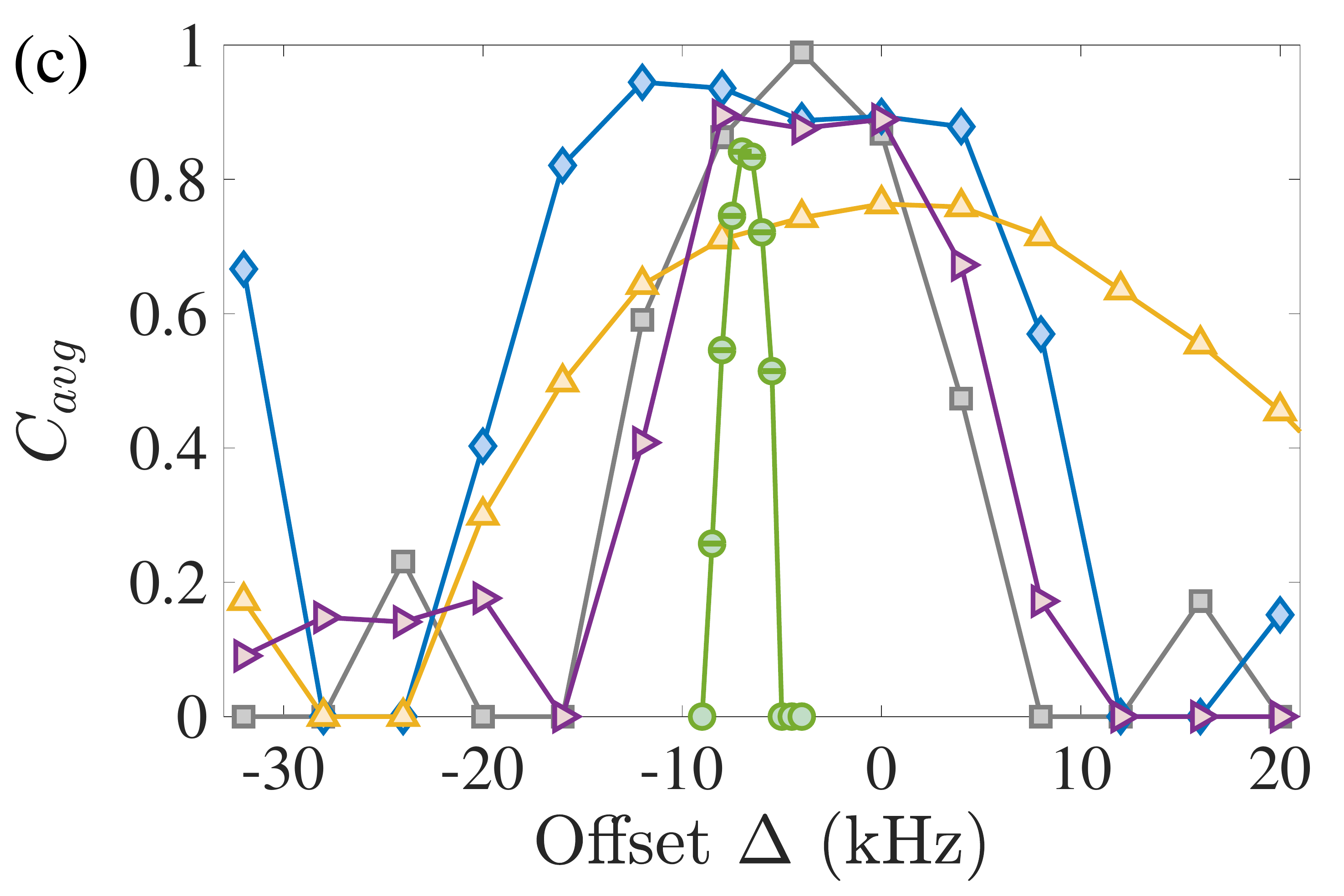}
	\caption{
		\label{fig:Exp96}
		Autocorrelations of yxx48 and Offset48 are evaluated at $t=96\tau$, while others are at $t=72\tau$. Imperfections are set to zero unless specified by the horizontal axis, with the exception of Angle12 experimental data in (a) and (b), which are taken at optimal $\Delta$ unequal to zero due to the presence of phase transient (see Appendix for details). 
		In (b) and (c), the pulse center-to-center delay is $\tau=5~\mu$s. 
		Error bars of Cory48 and Offset48 experimental data are determined from the noise in the free induction decay, which is smaller than the marker size thus not shown. Angle12 has larger error bars in (a) and (b) due to the inaccuracy in finding the optimal $\Delta$. 
	}
\end{figure*}

Since we study sequences with different length (12$\tau$, 24$\tau$,  48$\tau$, 72$\tau$), it is necessary to have a fair metric for comparison. As the longest sequence is Cory48 (72$\tau$), we choose to evaluate every sequence at 72$\tau$. However, this poses a problem for sequences of length 48$\tau$, that do not allow directly measuring  the signal at 72$\tau$. 
Although this problem could be in principle solved by comparing the signal at 144$\tau$, at such long time the unitary fidelity is low. 
As a result, we choose to approximate the signal at 72$\tau$ with the average of signal at 48$\tau$ and 96$\tau$. 
This approximation is numerically verified in Fig.~\ref{fig:ave_exact}, where the exact signal at 72$\tau$ is obtained by first calculating the Floquet Hamiltonian $H_F$ and then evolving the initial state under $H_F$ for  a time $t=72\tau$. The approximation is very good, especially in the high-fidelity region. 
In combination  with Fig.~\ref{fig:PU}, which compares fidelity and average autocorrelation at 72$\tau$, we can  conclude that the average of autocorrelations at 48$\tau$ and 96$\tau$ is a good approximation to the propagator fidelity.
To further verify that the advantages of yxx48 and Offset48, we plot their autocorrelations at $t=96\tau$ and compare with other sequences at $t=72\tau$ in~\figref{fig:Exp96}. Though the comparison underestimates the performance of yxx48 and Offset48, we still can see their expected robustness.

\section{Additional data}
\subsection{Individual autocorrelations}

In the main text we use the geometric average of $C_{XX}, C_{YY}$ and $C_{ZZ}$ as a experimental metric to compare different sequences. Here we present $C_{XX}, C_{YY}$ and $C_{ZZ}$ of Cory48(72), yxx24, Offset48 and yxx48 individually in Fig.~\ref{fig:indiv}. The individual correlations of Angle12 is presented in Appendix.

\subsection{Numerical results of Ideal6 and PW12}
\begin{figure}[b]
	\centering
	\includegraphics[width=\linewidth]{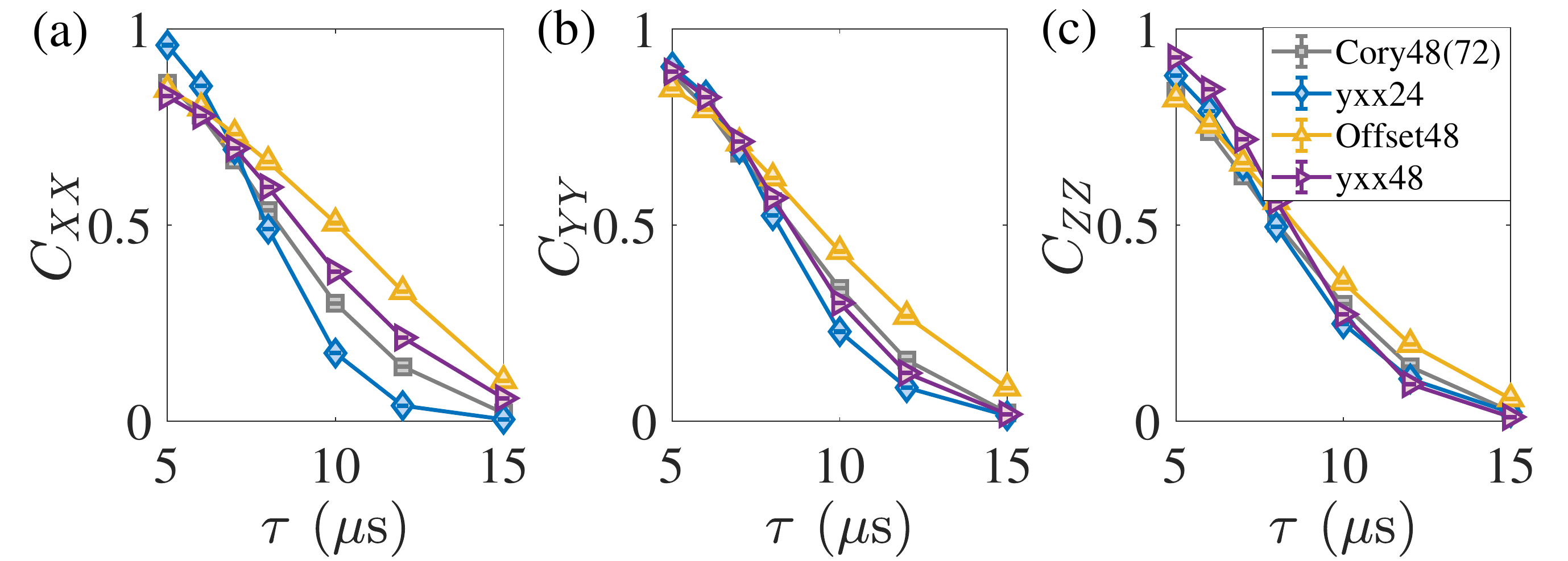}
	\includegraphics[width=\linewidth]{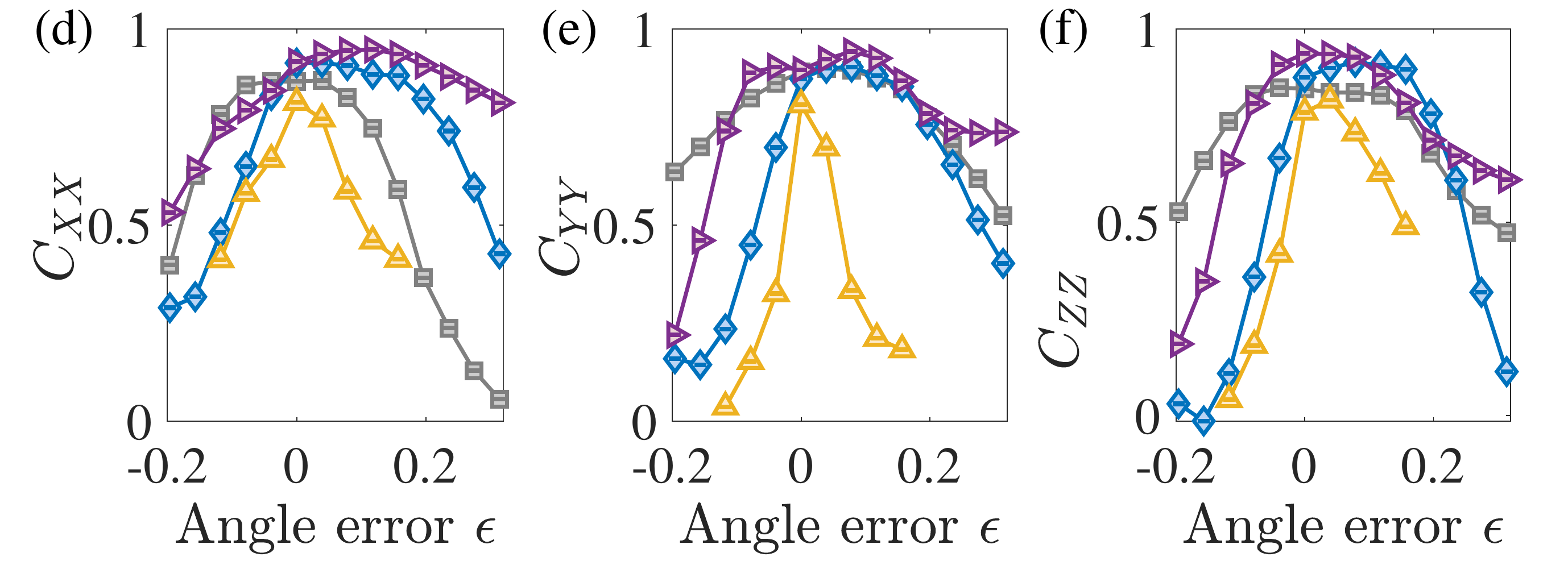}
	\includegraphics[width=\linewidth]{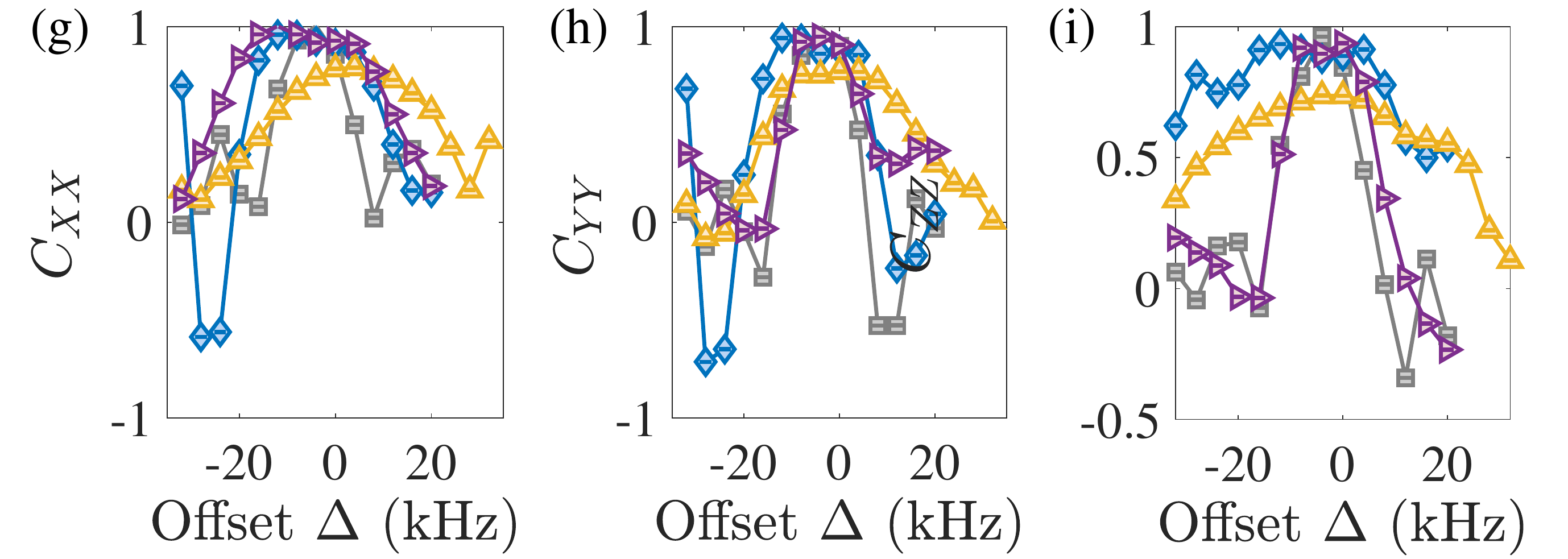}
	\caption{\label{fig:indiv}
		Experimental correlations $C_{XX}, C_{YY}$ and $C_{ZZ}$  Cory48(72), yxx24, Offset48 and yxx48. 
	}
\end{figure}
\begin{figure}[h]
	\centering
	\includegraphics[width=0.48\linewidth]{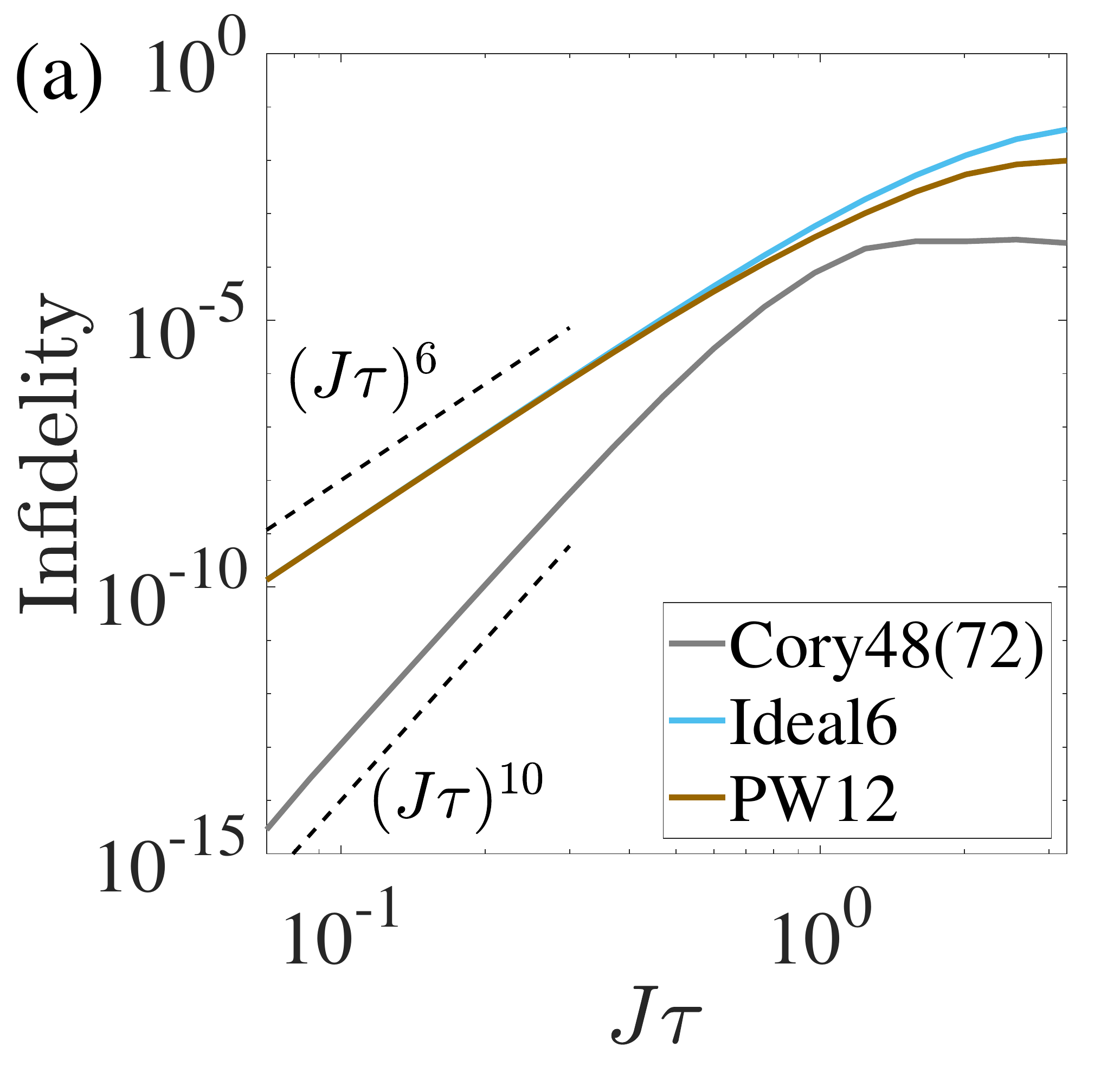}
	\includegraphics[width=0.48\linewidth]{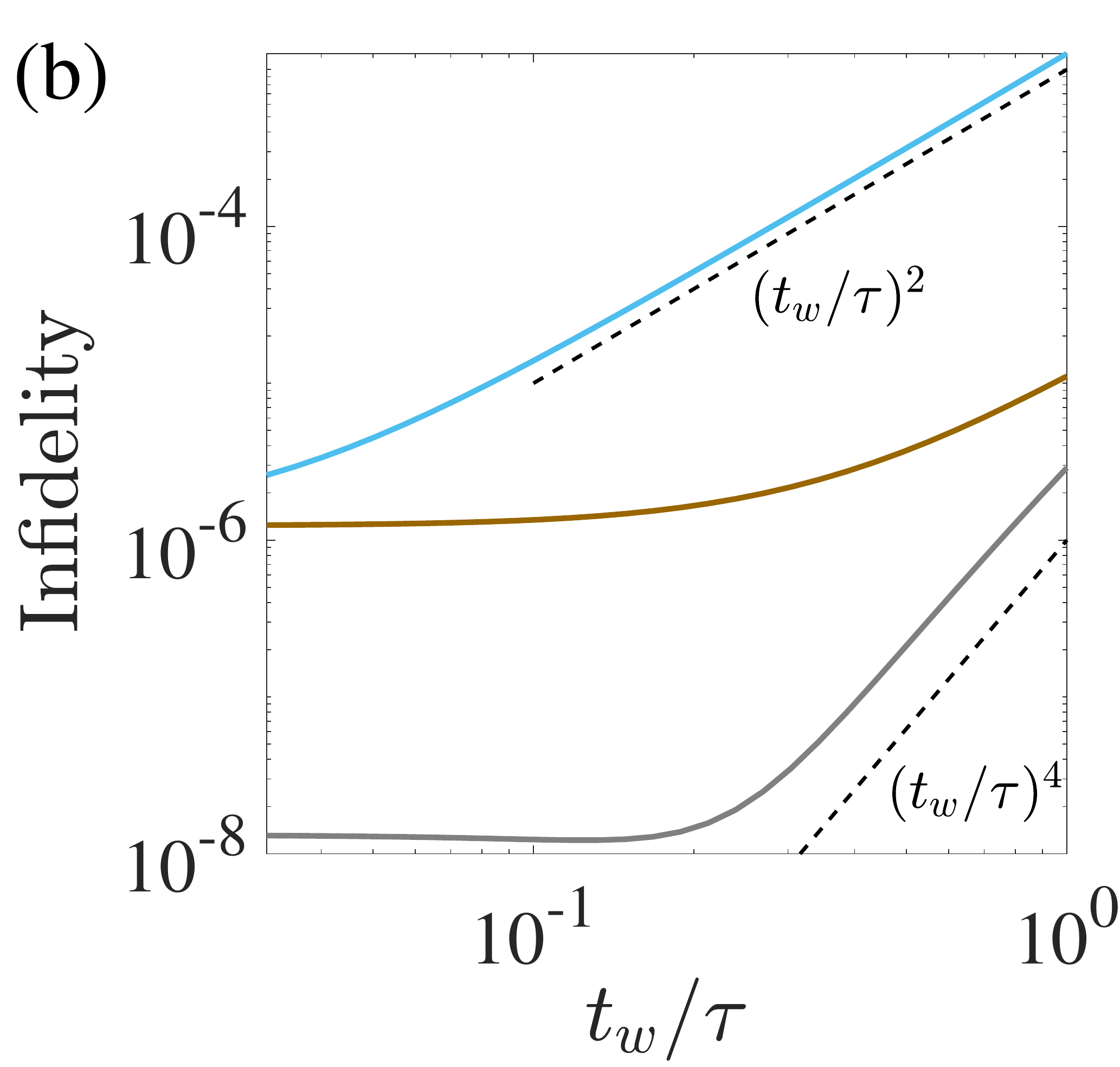}
	\caption{\label{fig:ideal-PW}
		Numerically simulated propagator infidelity $1-F$ of Ideal6 (cyan) and PW12 (brown) sequences for different $J\tau$ (a) and pulse width $t_w$ (b). Results for Cory48 (gray) are also presented for comparison. Dashed lines show the scalings specified by nearby expression. Imperfections are set to zero unless specified by the horizontal axis.
		In (a) we assume the pulse width is infinitesimal. In (b) $\tau=10~\mu$s. We use $J=32.7$~krad/s as in FAp, $N=8$, periodic boundary condition and assume nearest neighbor interaction. 
	}
\end{figure}
Here we present  numerical results for the Ideal6 and PW12 sequences (\figref{fig:ideal-PW}.) Both sequences cancel the interaction up to first-order average Hamiltonian assuming ideal conditions (Ideal6) or finite-width pulses (PW12.) Because of these assumptions, they cannot be tested experimentally, where other intrinsic errors are present. 
Although they are not as good as Cory48, which cancels up to third order coupling effects, their scaling is the same as yxx24 and yxx48 (Fig. 5 in the main text), despite being much shorter. 
PW12 shows a smaller increase in infidelity when the pulse width $t_w$ is increased  compared to Cory48 and Ideal6. PW12 cancels pulse width up to first order, although the scaling for PW12 is not evident here due to the practical constraint  $t_w<\tau$.

\subsection{RL hyperparameters}
We list RL hyperparameters in Table~\ref{tab:hyper}, and we initialize the DNN with Gaussian random weight.

\begin{table}[hbtp]
	\begin{tabular}{|c|c|}
		\hline
		\textbf{Name}  & \textbf{Hyperparameters}  \\ \hline 
		Ideal6 & $N=201, P=11$, NN:-64-64- \\ \hline
		Offset48 & $N=3001, P=31$, NN:-512-64-  \\ \hline
		Angle12  & $N=801, P=21$, NN:-128-64- \\ \hline
		PW12 & $N=801, P=21$, NN:-128-64- \\ \hline
		yxx48  & $ N=801, P=21$, NN:-128-64-  \\ \hline
		yxx24  & Built from Angle12  \\ \hline 
	\end{tabular}
	\caption{\label{tab:hyper}
		Hyperparameters of RL algorithm. $N$ and $P$ are the population size and number of selected parents. NN stands for neural network and the numbers following are the number of neurons in two hidden layers. }
\end{table}

\section{Analytical results}\label{app:thm}
\subsection{AHT analysis of learned sequences}
Typically,   sequences are designed by matching the average Hamiltonian order by order.
In RL, although the machine is not aware of the analytical perturbation tools,  many of the learned sequences do have vanishing low-order average Hamiltonians, as they lead to good fidelity.
We show the results of AHT for the DRL pulse sequences in \tabref{tab:seqAHT}. The scalings of unitary propagator fidelities shown in Fig. 3 and Fig. 5 of the main text are in agreement with the AHT results here.

\begin{table*}[!htbp]
	\begin{tabular}{|c|c|c|c|c|}
		\hline
		\textbf{Name}  & \textbf{interaction $J$} & \textbf{offset $\Delta$} &  \textbf{pulse width $t_w$} & \textbf{angle error $\epsilon$}  \\ \hline 
		Ideal6 & $(J\tau)^2$ & $\Delta$ & $Jt_w$ & $\epsilon$ \\ \hline
		Offset48 &  $(J\tau)^2$ & $\Delta^2, \Delta(J\tau)$ & $Jt_w$ & $\epsilon$ \\ \hline
		Angle12  &  $(J\tau)^2$ & $\Delta$ &  $Jt_w O((J\tau)^2)$ & $O(\epsilon^4), \epsilon^2(J\tau)$ \\ \hline
		PW12 &  $(J\tau)^2$ & $\Delta$ &  $Jt_w O((J\tau)^2)$ & $O(\epsilon^4), \epsilon(J\tau)$ \\ \hline
		yxx48  &  $(J\tau)^2$ & $O(\Delta^4), \Delta^2(J\tau)$ &  $Jt_w O((J\tau)^2)$  &  $\epsilon^3, \epsilon^2(J\tau)$   \\ \hline
		yxx24  & $(J\tau)^2$ & $O(\Delta^4), \Delta^2(J\tau)$ & $Jt_w O((J\tau)^2)$  &  $O(\epsilon^4), \epsilon^2(J\tau)$   \\ \hline 
		Cory48(72) & $(J\tau)^4$ & $\Delta^3, \Delta(J\tau)^2$ & $Jt_w O((J\tau)^2)$ & $O(\epsilon^4), \epsilon^2(J\tau)$ \\ \hline 
	\end{tabular}
	\caption{\label{tab:seqAHT}
		The leading non-vanishing average Hamiltonian of all pulse sequences. $O(\cdot)$ means the order is lower bounded by $(\cdot)$. For example, $O(\epsilon^4)$ means the term could be $\propto \epsilon^4$ or  $\propto\epsilon^5$ or higher orders.}
\end{table*}

\subsection{Decoupling sequence length}
We provide one theorem about the length of decoupling sequences, which helps reducing the searching space of our DRL algorithm.
\begin{AHT}\label{thm:length}
	To cancel the dipolar interaction $D_z$ to 1st order in AHT with collective rotations, we need the length of the pulse sequence to be $L=6n,n\in \mathbb N^+$
\end{AHT}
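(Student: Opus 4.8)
The plan is to work entirely in the toggling frame and reduce the statement to a counting argument. First I would record the elementary fact that, because the dipolar coupling is bilinear in the spins and hence invariant under a $\pi$ rotation about any axis, the only effect of a collective $\pi/2$ pulse is to relabel the toggling-frame interaction among the three operators $D_x$, $D_y$, $D_z$ (a ``no pulse'' interval simply repeats the current label). Thus a length-$L$ sequence is, up to signs that are irrelevant for $D_\alpha$, a word $\ell_0\ell_1\cdots\ell_{L-1}$ in the alphabet $\{x,y,z\}$, and I let $n_x,n_y,n_z$ count the intervals carrying each label, with $n_x+n_y+n_z=L$.

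The zeroth-order step supplies the factor of $3$. Since the toggling-frame Hamiltonian is piecewise constant, $\bar H^{(0)}\propto n_xD_x+n_yD_y+n_zD_z$. Using $D_x+D_y+D_z=0$ to eliminate $D_z$ together with the linear independence of $D_x,D_y$ (valid for any nontrivial dipolar system), $\bar H^{(0)}=0$ forces $n_x=n_y=n_z=:k$, hence $L=3k$. This already reproduces the ``multiple of $3$'' visible in all the learned sequences.

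The factor of $2$ is the crux, and I see two routes. The first uses that a decoupling sequence must rotate the toggling frame back to the lab frame, i.e. the product of all pulse rotations is the identity in $\mathrm{SO}(3)$; each $\pi/2$ rotation about a coordinate axis projects to a transposition of the three unsigned axes in $S_3$, an odd permutation, so triviality of the net permutation forces $2\mid L$. The second route, which I would foreground because the theorem is phrased ``to 1st order,'' uses first-order cancellation directly. Here I would first establish the identity $[D_x,D_y]=[D_y,D_z]=[D_z,D_x]=:C\neq 0$ (again a consequence of $D_x+D_y+D_z=0$), so that $\bar H^{(1)}\propto C\,\Sigma$ with $\Sigma=\sum_{m>m'}\epsilon_{\ell_m\ell_{m'}}$ an integer and $\epsilon$ the cyclic sign on $\{x,y,z\}$; cancellation then demands $\Sigma=0$.

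The main obstacle, and the payoff, is converting $\Sigma=0$ into a parity constraint on $k$. Reducing modulo $2$, each term of $\Sigma$ is $0$ or $\pm1$, so $\Sigma$ equals, mod $2$, the number of index pairs $m>m'$ with $\ell_m\neq\ell_{m'}$, namely $\binom{3k}{2}-3\binom{k}{2}=3k^2\equiv k \pmod 2$. Since $\Sigma=0$ is even, $k$ must be even, and combined with $L=3k$ this yields $L=6n$. I would close by remarking that the two routes agree and that both $C\neq 0$ and the independence of $D_x,D_y$ are readily checked on a two- or three-spin subsystem, so these genericity hypotheses are harmless for the physical model.
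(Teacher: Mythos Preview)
Your main argument (route 2) is correct and is essentially the paper's own proof: both obtain $L=3k$ from zeroth order, then count the nontrivial commutator terms as $\binom{3k}{2}-3\binom{k}{2}=3k^2$ and use the parity of a sum of $\pm1$'s to force $k$ even. You are more explicit than the paper in deriving and invoking the cyclic identity $[D_x,D_y]=[D_y,D_z]=[D_z,D_x]$ and in noting that $C\neq 0$ must be checked, but the skeleton is identical.

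One small caveat on your secondary route~1: in the paper's setup the action set includes a ``no pulse'' (delay) action, so the number of $\pi/2$ pulses can be strictly less than $L$. Your $S_3$-parity argument then only constrains the number of actual pulses, not the sequence length $L$ (e.g.\ WAHUHA has $L=6$ but only four pulses, and the four transpositions compose to the identity). Since you already foreground route~2, which stands on its own, this does not affect the correctness of your proposal.
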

\begin{proof}
	To cancel the 0th order AHT arising from the secular dipolar Hamiltonian, we require an equal number of $D_x,D_y$ and $D_z$ in the toggling frame, such that $D_x+D_y+D_z=0$. This means $L=3m,~m\in \mathbb N^+$. To set the 1st order AHT to zero generally requires to satisfy two conditions: $[D_x,D_x]=0$ and $[D_x,D_y]+[D_x,D_z]=-[D_x,D_x]=0$. The first one is a trivial requirement, while the second one requires an even number of commutators. For sequences of length $L=3m$, the total number of commutators, when subtracting the trivial one,  is
	\be
	N_{c}=\frac{L(L-1)}{2}-3\frac{m(m-1)}{2}=3m^2
	\ee
	To have an even $N_{c}$,  $m$ must be even. Hence, we have $L=6n,~n\in \mathbb N^+$.
\end{proof}
Although the length of a solid echo is only $2\tau$, it does not qualify as a decoupling sequences as defined here, because the average Hamiltonian is $D_y+D_z=-D_x$ and thus the sequence only protects the $X$ state. The shortest known decoupling sequence is WAHUHA, whose length is indeed $6\tau$, though only contains 4 pulses.

\subsection{Offset and disorder in \textit{yxx} sequences}
Here we show in detail that a yxx sequence, with vanishing zeroth and first order average Hamiltonian in the presence of offset, must have vanishing zeroth and first order average Hamiltonian in the presence of disorder, if there are no other imperfections. 

The first order average Hamiltonian is
\begin{equation}\label{eq:1stAH}
\begin{aligned}
H_A^{(1)}&=-\frac{i}{2M\tau}\int_0^{M\tau}dt_1\int_0^{t_1} dt_2\left\{[H_0(t_1),H_0(t_2)]\right.\\
&+[H_0(t_1),H_z(t_2)]+[H_z(t_1),H_0(t_2)]\\
&\left.+[H_z(t_1),H_z(t_2)]\right\},
\end{aligned}
\end{equation}
where $H_0(t)$ and $H_z(t)$ represents the interaction and imperfection Hamiltonian in the toggling frame at time $t$.
The first commutator integrates to zero as guaranteed by the \textit{yxx} pattern; the last commutator is a single-site operator so the argument for zeroth order average Hamiltonian (see Appendix) also applies here; the two commutators in the middle line integrates to the same value so we need to consider only one of them
\begin{equation}\label{eqn:appidentity}
\begin{aligned}
&\quad\int_0^{M\tau}\ud t_1\int_0^{t_1}\ud t_2[H_0(t_1),H_z(t_2)]\\
&=-\int_0^{M\tau}\ud t_1\int_{t_1}^{M\tau}\ud t_2[H_0(t_1),H_z(t_2)]\\
&=\int_0^{M\tau}\ud t_2\int^{t_2}_0\ud t_1[H_z(t_2),H_0(t_1)]\\
&=\int_0^{M\tau}\ud t_1\int^{t_1}_0\ud t_2[H_z(t_1),H_0(t_2)],
\end{aligned}
\end{equation}
where in the first equality we use the fact that zeroth order average Hamiltonian vanishes $\int\ud t H_z(t)=0$. Later on we consider only the first term $[H_0(t_1),H_z(t_2)]$ with $t_1>t_2$. For easier discussion, we divide the entire sequence into xyx blocks as in Fig. 4 in the main text and each block contains 3 time interval $\tau$ (here we do not rotate the first half interval to the end, so that within each interval the toggling frame Hamiltonian does not change). We use $j=0,1,\cdots,(M/3-1)$ to label the blocks and $k=0,1,2$ to label the intervals within the block, therefore denoting the piece-wise constant toggling frame Hamiltonian as $H_0^{j,k}=H_0(t)$, $H_z^{j,k}=H_z(t)$ with $3j\tau+k\tau<t<3j\tau+k\tau+\tau$. Then we can rewrite the integration as summation 
\begin{equation}\label{eq:sum}
\begin{aligned}
&\quad\int_0^{M\tau}\ud t_1\int^{t_1}_0\ud t_2[H_0(t_1),H_z(t_2)]\\
&=\sum_{\substack{j,k,j',k'\\3j+k>3j'+k'}}[H_0^{j,k},H_z^{j',k'}]\tau^2.
\end{aligned}
\end{equation}
For even $j$, $H_0^{j,0}=D_z,H_0^{j,1}=D_y,H_0^{j,2}=D_x$ and $H_z^{j,0}=\pm H_z, H_z^{j,1}=\pm H_y,H_z^{j,2}=\pm H_x$; for odd $j$, $H_0^{j,0}=D_z,H_0^{j,1}=D_x,H_0^{j,2}=D_y$ and $H_z^{j,0}=\pm H_z, H_z^{j,1}=\pm H_x,H_z^{j,2}=\pm H_y$. 
The inter-block commutators sum up to zero, because $\sum_{k}[H_0^{j,k},H_z^{j',k'}]=[D_x+D_y+D_z,H_z^{j',k'}]=[0,H_z^{j',k'}]=0$. Therefore we can further simplify \eqnref{eq:sum} as 
\begin{equation}\label{eq:sumblock}
\int_0^{M\tau}\ud t_1\int^{t_1}_0\ud t_2[H_0(t_1),H_z(t_2)]=\sum_{\substack{k',j,k\\k>k'}}[H_0^{j,k},H_z^{j,k'}]\tau^2.
\end{equation}
For this commutator, the only difference between disorder and offset is that $[D_\alpha, H_\alpha]\neq0$ for disorder, while for offset $[D_\alpha,H_\alpha]=0$, with $\alpha=x,y,z$. As a result, the commutator of the form $[D_\alpha,H_\alpha]$ give rise to the difference between disorder and offset. This term can only appear as $[H_0^{j,1}+H_0^{j,2},H_1^{j,0}]=[-D_z,\pm H_z]$. When sum over $j$, $\pm H_z$ sums to zero because zeroth order average Hamiltonian vanish, therefore $\sum_j H_0^{j,1}+H_0^{j,2},H_1^{j,0}]=0$. In conclusion, disorder may induce more nonzero first order average Hamiltonian than offset, but they sum up to zero, therefore a \textit{yxx} sequence with vanishing zeroth and first order average Hamiltonian for the offset case must also have vanishing zeroth and first order average Hamiltonian for the disorder case. 

\end{document}